\theoremstyle{definition}
\newtheorem{definition}{Definition}
\theoremstyle{plain}
\newtheorem{theorem}{Theorem}
\newtheorem{lemma}{Lemma}
\newtheorem{corollary}{Corollary}
\newtheorem{proposition}{Proposition}
\newtheorem{observation}[theorem]{Observation}
\newtheorem{takeaway}{Takeaway}
\newtheorem{example}{Example}
\newtheorem{problem}{Problem}
\title{\bf Human-AI Collaboration with Misaligned Preferences}
\author[1]{Jiaxin Song}
\author[2]{Parnian Shahkar}
\author[1,3]{Kate Donahue}
\author[1]{Bhaskar Ray Chaudhury}
\affil[1]{University of Illinois, Urbana-Champaign}
\affil[2]{University of California, Irvine}
\affil[3]{Massachusetts Institute of Technology}
\affil[ ]{\{jiaxins8,kpd,braycha\}@illinois.edu, shahkarp@uci.edu}
\date{}
\begin{document}
\maketitle
\begin{abstract}
In many real-life settings, algorithms play the role of assistants, while humans ultimately make the final decision. Often, algorithms specifically act as curators, narrowing down a wide range of options into a smaller subset that the human picks between: consider content recommendation or chatbot responses to questions with multiple valid answers. Crucially, humans may not know their own preferences perfectly either, but instead may only have access to a noisy sampling over preferences. Algorithms can assist humans by curating a smaller subset of items, but must also face the challenge of \emph{misalignment}: humans may have different preferences from each other (and from the algorithm), and the algorithm may not know the exact preferences of the human they are facing at any point in time. In this paper, we model and theoretically study such a setting. Specifically, we show instances where humans benefit by collaborating with a misaligned algorithm. Surprisingly, we show that humans gain more utility from a misaligned algorithm (which makes different mistakes) than from an aligned algorithm. Next, we build on this result by studying what properties of algorithms maximize human welfare, when the goals could be either utilitarian welfare or ensuring all humans benefit. We conclude by discussing implications for designers of algorithmic tools and policymakers. 
\end{abstract}


\section{Introduction}
In recent years, advances in artificial intelligence and machine learning have become increasingly integrated into our daily lives: algorithms help suggest movies for us to watch, routes for us to drive on, or even generate novel content for us to rely on. However, outputs of algorithmic tools in this context almost never have the \enquote{final say}: for example, while an algorithm can suggest multiple movies, the human makes the final decision on which movie she ultimately wants to watch. As a result, we often care about studying the performance of the human-algorithm \emph{system}, rather than the performance of the algorithm in isolation. Human-algorithm systems have been studied extensively in more applied contexts, and in recent years, a more formal theoretical study of human-algorithm systems has grown (see Section \ref{sec:related} for more discussion). 

In this paper, we focus on a specific instance of human-algorithm collaboration: where the algorithm acts as a \emph{curator}, and the humans are \emph{noisy} and potentially \emph{misaligned} with each other and with the algorithm. Informally, an algorithm acts as a \emph{curator} when its role is to narrow down a larger set of items to a smaller set, among which the human picks her favorite. This type of role is one of the most common in human-algorithm systems: e.g., in content recommendation, search, and some types of categorical prediction\footnote{For example, when a user requests directions, Google maps returns a small set of routes, and the user typically picks her final route from those routes (Figure \ref{fig:noisy-example}). }. A human is \emph{noisy} if she has inherent randomness in her ability to recognize her true preferences over items. ``\emph{To err is human}'', and it has been widely recognized that humans are often imperfect at picking the \enquote{correct} item (e.g., see \citep{agranov2017stochastic, plackett1975analysis, luce1959individual, hey1994investigating}). A human is \emph{misaligned} with another agent (human or algorithm) if they have different ground-truth preferences over items, which is distinct from their potentially noisy \emph{realizations} of those preferences. For example, if Alice prefers horror movies and Bob prefers comedies, then their ground truth preferences over a set of movies would be \emph{misaligned} with each other. Additionally, given a single algorithmic recommender, either Alice or Bob (or both) would be \emph{misaligned} with the algorithm. Of course, in the limiting case of perfect personalization, both could have separate algorithms that are perfectly aligned with their preferences, but in realistic settings where one algorithm must serve a diverse range of humans, misalignment is a reality of life. Note that our framework has strong connections to other bodies of work, such as pluralistic alignment and conformal prediction: we discuss such connections in Section \ref{sec:related}.
 
One of the key points of our paper is that there are settings where misalignment can be \emph{helpful}. As a stylized example, consider \Cref{fig:noisy-example}, which illustrates a collaboration between Google Maps (algorithm) and Alice (human). Google Maps recommends $k=2$ routes $\{x_1, x_3\}$ to Alice from three routes $\{x_1, x_2, x_3\}$.  The true preference of Alice is $(x_1, x_2, x_3)$, but she cannot always identify her preference exactly (potentially because she only has access to imperfect information about the outcomes) and has a probability of $0.1$ of wrongly ranking them as $(x_2, x_1, x_3)$. As a result, she has a probability of $0.9$ to choose her favorite route $x_1$ when making the decision alone. However, when following the suggestions of Google Maps, she is always able to rank $x_1$ before $x_3$ and choose the best route for her. The high-level goal of our paper is to study when settings like those in \Cref{fig:noisy-example} occur and when a human benefits from relying on a potentially misaligned algorithm. 
\begin{figure}[t]
\centering
\includegraphics[width=0.8\linewidth]{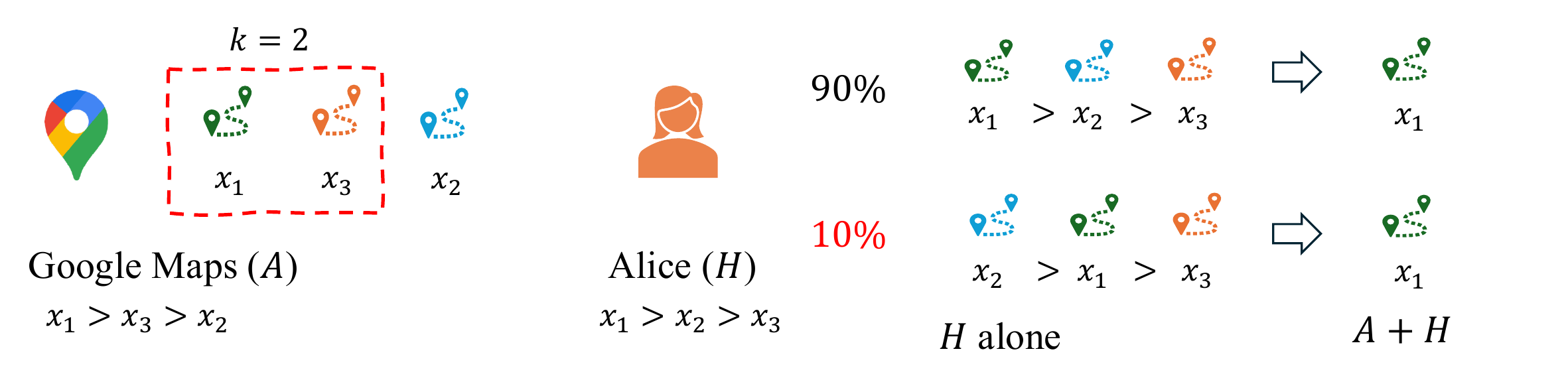}
\caption{Human-AI collaboration between Google Maps (Algorithm) and Alice (Human). Here, we assume that the algorithm is deterministic, but the human is noisy and only the best item ($x_1$) has value. If Alice picked by herself, she would pick $x_1$ 90\% of the time, but when the algorithm deterministically reduces her set to $\{x_1, x_3\}$, she picks $x_1$ always. }
\label{fig:noisy-example}
\end{figure}

\paragraph{Our contributions.} 
In this work, we begin by formally defining our model in Section \ref{sec:model}: this includes a description of how the human and algorithm interact, as well as the types of noise models that the human and/or algorithm exhibit, and gives a formal definition of misalignment. Additionally, we define several key objectives for the human-algorithm system to satisfy. For example, one objective is maximizing social welfare: if we assume each person has some utility for each item, expected (utilitarian) social welfare gives the total expected utility of all humans using the same algorithmic curation tool. Other objectives relate to the comparative performance of the human-algorithm system and the human by herself: does adding the algorithm increase the human's expected welfare, or decrease it? We say that a system achieves \emph{uplift} if it increases the utility of every human, relative to what they would have achieved by themselves.
Throughout the rest of this work, our goal will be to study when each of these objectives can be satisfied. We assume that an algorithm can be modified by either changing its ground-truth ordering over items (i.e., changing the training objective), or changing the noise level of the algorithm (i.e., changing the temperature in a stochastic model).

Our first problem focuses on when misalignment is helpful for an individual human: 
\begin{problem}[Comparative Benefits of Misalignment]
Given a fixed noisy human, which algorithm leads to the greatest utility for the human? 
\end{problem}

In Section \ref{sec:whobenefitsmore}, we begin by studying a fixed human who is considering a range of algorithmic tools that are all misaligned to varying degrees with the human. First, we note that it is possible for a human to benefit when using a misaligned algorithm: this should be intuitive to anyone who has successfully gained value out of an algorithmic tool that is imperfectly calibrated to their own preferences. Next, this section provides a comparative analysis of which type of algorithm assists the human more or less.  Specifically, we show that increased alignment with the algorithm does \emph{not} guarantee increased performance, and formally describe when misalignment may be helpful for utility. To give intuition, misalignment leads to \enquote{\emph{different kinds of mistakes}} which can sometimes be helpful, if the mistakes are not too costly and if it makes it easier for the human to correctly identify their preferred items.

Next, Section \ref{sec:strathum} builds on this framework to answer the question, \enquote{How should we design an algorithm system to benefit a \emph{population} of human users}?

\begin{problem}
Can we find a ground-truth ranking and accuracy parameter such that the joint system maximizes utilitarian social welfare? Can we find settings such that we achieve uplift, or is such a setting impossible? 
\end{problem}
 Our main contributions can be summarized (informally) as follows,
\begin{itemize}[leftmargin=0.5cm] 
   \item An algorithm that maximizes utilitarian welfare is always noiseless. While it is in general computationally hard to find this algorithm, we can design a mixed integer program (MIP) which is similar to the MIP used for assortment optimization under the Mallows model~\citep{desir2016assortment}.
   \item However, we show cases where uplift can only be achieved by a noisy algorithm, aligning with the broader literature on the importance of randomization for fairness goals and providing motivation for the default non-zero temperature settings that many generative tools use. 
  \item We provide efficient algorithms to validate whether a given algorithm achieves uplift, and identify natural sufficient conditions for when we can design algorithms that always achieve uplift.
\end{itemize} 

While most of our contributions are theoretical, in each section we explore generalizations of our results through numerical simulations, showing that our main contributions generalize substantially beyond our formal model. Finally, in \Cref{sec:discuss} and \Cref{sec:extensions}, we conclude by summarizing our main contributions, discussing extensions, and highlighting impacts for policymakers, platform designers, and applied researchers. At a high level, we view our work as highlighting an important tension between the desiderata of creating algorithms that \emph{think like us} and \emph{provide additional value} over what humans could do by themselves. 

\section{Model}\label{sec:model}
\subsection{Human-algorithm Collboration Model}
In a human-algorithm collaboration model, there are two actors: an algorithm ($A$) and a human ($H$).
There is a set of items $M = \{x_1, \ldots, x_m\}$ representing different outcomes (e.g., labels in prediction tasks, generative model output). The setting we study is that of where the algorithm acts as a \emph{curator}, picking a subset of size $k$ from a larger set of items of size $m$. 
The specific mode of interaction we assume is where the algorithm picks their top $k$ from some noisy ranking over items, and the human picks their favorite among that set, according to their own noisy ranking over items\footnote{Note that this is the same modeling assumption as \citep{donahue2024listsbetteronebenefits} in the unanchored case, but our results significantly generalize over their results by allowing agents to be misaligned.} This setting may occur in settings such as content recommendation (e.g., Netflix or Spotify) or algorithmic assistants in predictive tasks. 

These items can have different values to the actors.
One notable special setting is called \textbf{top item recovery}, where only a single item has positive value. For example, these settings where the goal of the human is to recover the single best item, such as a job candidate who is most suitable, or the \enquote{best} route between two locations. 

We say the algorithm and the human are \emph{aligned} if they agree on the values of each item. However, as two actors may get information from different sources, perfect alignment cannot always be achieved. We consider \emph{misalignment} within the actors through the following two aspects: either between the algorithm and the human, or between different humans. Formally, the humans come from a population of $n$ types, where the $i$th type has probability $p_i$ to occur. If human and algorithm are misaligned, but their top items are the same, we say they are \emph{top-aligned}; otherwise, they are \emph{top-misaligned}.
We assume that human has descending values according to their ground-truth rankings.
Let $v_{i,j}$ denote the value of the $j$-th best item for a human of type $i$.
For each human of type $i \in [n]$, let her utility function be $u_i$, where $u_i(x) = v_{i,j}$ if item $x$ is the $j$-th item in her ground-truth ranking $\hr_i^*$.
We denote by $x_H^i$ the item chosen by the human of type $i$ when acting alone, by $x_A$ the item chosen by the algorithm when acting alone, and by $x_C^i$ the item chosen by the human of type $i$ under collaboration.

\subsection{Noisy Ranking Distribution}
We assume that each agent can only access a noisy permutation over items: this may be because the agent only has access to imperfect or noisy data about each alternative, for example, or due to features such as temperature in generative AI tools~\citep{ackley1985learning}. 
Denote the sampled rankings of the algorithm and human as $\ar \sim \D_a$ and $\hr_i \sim \D_h^i$ respectively, where $\D_a$ concentrates at ground-truth ranking $\ar^*$ and $\D_h^i$ concentrates at $\hr_i^*$. 
Denote by $x_i \ogt{\pi} x_j$ that $x_i$ precedes $x_j$ in $\pi$.
Denote by $\ari{i}$ and $\ars{k}$ the $i$-th item and the first $k$ items (viewed as an unordered set) in a ranking $\ar$.  
Denote by $\pi\circ (x_i x_j)$ the ranking after swapping the locations of $x_i$ and $x_j$ in ranking $\pi$.
All distributions in the paper are \emph{inversion-monotonic} and \emph{label-invariant}. 
Informally, adding inversion to a ranking only decreases its probability, and the distribution is not tied to the labels of the items.
Both Mallows model~\citep{mallows1957non} and Plackett-Luce model~\citep{luce1959individual,plackett1975analysis} satisfy the two properties.

\begin{definition}[Inversion-monotonicity and label-invariance]
A distribution $\sigma \sim \D(\sigma^*)$ with ground-truth ranking $\sigma^*$ is \emph{inversion-monotonic} if $\myP[\sigma = \sigma_1] > \myP[\sigma = \sigma\circ (x_i, x_j)]$ where $x_i \succ_{\sigma^*} x_j$ and $x_i \succ_{\sigma_1} x_j$ and \emph{label-invariant} if relabeling the items (i.e., applying an arbitrary permutation to all items) does not change the probability of any ranking.
We say that two distributions are \emph{isomorphic} if one is identical to the other by relabeling the items.
\end{definition}

\paragraph{Mallows Model.}
\label{app:def_of_mallows}
The Mallows model~\citep{mallows1957non} generates a distribution over permutations based on their number of inversions relative to a central ranking (also known as the Kendall-Tau distance).
A Mallows model $\D(\pi^*, \phi)$ consists of a central ranking $\pi^*$ and an accuracy parameter $\phi \ge 0$. The probability of a permutation $\pi$ occurring is $\myP[\pi]= \frac1{Z}\cdot \exp(-\phi\cdot d(\pi^*, \pi))$, where $Z$ is the normalization constant and $d(\pi^*, \pi)$ is the Kendall-Tau distance between $\pi$ and $\pi^*$.
We assume all accuracy parameters used in this paper are \emph{positive}.

\paragraph{Plackett-Luce Model.}
\label{app:def_of_plackett_luce}
The random utility model (RUM) is commonly used as a model of permutations~\citep{thurstone1927law}, where a permutation is generated as follows: i.i.d. noise is added to the true value of each item to produce $\hat{v_i} = v_i + \epsilon$, for $\epsilon \sim \mathcal{D}$: then, the items are sorted in order of the noised values $\{\hat{v_i}\}$. 
Note that in a RUM, the probability of sampling a particular ranking depends not only on the number of inversions but also on the specific valuations of the items. 
Although there is no definition of  ``central ranking'' in RUM, we refer to the ranking consistent with the item values as the central ranking.
When $\epsilon$ follows i.i.d. Gumbel noise $G(\mu, \beta)$, following the analysis in \citep{luce1959individual,plackett1975analysis}, the probability of a permutation $\sigma$ is given by
\begin{align}
\myP[\pi] = \prod_{j=1}^m \frac{\exp(v_{\pi_j}/\beta)}{\sum_{\ell=j}^m \exp(v_{\pi_\ell}/\beta)},
\end{align}
where $v_{\pi_i}$ is the value of the $i$-th item in the ranking $\pi$. 
In particular, as $\beta$ increases and item values become more heterogeneous, the probability of sampling a ground-truth ranking also increases.

\subsection{Welfare objectives}
We are interested in how an algorithm performs over a \emph{population} of people who are all using the algorithm simultaneously, that is, \emph{welfare objectives} over different humans who may receive different utility from the same algorithm. There are two main welfare objectives we will consider: utilitarian social welfare and uplift. Utilitarian social welfare is equal to the sum of the human's expected utilities (i.e., $\myE[u_i(x_C^i)]$) weighted by the probabilities of every type of human (i.e., $p_i$).
Formally,
\begin{restatable}[Expected utilitarian social welfare]{definition}{socialwelfare}\label{def:socialwelfare}
The \emph{utilitarian social welfare} of a joint system is the sum of expected utility of the $n$ types of humans $\sum_{i=1}^n p_i\cdot \myE[u_i(x_C^i)]$.
\end{restatable}

Further, we are interested in whether the human-AI collaboration can lead to improvements in human utility, compared to the utility the human would experience if she solved the task herself. If an algorithm can simultaneously benefit a population of people in this way, we say that the algorithm achieves \emph{uplift}. 

\begin{restatable}[Uplift]{definition}{compuplift}\label{def:compuplift}
A human-algorithm joint system achieves \emph{uplift} if the expected utility of every type of human is improved: 
$\myE[u_i(x_C^i)] >  \myE[u_i(x_H^i)]$ for any $i\in [n]$\footnote{We refer the reader to the related work section for discussion about the relation between uplift and complementarity \citep{bansal2021does}}.
\end{restatable}

\section{Related Work}
\label{sec:related}

\paragraph{Human-algorithm collaboration.}
Our work relates to the general area of human-algorithm collaboration. In particular, there is a rich history of applied and empirical work in human-algorithm interaction: we refer interested readers to see \citep{preece1994human, kim2015human, mackenzie2024human, lazar2017research} for textbook treatments. Specifically, our work relates to a growing literature using theoretical models to analyze humans interacting with algorithms.
Some works study how to design algorithms to optimally assist humans \citep{bansal2021most, chan2019assistive, donahue2022human, madras2017predict, Agarwal2022DiversifiedRF}, including work incorporating models of human cognitive biases, such as \citep{li2024decoding, ibrahim2025measuring, chen2025missing}. Other work decides when human-algorithm teams perform well \citep{greenwood2024designing, peng2024no, steyvers2022bayesian, agarwal2023online, cowgill2020algorithmic, benCSCW, alur2023auditing, alur2024human, guo2025value}, often relating to benchmarks such as complementarity (strict improvement over the human or algorithm alone defined in \citep{bansal2021does}). Some relevant literature reviews, taxonomies, and systematic studies in this space include \citep{gomez2025human, vaccaro2024combinations, rastogi2022unifying}.

Within human-algorithm collaboration, our work is most closely related to that of~\cite {donahue2024listsbetteronebenefits}, which studies a similar setting where an algorithmic tool presents a top $k$ subset to a noisy human. A key difference in our work is that we allow humans to be \emph{misaligned} with the algorithm and study when misalignment is helpful. In fact, many of our contributions strictly generalize theirs, such as generalizing the utility function beyond top item recovery. Other related areas include \emph{conformal prediction}, which studies how to optimize a subset of items (e.g., ensuring that the best item is presented with high probability): see \citep{angelopoulos2023conformal, fontana2023conformal} for a summary of work in this area. Within this space, some works focus on optimizing the set of items that are presented \citep{straitouri2022provably, wang2022improving, straitouri2023designing, de2024towards, babbar2022utility, de2024towards, hullman2025conformal, zhang2024evaluating}, while others include more empirical analyses of specific settings \citep{angelopoulos2020uncertainty, arnaiz2025towards}. In general, these works do not consider settings with multiple humans who may be misaligned with each other: one exception is \citep{corvelo2025human}, which studies an empirical setting on when algorithmic alignment is a helpful property for tools assisting human decision-makers. 
There is also a line of work studying human-AI collaboration through information elicitation~\citep {steyvers2022bayesian,corvelo2023human,corvelo2025human,collina2024tractable,collina2025collaborative}.
\cite{steyvers2022bayesian} developed a Bayesian framework to combine the predictions of humans and AI algorithms with confidence scores. 
\cite{corvelo2023human,corvelo2025human} then showed that if the confidence value aligns with the human's confidence, the collaboration will benefit the human's decision-making.
\cite{collina2024tractable,collina2025collaborative} studied the setting where each party holds different feature information and transmits their numerical prediction to each other at each round.
\cite{collina2024tractable} generalizes Aumann’s Agreement Theorem by relaxing the rationality assumptions and introducing calibration-based conditions on each party that ensure efficient convergence of the conversation.
Building on this, \cite{collina2025collaborative} then proposes more communication-efficient protocols and removes the assumption of a common and correct prior. Finally, \cite{collina2025emergent} models a Bayesian communication setting where a human is relying on information from multiple AI tools that are misaligned with each other (and with her) to make a discrete decision. Compared to their work, our work does not model the interaction of the human and the algorithm as a multi-round process.
The algorithm acts more like a curator, instead of a rational agent. 
\paragraph{Pluralistic alignment.}
Another research area that our paper relates to is pluralistic alignment. In this setting, the goal is generally to \emph{align} an algorithm tool with users who have heterogenous preferences (e.g. \citep{sorensen2024roadmap}), especially the work connected with voting and social choice literature on aggregating diverse user preferences (e.g. \citep{conitzer2024social, dai2024mapping, shirali2025direct, a2024policy, ge2024axioms, pardeshi2024learning, chen2024pal, siththaranjan2023distributional, golz2025distortion}). One key difference between much of this work and our own is the meaning of noise. Often, work in this space (and voting in general) assumes that users know their own preferences and can represent them faithfully, which in the case of political opinions is often a fairly reasonable assumption. In our setting, we are assuming that humans may be able to only noisily access their true values over items and thus can benefit from the assistance of another agent (e.g., an algorithm). This is a more natural assumption in lower-stakes settings such as content recommendation or many types of content creation, such as writing code or daily emails. Separately, some papers study the performance of voting rules under noise (e.g., sporting competition), though they tend to be less focused on societal welfare objectives ~\citep{boehmer2022quantitative, 10.1145/2892565}. 

\paragraph{Complementarity and Uplift.} 
Note that the term of uplift is related to that of \emph{complementarity} as defined in \citep{bansal2021does}: ``\emph{a human-AI system achieves complementary performance if it outperforms both the AI and the human acting alone}''. As compared to complementarity, uplift is weaker because it only requires that the human-algorithm system outperform the human alone, rather than the human and algorithm jointly. However, it also differs (and is weakly stronger) in that it is a \emph{population-level} property: an algorithm must benefit all humans using it in order to achieve uplift.
In addition, the application scenarios of complementarity and uplift also differ. 
The classical setting of complementarity usually assumes that both the human and the algorithm aim to predict an objective event.
In contrast, our model does not assume the existence of an objective ranking of the items, and the goal of human-algorithm collaboration is to better align humans' preferences.
Therefore, the performance of the algorithm alone is not well-defined in our setting and depends on how well the algorithm meets the human's preferences.

\paragraph{Other ranking/permutation models.} The basic ranking model considered in our paper is defined over the number of inversions.
A wide range of alternative permutation models has been studied in the literature.
For example, the Bradley-Terry model~\citep{bradley1952rank} and the Plackett-Luce model~\citep{plackett1975analysis,luce1959individual} view a permutation as the consequence of a sequence of choices; the weighted Mallows model~\citep{raman2014methods} assigns weights to the inversions and defines the probability of a permutation by weighted Kendall-Tau distance.
Moreover, \cite{awasthi2014learning,liu2018efficiently} studied learning a mixture of the Mallows model, which assumes the sampled permutation comes from a heterogeneous population.
\section{Single Human with AI Assistant: Benefits to Misalignment}\label{sec:whobenefitsmore}
In this section, we begin by analyzing a single human who is considering multiple different algorithmic assistants. These algorithmic assistants differ from each other in their relative ground-truth orderings over items - e.g., some of them may be \emph{misaligned} with the human's ordering over items. At first glance, we may expect the human's utility to decrease with increased misalignment with the algorithm. However, as mentioned by the previous example in \Cref{fig:noisy-example}, misalignment sometimes better assists a human in making the decision. The goal of this section is to describe what those conditions are. 

Without loss of generality, we relabel items by the human's ranking: $\hr^* = (x_1, \ldots, x_m)$. 
Meanwhile, we omit the index $i$ for the human type and denote the human’s value for the $j$-th item by $v_j$.
All the proofs in this section can be found at \Cref{app:omitted_proofs_of_sec4}.

\subsection{Top-item Recovery and Related Setting}
As a warm-up, we first show a fine-grained analysis of the potential effect on the human's decision-making after swapping two specific items in the algorithm's ground-truth ranking.

Let $x_i, x_j$ be two items with $i<j$, indicating that item $x_i$ is (weakly) better than item $x_j$ to the human.
Denote algorithms $A_1$ and $A_2$ that are identical except that $A_1$ places $x_i$ before $x_j$ and $A_2$ inverts a pair of items $x_i$ and $x_j$ in their ground-truth rankings.
We abuse the notations $x_C^1$ and $x_C^2$ for the items that the human picks when collaborating with $A_1$ (the aligned one) and $A_2$ (the misaligned one), respectively.
We observed that, 

\begin{restatable}{lemma}{LemProbOfiP}
\label{lem:prob_of_picking_item_i1}
For any item that is not $i$, the probability of picking that item is higher with the misaligned algorithm (and the probability of picking item $i$ is lower). That is, for any $r\in [m]$ with $r\neq i$, $\myP[x_C^1 = x_r] \le  \myP[x_C^2 = x_r]$, and $\myP[x_C^1 = x_i] \ge \myP[x_C^2 = x_i]$\footnote{Note that the two inequalities become tight only when the two items are completely indistinguishable to the human, i.e., swapping them does not change the probability of any ranking under the human’s distribution. The following content considers the case where this condition does not hold.
}.
\end{restatable}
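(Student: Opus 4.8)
The plan is to separate the two independent sources of randomness: the set that the algorithm presents and the human's choice within that set. For $t\in\{1,2\}$ write $q_t(S)=\myP_{\ar\sim\D_a^{A_t}}[\ars{k}=S]$ for the probability that algorithm $A_t$ presents the $k$-set $S$, and for $x_r\in S$ write $h(S,r)=\myP_{\hr\sim\D_h}[x_r\succ_{\hr}y\text{ for all }y\in S\setminus\{x_r\}]$ for the probability the human then picks $x_r$ (with $h(S,r)=0$ if $x_r\notin S$). Since only the algorithm changes between $A_1$ and $A_2$, the function $h$ is common to both, and $\myP[x_C^t=x_r]=\sum_S q_t(S)\,h(S,r)$. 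First I would write the quantity of interest as $\myP[x_C^2=x_r]-\myP[x_C^1=x_r]=\sum_S\big(q_2(S)-q_1(S)\big)h(S,r)$ and reduce it to a sum of pairwise contributions.

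The key structural observation is that $A_2$ is obtained from $A_1$ by relabeling $x_i\leftrightarrow x_j$, so by label-invariance $q_2(S)=q_1(\psi(S))$, where $\psi$ swaps $x_i$ and $x_j$ inside $S$ (fixing $S$ whenever it contains both or neither of $x_i,x_j$). The only sets on which $q_2$ and $q_1$ disagree are therefore those containing exactly one of $x_i,x_j$, and these come in pairs $T\cup\{x_i\}$, $T\cup\{x_j\}$ indexed by a common $(k-1)$-set $T\subseteq M\setminus\{x_i,x_j\}$. Grouping the two terms of each pair, I would rewrite the difference as
\begin{equation*}
\myP[x_C^2=x_r]-\myP[x_C^1=x_r]=\sum_{T}\underbrace{\big(q_1(T\cup\{x_i\})-q_1(T\cup\{x_j\})\big)}_{\Delta_q(T)}\cdot\underbrace{\big(h(T\cup\{x_j\},r)-h(T\cup\{x_i\},r)\big)}_{\Delta_h(T,r)},
\end{equation*}
so it suffices to control the signs of the two factors $\Delta_q(T)$ and $\Delta_h(T,r)$.

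For $\Delta_q(T)\ge 0$ I would use inversion-monotonicity on the algorithm side: the involution $\ar\mapsto\ar\circ(x_i x_j)$ is a bijection from the rankings with $\ars{k}=T\cup\{x_i\}$ onto those with $\ars{k}=T\cup\{x_j\}$, and since every ranking in the domain has $x_i$ inside the top $k$ and $x_j$ outside it (hence $x_i\succ_{\ar}x_j$, matching $A_1$'s ground truth), each such swap lowers the probability; summing gives $q_1(T\cup\{x_i\})\ge q_1(T\cup\{x_j\})$. The sign of $\Delta_h(T,r)$ then splits by cases. If $r=i$ then $h(T\cup\{x_j\},i)=0\le h(T\cup\{x_i\},i)$, so $\Delta_h(T,i)\le 0$; combined with $\Delta_q(T)\ge0$ this already yields $\myP[x_C^1=x_i]\ge\myP[x_C^2=x_i]$. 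If $r=j$ or $r\notin T\cup\{x_i,x_j\}$ the sign of $\Delta_h(T,r)$ is immediate from set membership.

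The main obstacle is the remaining case $r\in T$, where I must show $\Delta_h(T,r)\ge 0$, i.e.\ that replacing the better item $x_i$ by the worse item $x_j$ in the offered set cannot decrease the chance the human selects a fixed third item $x_r$. Writing $E$ for the event that $x_r$ beats every element of $T\setminus\{x_r\}$ under $\hr$ (note $E$ is unaffected by the relative order of $x_i,x_j$), an inclusion--exclusion on the position of $x_r$ relative to $x_i$ and $x_j$ collapses the difference to $\Delta_h(T,r)=\myP[x_i\succ_{\hr}x_r\succ_{\hr}x_j,\,E]-\myP[x_j\succ_{\hr}x_r\succ_{\hr}x_i,\,E]$. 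I would then invoke inversion-monotonicity on the human side: the involution $\hr\mapsto\hr\circ(x_i x_j)$ carries the first event bijectively onto the second, preserves $E$, and (because the first event has $x_i\succ_{\hr}x_j$, matching the human's ground truth since $i<j$) decreases probability, giving $\Delta_h(T,r)\ge 0$. Plugging the two sign facts into the displayed sum proves $\myP[x_C^1=x_r]\le\myP[x_C^2=x_r]$ for every $r\ne i$, completing the argument; the degenerate equality cases correspond exactly to $x_i,x_j$ being indistinguishable to the human, as noted in the footnote.
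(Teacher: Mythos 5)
Your proof is correct and follows essentially the same route as the paper's: the same decomposition over presented $k$-sets, the same pairing of sets $T\cup\{x_i\}$ versus $T\cup\{x_j\}$ via label-invariance of the two algorithm distributions, and the same swap-involution/inversion-monotonicity arguments that the paper packages as \Cref{lem:mono_presenting_items} (your $\Delta_q(T)\ge 0$) and \Cref{lem:CompMono} (your $\Delta_h(T,r)\ge 0$), which you instead prove inline. The only cosmetic difference is that you express each pair's contribution as the product $\Delta_q(T)\cdot\Delta_h(T,r)$ rather than invoking the rearrangement inequality as the paper does — these are algebraically identical.
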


\paragraph{Proof sketch.} We highlight the ideas here, and the full proof can be found at \Cref{app:omitted_proofs_of_sec4}.
Since the misaligned algorithm $A_2$ places $x_i$ lower than $x_j$ in the ground-truth ranking compared to $A_1$, it is more likely to present $x_r, x_j$ rather than $x_r, x_i$ to the human.
As $x_j$ is relatively worse than $x_i$ in the human’s preference, the human is thus more likely to choose $x_r$ when presented with $x_r, x_j$.

The itemwise probability comparison then implies \Cref{thm:benefit_inv1} below, which distinguishes between misalignment on items that are valueless to the human and misalignment with items that are maximally valuable to the human, showing that the former are always helpful and the latter are always harmful. 

\begin{restatable}{theorem}{ThmBenefit}
\label{thm:benefit_inv1} 
The human gains \textbf{higher} utility with the misaligned algorithm when $x_i$ and $x_j$ are least valued, but \textbf{lower} utility when $x_i$ is top valued.
\end{restatable}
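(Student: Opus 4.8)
The plan is to reduce the comparison of the two expected utilities to a single weighted sum of the itemwise probability gaps supplied by \Cref{lem:prob_of_picking_item_i1}, and then simply read off the sign of that sum in each of the two regimes.

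First I would write the expected utility under each algorithm as $\myE[u(x_C^t)] = \sum_{r=1}^m v_r \cdot \myP[x_C^t = x_r]$ for $t \in \{1,2\}$, and set $\delta_r := \myP[x_C^2 = x_r] - \myP[x_C^1 = x_r]$, so that the quantity of interest is $\myE[u(x_C^2)] - \myE[u(x_C^1)] = \sum_{r=1}^m v_r \delta_r$. Since both collaborations always return some item, $\sum_{r=1}^m \myP[x_C^1 = x_r] = \sum_{r=1}^m \myP[x_C^2 = x_r] = 1$, and therefore $\sum_{r=1}^m \delta_r = 0$. I would use this to substitute $\delta_i = -\sum_{r \neq i} \delta_r$ and rewrite the utility gap as $\sum_{r \neq i}(v_r - v_i)\,\delta_r$, which conveniently isolates the role of $x_i$, the single item whose relative position the two algorithms disagree on.

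Next I would invoke \Cref{lem:prob_of_picking_item_i1}, which gives $\delta_r \ge 0$ for every $r \neq i$ (the misaligned algorithm bleeds probability mass off $x_i$ and onto the other items). In the regime where $x_i$ is least valued we have $v_i = \min_r v_r$ (and, since $i<j$ forces $v_j \le v_i$, the item $x_j$ is least valued as well), so $v_r - v_i \ge 0$ for every $r$; each summand $(v_r - v_i)\delta_r$ is then nonnegative and the utility gap is $\ge 0$, so the human does weakly better with the misaligned algorithm. In the regime where $x_i$ is top valued we have $x_i = x_1$ and $v_i = \max_r v_r$, so $v_r - v_i \le 0$ for every $r$; now every summand is nonpositive and the gap is $\le 0$, so the human does weakly worse. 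Intuitively, shifting probability away from $x_i$ helps when $x_i$ is worthless but hurts when $x_i$ is the unique prize.

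Finally, to upgrade both conclusions to the strict inequalities claimed in the statement, I would appeal to the non-degeneracy recorded in the footnote of \Cref{lem:prob_of_picking_item_i1}: when $x_i$ and $x_j$ are genuinely distinguishable to the human, at least one inequality $\delta_r \ge 0$ is strict for some index $r$ with $v_r \neq v_i$, which makes the weighted sum strictly positive (least-valued case) or strictly negative (top-valued case). I expect this strictness bookkeeping to be the only delicate point: I must confirm that the mass shed by $x_i$ lands, at least in part, on an item whose value differs from $v_i$. In the top-item-recovery special case this reduces to checking that replacing the stronger distractor $x_i$ by the weaker distractor $x_j$ strictly increases the chance the human recovers $x_1$; the rest is a direct sign computation once the lemma is in hand.
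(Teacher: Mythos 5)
Your proposal is correct and takes essentially the same route as the paper's proof: both reduce the theorem to the itemwise probability comparison of \Cref{lem:prob_of_picking_item_i1} and read off the sign of the resulting utility change, with your $\delta_r$ bookkeeping and the conservation identity $\sum_r \delta_r = 0$ simply making explicit the paper's informal observation that probability mass moves off $x_i$ and onto every other item. If anything, your handling of strictness via the footnote's non-degeneracy condition is slightly more careful than the paper's proof, which asserts the strict inequalities without spelling that point out.
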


We illustrate \Cref{thm:benefit_inv1} in Example \ref{ex:illustrate_benefit_inv1}  below. 

\begin{example}\label{ex:illustrate_benefit_inv1}
Consider the setting with three items, where the human’s ground-truth ranking is $\hr^* = (x_1, x_2, x_3)$, and there are four potential algorithms (\Cref{fig:illustration_of_who_benefit_more1}).
Suppose items $x_2$ and $x_3$ have equal value to the human.
By repeated applications of Theorem \ref{thm:benefit_inv1}, we can derive relationships between the value of each of these algorithms to the human: for example, Algorithm 2 is better than Algorithm 1 because it is created by an inversion in value-less items ($x_2, x_3$), and Algorithm 3 is better than Algorithm 4 by identical reasoning. Similarly, Algorithm 1 is better than Algorithm 4 because Algorithm 4 involves an inversion with the most valuable item (and Algorithm 2 is better than Algorithm 3 by identical reasoning). 
\begin{figure}[h]
\centering
\begin{tikzpicture}[
    blueroundnode/.style={circle, draw=blue!70, fill=blue!5, very thick, minimum size=3mm},    
    roundnode/.style={circle, draw=black, thick, minimum size=3mm},
]
\tikzset{
    state/.style={circle, draw, minimum size=3mm},
    dashedline/.style={dashed, rounded corners}
}

\node[] at (-4, 0) {$\hr^*$};
\node[blueroundnode, scale=0.8] (a1) at (-5, -1) {$1$};
\node[roundnode, scale=0.8] (a2) at (-4, -1) {$2$};
\node[roundnode, scale=0.8] (a3) at (-3, -1) {$3$};

\node[] at (0, 0) {$A_1$};
\node[blueroundnode, scale=0.8] (x1) at (1, 0) {$1$};
\node[roundnode, scale=0.8] (x2) at (2, 0) {$2$};
\node[roundnode, scale=0.8] (x3) at (3, 0) {$3$};

\node[] at (8, 0) {$A_2$};
\node[] at (4, 0) {\Large $<$};
\node[blueroundnode, scale=0.8] (y1) at (5, 0) {$1$};
\node[roundnode, scale=0.8] (y2) at (6, 0) {$3$};
\node[roundnode, scale=0.8] (y3) at (7, 0) {$2$};

\node[] at (2, -0.6) {\large $\vee$};
\node[] at (0, -1.2) {$A_4$};
\node[roundnode, scale=0.8] (z1) at (1, -1.2) {$3$};
\node[roundnode, scale=0.8] (z2) at (2, -1.2) {$2$};
\node[blueroundnode, scale=0.8] (z3) at (3, -1.2) {$1$};

\node[] at (4, -1.]) {\large $<$};
\node[] at (8, -1.2) {$A_3$};
\node[] at (6, -0.6) {\large $\vee$};
\node[roundnode, scale=0.8] (w1) at (5, -1.2) {$3$};
\node[blueroundnode, scale=0.8] (w2) at (6, -1.2) {$1$};
\node[roundnode, scale=0.8] (w3) at (7, -1.2) {$2$};
\end{tikzpicture}
\caption{Illustration of \Cref{thm:benefit_inv1} with $3$ items, where the rounded node with number $i$ represents item $x_i$ and the most valuable item is in blue.}
\label{fig:illustration_of_who_benefit_more1}
\end{figure}
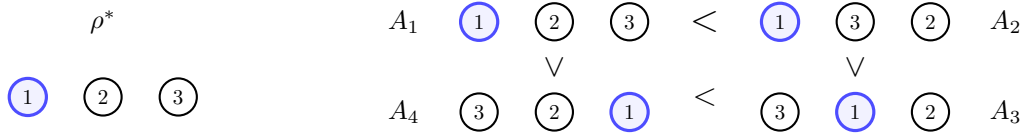
\end{example}

As a corollary of \Cref{thm:benefit_inv1}, \Cref{thm:topitem} characterizes the best and the worst algorithm assistants for the top item recovery setting (where each human only has value for recovering her favorite item).

\begin{restatable}[Best/worst strategy for top item recovery]{theorem}{CoroTopItem}
\label{thm:topitem}
In the top item recovery setting, the algorithm's ground-truth ranking $\ar^*$ that maximizes human's expected utility is $\ar^*= (x_1, x_m, \ldots, x_2)$ while the one minimizing human's expected utility is $\ar^*= (x_2, \ldots, x_m, x_1)$.
\end{restatable}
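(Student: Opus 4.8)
The plan is to reduce the statement to repeated applications of \Cref{thm:benefit_inv1} along a carefully chosen sequence of adjacent transpositions. Since only the top item $x_1$ carries positive value, the human's expected utility under any algorithm ground-truth ranking $\ar^*$ equals $v_1 \cdot \myP[x_C = x_1]$, so maximizing (resp.\ minimizing) utility is exactly maximizing (resp.\ minimizing) the recovery probability $\myP[x_C = x_1]$. I would first classify the two relevant kinds of adjacent swaps in $\ar^*$: swapping two items $x_i, x_j$ with $2 \le i < j$ (both value-less, hence ``least valued''), and swapping $x_1$ with an adjacent item ($x_1$ ``top valued''). By \Cref{thm:benefit_inv1}, the first kind weakly increases the probability exactly when it moves the pair toward reverse order (the misaligned direction), while the second kind weakly increases the probability exactly when it moves $x_1$ forward (the aligned direction). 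Thus the ``improving'' direction is to push $x_1$ to the front and to reverse the order of the remaining items, which is precisely the candidate maximizer $\ar^* = (x_1, x_m, \ldots, x_2)$; its mirror image is the candidate minimizer $(x_2, \ldots, x_m, x_1)$.

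For the maximizer I would show that from an arbitrary ranking one can reach $(x_1, x_m, \ldots, x_2)$ by a sequence of adjacent transpositions each of which weakly increases $\myP[x_C = x_1]$, and then chain the inequalities. Concretely, in a first phase I bubble $x_1$ to the front by repeatedly swapping it with its left neighbor; each such swap moves $x_1$ forward and hence, by \Cref{thm:benefit_inv1} with $x_1$ top valued, weakly increases the recovery probability, while leaving the relative order of the remaining items unchanged. In a second phase, with $x_1$ fixed at the front, I sort the tail $x_2, \ldots, x_m$ into reverse order by bubble sort under the reversed comparator; every swap there turns a naturally ordered adjacent pair of value-less items into reverse order, which weakly increases the probability by \Cref{thm:benefit_inv1}. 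Because consecutive rankings along this path differ by exactly one adjacent transposition, \Cref{thm:benefit_inv1} applies stepwise and transitivity yields $\myP[x_C = x_1 \mid \ar^*] \le \myP[x_C = x_1 \mid (x_1, x_m, \ldots, x_2)]$. The minimizer follows from the mirror-image construction: bubble $x_1$ to the back (now the misaligned direction for the top item, decreasing the probability) and sort the tail into natural order (aligned direction among value-less items, again decreasing the probability), giving $(x_2, \ldots, x_m, x_1)$ as the minimizer.

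The main obstacle is not the per-step comparison, which \Cref{thm:benefit_inv1} hands us directly, but verifying that a \emph{monotone} improving path of adjacent transpositions to the extremal ranking exists from \emph{every} starting ranking, so that the stepwise inequalities genuinely compose into a global bound. The two-phase construction is what guarantees this: separating the movement of $x_1$ from the reordering of the tail ensures each swap is unambiguously one of the two types handled by the theorem and is always taken in the improving direction, which is exactly a terminating bubble-sort argument. A secondary point is strictness: \Cref{thm:benefit_inv1} gives strict inequalities whenever the swapped items are distinguishable under the human's noisy distribution (the footnote to \Cref{lem:prob_of_picking_item_i1}), which holds for any inversion-monotonic distribution with positive accuracy parameter even though the tail items all share value zero; this upgrades ``attains the maximum/minimum'' to ``is the unique maximizer/minimizer.'' If some items are indistinguishable to the human, the same path still gives weak inequalities and the stated rankings remain optimal, just not necessarily uniquely.
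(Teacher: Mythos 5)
Your proposal is correct and takes essentially the same route as the paper: the paper's proof likewise derives both extremal rankings by repeated application of \Cref{thm:benefit_inv1}, arguing that any pair of zero-valued items should be inverted (forcing the tail $x_2,\ldots,x_m$ into reverse order) and that $x_1$ must be placed first, with a symmetric argument for the minimizer. Your two-phase bubble-sort construction merely makes explicit the monotone improving path that the paper's exchange argument leaves implicit, and your strictness/uniqueness remark is consistent with the footnote to \Cref{lem:prob_of_picking_item_i1}.
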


These results imply that \Cref{thm:benefit_inv1} gives a \emph{partial order} over which type of algorithms a given human would prefer. One natural question would be whether a \emph{total order} over algorithms is possible.
The following example illustrates why this is not always possible.
\begin{example}
Consider the human working with two algorithm assistants $A_1$ and $A_2$, of which the ground-truth rankings are respectively $(x_1, \dots,x_{10})$ and $(x_1,\dots,x_5, x_{10}, \dots, x_6)$.    
Note that the design of $A_2$ follows the insight of \cref{thm:benefit_inv1} - aligning with the human on the high-valued items, while reversing the low-valued items.
\Cref{fig:comparison_a_ta} plots the relative performance between the two algorithmic assistants ($A_1$ and $A_2$) and the human working alone ($H$), where all of them follow the Mallows distribution, and the x-axis/y-axis are respectively the accuracy of the human and the algorithm.
We assume that the two algorithmic assistants have the same accuracy. 
\begin{figure}[h]
\centering
\includegraphics[width=0.4\linewidth]{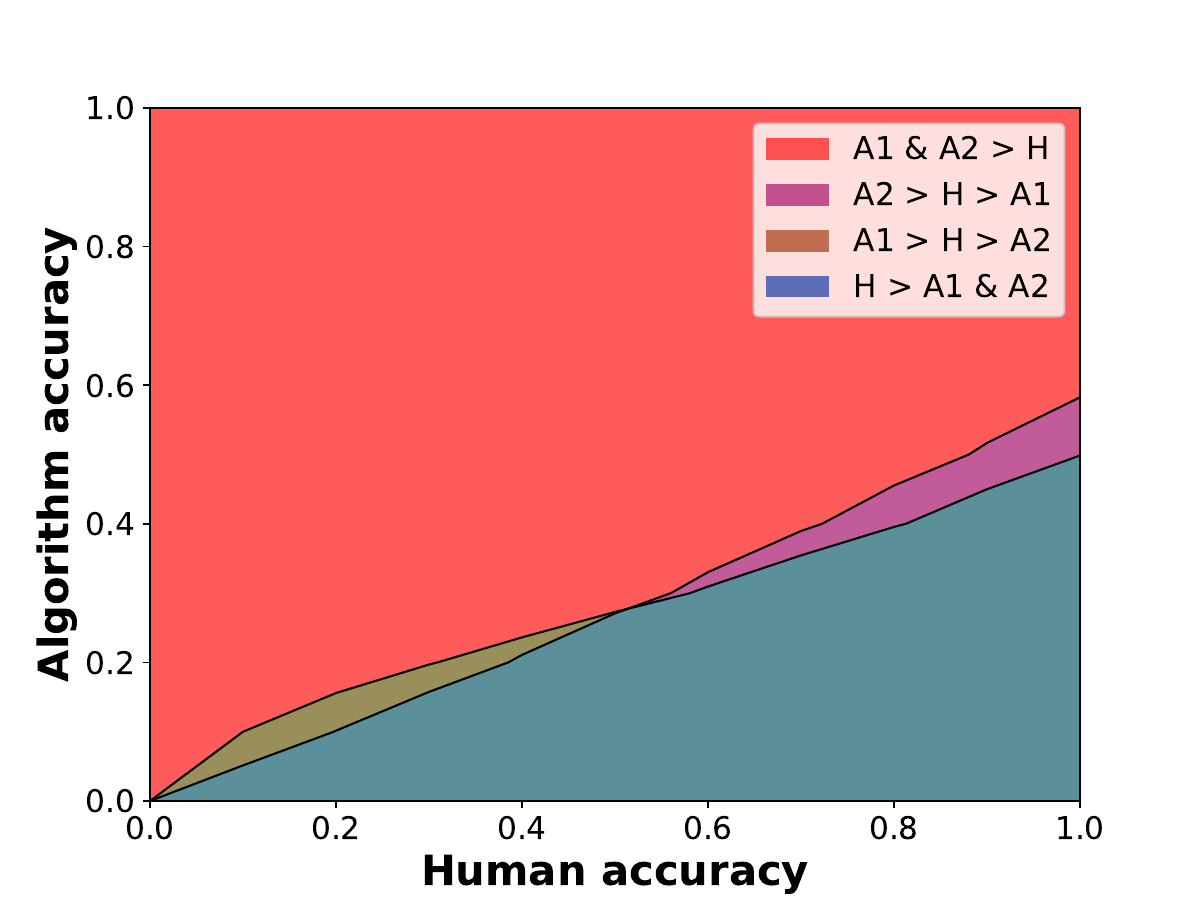}
\caption{Comparison between human working with $A_1$, $A_2$, or alone, where $A_1$ \& $A_2 > H$ denotes that the human performs better when assisted by either algorithm.}
\label{fig:comparison_a_ta}
\end{figure}
It can be noticed that the relative performance between $A_1$ and $A_2$ is not fixed and depends on the accuracy level.
\end{example}

\subsection{Extensions to Specific Distributions}
\begin{table*}[t]
    \centering
    \small
    \begin{tabular}{c||c|c|c|c}
    \toprule 
    & $x_i$ (top), $x_j$ ($<$ top) & $x_i, x_j$ (nearly top) & $j - i >\Delta$, $x_i, x_j$ ($<$ top) & $x_i, x_j$ (bottom) \\
    \hline
    Highly accurate human & \cellcolor{blue!20} $A_1 > A_2$ (\cref{thm:benefit_inv1}) & \cellcolor{blue!20} $A_1 > A_2$ (\cref{lem:less_utility_k_2_mallows}) & \cellcolor{green!20} $A_2 > A_1$ (\cref{lem:more_utility_k_2_mallows}) & \cellcolor{green!20} $A_2 > A_1$ (\cref{thm:benefit_inv1}) \\
    \hline
    Accurate human  & \cellcolor{blue!20} $A_1 > A_2$ (\cref{thm:benefit_inv1}) & \cellcolor{blue!20} $A_1 > A_2$ (\cref{lem:less_utility_k_2_mallows}) & \cellcolor{green!20} $A_2 > A_1$ (\cref{lem:more_utility_k_2_mallows})  & \cellcolor{green!20} $A_2 > A_1$ (\cref{thm:benefit_inv1})  \\
    \hline 
    Highly noisy human  &  \cellcolor{blue!20} $A_1 > A_2$ (\cref{thm:benefit_inv1}) & \cellcolor{blue!20} $A_1 > A_2$ (\cref{lem:less_utility_k_2_mallows}) & \cellcolor{blue!20} $A_1 > A_2$ (\cref{lem:less_utility_k_2_mallows}) & \cellcolor{green!20} $A_2 > A_1$ (\cref{thm:benefit_inv1}) \\
     \hline
    \end{tabular}
    \caption{Summary of the insights by \Cref{thm:benefit_inv1} and its extensions. Note that $A_1$ is aligned and $A_2$ is misaligned, creased by swapping items $x_i, x_j$.
    $x_i$ (top or bottom)  indicates that $x_i$ is the most (or least) valuable item to the human.
    $j - i > \Delta$ indicates that there is an index gap between item $x_i$ and $x_j$. 
    This table summarizes when misalignment is helpful or harmful. 
    }
    \label{tab:placeholder}
\end{table*}

Another question is how \Cref{thm:benefit_inv1} generalizes to settings where the inverted items \( x_i \) and \( x_j \) may not necessarily be the most or least valuable items to the human.
This section investigates this question under two specific models: the Plackett-Luce model and the Mallows model by quantifying the probability changes in \Cref{lem:prob_of_picking_item_i1}. 

We extend \Cref{thm:benefit_inv1} both theoretically and numerically: 
\textbf{(i)} We establish more general conditions under which misalignment is guaranteed to be either beneficial or harmful under the Mallows and Plackett–Luce models.
These theoretical extensions provide a characterization of when swapping $x_i$ and $x_j$ is beneficial or not.
We summarize the insights in \Cref{tab:placeholder}.
\textbf{(ii)} We present a numerical study in \Cref{app:extension_of_theorem1}, examining scenarios where both the human and the algorithm follow either a Mallows model or a Random Utility Model (RUM). 
Our results indicate that the conclusions of \Cref{thm:benefit_inv1} continue to hold even when \(x_i\) and \(x_j\) take relatively small values.

Let $\pi_1$ and $\pi_2$ respectively be the rankings output by $A_1$ and $A_2$.
The following lemma first provides a sufficient condition when misalignment is harmful to the human.

\begin{restatable}{lemma}{lemLessUtilityMallows}
\label{lem:less_utility_k_2_mallows}
Sufficient condition for when collaboration with a misaligned algorithm is \textbf{harmful} for the human under the Mallows model with accuracy parameter $\phi_h$ is $v_1 - v_i\le \frac{\exp(-\phi_h \Delta)}{1 - \exp(-\phi_h\Delta)}\Delta$ where $\Delta = j - i$ and $k=2$.
\end{restatable}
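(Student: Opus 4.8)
The plan is to turn the utility comparison into a single signed sum of probability flows, isolate which flows hurt the human, and then control those flows using the pairwise choice probabilities of the Mallows model.

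First I would apply \Cref{lem:prob_of_picking_item_i1}. Moving from the aligned algorithm $A_1$ to the misaligned $A_2$ only shifts probability mass off $x_i$ onto the other items, so writing $q_r = \myP[x_C^2 = x_r] - \myP[x_C^1 = x_r]$ we have $q_r \ge 0$ for every $r \neq i$ and $q_i = -\sum_{r\neq i} q_r \le 0$. Since the utilities are linear in these probabilities and $\sum_r q_r = 0$, the comparison collapses to
\[
\myE[u(x_C^2)] - \myE[u(x_C^1)] = \sum_{r \neq i}(v_r - v_i)\,q_r,
\]
and misalignment is harmful exactly when this quantity is nonpositive. I would then split the sum at index $i$: terms with $r<i$ shift mass to more-valuable items ($v_r - v_i \ge 0$) and push the difference up, whereas terms with $r>i$ shift mass to less-valuable items and push it down. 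Bounding every upward surplus by the largest one, $v_r - v_i \le v_1 - v_i$, reduces the goal to showing that $(v_1 - v_i)\sum_{r<i} q_r$ is dominated by the downward-weighted mass $\sum_{r>i}(v_i - v_r)q_r$. I note that the naive reduction to \Cref{thm:benefit_inv1} (raising $v_i$ to $v_1$ and treating $x_i$ as top-valued) only produces an upper bound of the wrong sign, so the genuine content is to show that the upward mass is actually small.

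The second, technical stage quantifies the flows under the Mallows human. Because $A_2$ differs from $A_1$ by a single swap of $x_i$ and $x_j$, only presented pairs of the form $\{x_r, x_i\}$ and $\{x_r, x_j\}$ change their frequency, and label-invariance identifies the $A_1$-probability of presenting $\{x_r, x_i\}$ with the $A_2$-probability of presenting $\{x_r, x_j\}$ (and vice versa). Hence each $q_r$ is a difference of Mallows pairwise choice probabilities across the rank-gaps $|r-i|$ and $|r-j|$, which are related through the offset $\Delta = j-i$. Using that the probability of inverting a pair at rank-gap $d$ decays geometrically, on the order of $e^{-\phi_h d}$, I would upper bound the upward flow $\sum_{r<i}q_r$ and assemble the resulting geometric contributions over the admissible gaps; this summation is what produces the closed-form factor $\frac{e^{-\phi_h\Delta}}{1-e^{-\phi_h\Delta}}\,\Delta$, and pairing it with the value factor $v_1 - v_i$ from the first stage yields the stated condition.

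I expect the crux to be exactly this second stage. Unlike \Cref{lem:prob_of_picking_item_i1}, which needs only the \emph{signs} of the $q_r$, here I need tight \emph{magnitudes}, and in particular I must control the joint event that the human inverts a more-valuable item $x_r$ against $x_i$ while still ordering $x_r$ ahead of $x_j$ correctly—this is precisely the event that sends mass upward to a better item and is the source of harm. Handling these joint Mallows marginals, rather than isolated pairwise inversions, and then packaging them into the clean geometric expression, is the main obstacle; the value-side bookkeeping of the first stage is routine by comparison.
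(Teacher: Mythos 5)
Your Stage~1 bookkeeping is sound and matches the paper's setup (the signed flows $q_r$, their signs via \Cref{lem:prob_of_picking_item_i1}, and the reduction $v_r - v_i \le v_1 - v_i$), but Stage~2 as described would not close, and the gap sits exactly where you flag ``the main obstacle.'' There is no absolute bound available on the upward flow $\sum_{r<i} q_r$: for $k=2$ every flow is carried by the algorithm-side weights $\psi(i,j,r) = \myP[\pi_1[:2]=\{x_i,x_r\}] - \myP[\pi_1[:2]=\{x_j,x_r\}]$, and these are arbitrary. For instance, the algorithm may essentially always present $\{x_1,x_i\}$ or $\{x_1,x_j\}$; then all mass sits at $r=1$, the downward flows $q_r$ for $r>i$, $r\neq j$ vanish, and the only compensation for the harmful gain $(v_1-v_i)q_1$ is the $r=1$ component of $q_j$ --- the $x_i \to x_j$ downgrade occurring on the \emph{same} presentation event. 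So the domination you want, $(v_1-v_i)\sum_{r<i}q_r \le \sum_{r>i}(v_i-v_r)q_r$, must be established per presented pair, termwise in $r$ with matched $\psi(i,j,r)$ weights, not aggregate versus aggregate with an independent ``smallness'' estimate. This is precisely what the paper does: the utility change equals $\sum_{r\neq i,j}\psi(i,j,r)\bigl[(v_r-v_i)\myP[x_i \succ_{\rho} x_r] - (v_r-v_j)\myP[x_j \succ_{\rho} x_r]\bigr]$, the brackets for $i\le r\le j$ and $r>j$ are nonpositive for free (signs, plus \Cref{lem:CompMono}), and only $r<i$ requires the hypothesis.

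The mechanism producing the closed form is also different from the one you anticipate. The factor $\frac{e^{-\phi_h\Delta}}{1-e^{-\phi_h\Delta}}$ does not arise from summing geometric contributions over rank gaps; it is pure algebra on the value side. The termwise inequality for $r<i$ needs $\frac{v_r-v_i}{v_r-v_j}\le e^{-\phi_h\Delta}$, whose left side is increasing in $v_r$, so the binding case is $v_r=v_1$, and rearranging $v_1-v_i \le e^{-\phi_h\Delta}(v_1-v_j)$ yields exactly the stated condition (note the paper's proof in fact uses the value gap between $x_i$ and $x_j$ where the statement's trailing $\Delta$ appears). The one genuinely Mallows-specific ingredient --- the joint-marginal control you explicitly leave open --- is resolved by a one-line swap argument, not a series: mapping each ranking with $x_i\succ_{\rho} x_r$ to its image under swapping $x_i$ and $x_j$ produces a ranking with $x_j\succ_{\rho} x_r$ at an inversion cost of at most $\Delta$, giving the uniform ratio bound $\myP[x_j\succ_{\rho} x_r] \ge e^{-\phi_h\Delta}\,\myP[x_i\succ_{\rho} x_r]$ for every $r<i$ simultaneously. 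Your proposed estimate (``pair inversion at gap $d$ decays like $e^{-\phi_h d}$'') concerns isolated pairwise marginals, which, as you yourself note, do not control the joint event; since the proposal supplies no replacement for the swap argument and attributes the closed form to the wrong mechanism, the proof as planned does not go through.
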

We now provide further intuition on when the above conditions hold. 
The condition is more likely to hold when the value of the most valuable item, $v_1$, does not significantly exceed the value of $x_i$, $v_i$, or $\phi_h$ is sufficiently small.
The former condition means that $x_i$ is nearly the top item, while the latter one indicates the human is \emph{highly noisy}.
Under these conditions, collaboration with the misaligned algorithm tends to only hurt the human.

It is worth noting that \Cref{lem:less_utility_k_2_mallows} does not provide a necessary condition: when $\Delta$ is large, the right-hand side tends to zero, and the condition is hard to satisfy.
However, the human may still be harmed in this case.
For instance, \Cref{thm:benefit_inv1} shows that the human always receives less utility once $x_i$ is the most valuable item while $x_j$ is less valuable.

\begin{takeaway}
The human gets hurt in working with the misaligned algorithm if either (i) the human is \emph{highly inaccurate}, or (ii) both $x_i$ and $x_j$ are \emph{nearly} top items.
\end{takeaway}

Next, we complement \Cref{lem:less_utility_k_2_mallows} by providing sufficient conditions when misalignment is beneficial to the human. 

\begin{restatable}{lemma}{lemMoreUtilityMallows}
\label{lem:more_utility_k_2_mallows}
Sufficient condition for when collaboration with a misaligned algorithm is \textbf{helpful} for the human under the Mallows model is when some $i' \in [i-1]$ satisfies
\begin{align*}
\frac{v_{i'}}{v_i}\ge \frac{\sum_{r \neq i, j} \psi(i, j, r)}{\sum_{r=1}^{i'} \psi(i, j, r)\exp(-\phi_h\cdot (j- r +1))}  \frac{1}{1-\exp(-\phi_h\cdot \Delta)},
\end{align*}
where $\Delta = j-i+1$ and $\psi(i, j, r) = \myP[\pi_1[:2] = \{x_i, x_r\}] - \myP[\pi_1[:2] = \{x_j, x_r\}]$ is always nonnegative.
\end{restatable}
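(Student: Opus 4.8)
The plan is to turn the itemwise comparison of \Cref{lem:prob_of_picking_item_i1} into a utility comparison, and then to quantify the relevant probability gaps under the Mallows model. Write $\delta_r = \myP[x_C^2 = x_r] - \myP[x_C^1 = x_r]$, so that the object of interest is the utility difference $\myE[u(x_C^2)] - \myE[u(x_C^1)] = \sum_{r=1}^m v_r\,\delta_r$, and collaboration with the misaligned $A_2$ is helpful exactly when this is positive. By \Cref{lem:prob_of_picking_item_i1} we have $\delta_i \le 0$ and $\delta_r \ge 0$ for all $r \ne i$, and since both are probability distributions $\sum_r \delta_r = 0$.

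First I would express each $\delta_r$ in terms of $\psi$. Because $A_2$'s ground-truth ranking is $A_1$'s with the labels $x_i,x_j$ transposed, label-invariance gives $\myP[\pi_2[:2]=S] = \myP[\pi_1[:2]=\tau(S)]$, where $\tau$ swaps $x_i$ and $x_j$; hence the presentation probability changes only for pairs containing exactly one of $x_i, x_j$. Since the human, once shown a pair, picks her own top item, this yields, for every $r \ne i,j$,
\begin{align*}
\delta_r = \psi(i,j,r)\bigl(\myP[x_r \ogt{\hr} x_j] - \myP[x_r \ogt{\hr} x_i]\bigr),
\end{align*}
and the same bookkeeping gives $-\delta_i = \sum_{s \ne i,j}\psi(i,j,s)\,\myP[x_i \ogt{\hr} x_s] \le \sum_{s\ne i,j}\psi(i,j,s)$, using that each probability is at most $1$ (and $\psi \ge 0$).

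Next I would lower-bound the utility difference. Dropping the nonnegative terms with index above $i'$ (other than $r=i$) gives $\sum_r v_r\delta_r \ge \sum_{r=1}^{i'} v_r\delta_r + v_i\delta_i$. I would then bound the harmful term by $v_i\delta_i \ge -v_i\sum_{s\ne i,j}\psi(i,j,s)$ and the beneficial terms by $\sum_{r=1}^{i'}v_r\delta_r \ge v_{i'}\sum_{r=1}^{i'}\delta_r$, the latter because $v_r \ge v_{i'}$ and $\delta_r\ge 0$ for $r\le i'$. Positivity of the utility difference then follows once $v_{i'}\sum_{r=1}^{i'}\delta_r \ge v_i\sum_{s\ne i,j}\psi(i,j,s)$, which after substituting the per-term lower bound $\delta_r \ge \psi(i,j,r)\exp(-\phi_h(j-r+1))(1-\exp(-\phi_h\Delta))$ rearranges \emph{precisely} into the stated condition on $v_{i'}/v_i$. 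Thus everything reduces to this one per-term estimate.

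The crux — and the step I expect to be hardest — is that per-term Mallows estimate for $r \le i'$: a lower bound on the human's pairwise-choice gap $\myP[x_r\ogt{\hr}x_j] - \myP[x_r\ogt{\hr}x_i]$ by $\exp(-\phi_h(j-r+1))(1-\exp(-\phi_h\Delta))$. Rewriting this gap as $\myP[x_i\ogt{\hr}x_r\ogt{\hr}x_j] - \myP[x_j\ogt{\hr}x_r\ogt{\hr}x_i]$, I would control the two three-item orderings using inversion-monotonicity and the explicit weighting $\exp(-\phi_h\,d(\cdot,\hr^*))$: the ordering $x_j\ogt{\hr}x_r\ogt{\hr}x_i$ carries more inversions on the $\{x_i,x_j\}$-related pairs than $x_i\ogt{\hr}x_r\ogt{\hr}x_j$, which I expect to produce the factor $(1-\exp(-\phi_h\Delta))$, while a minimal-inversion representative supplies the factor $\exp(-\phi_h(j-r+1))$. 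The delicate part is that the items lying between ranks $r$ and $j$ contribute inversions in both directions when the three items are rearranged, so the Kendall--tau accounting does not separate cleanly and the normalization constant does not obviously cancel; making this exact — ideally through a distance-monotone coupling, or by telescoping $\myP[x_r\ogt{\hr}x_{a+1}] - \myP[x_r\ogt{\hr}x_a]$ over $a = i,\dots,j-1$ — is where essentially all of the technical work resides.
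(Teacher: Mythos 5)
You correctly reproduce the paper's entire reduction: the itemwise decomposition $\delta_r = \psi(i,j,r)\bigl(\myP[x_r \succ_{\hr} x_j] - \myP[x_r \succ_{\hr} x_i]\bigr)$ for $r \neq i,j$ via label-invariance of the two isomorphic algorithm distributions, dropping the nonnegative $v_j$-term and the nonnegative terms with $i' < r$, the bound $-\delta_i \le \sum_{r \neq i,j}\psi(i,j,r)$ from $\myP[x_i \succ_{\hr} x_r] \le 1$, and the step $\sum_{r\le i'} v_r\delta_r \ge v_{i'}\sum_{r\le i'}\delta_r$ — this is exactly how the paper proceeds in \Cref{app:extension_plackett_luce}. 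But the step you flag as the crux, the per-term estimate
\[
\myP[x_r \succ_{\hr} x_j] - \myP[x_r \succ_{\hr} x_i] \;\ge\; \exp\bigl(-\phi_h (j-r+1)\bigr)\bigl(1-\exp(-\phi_h\Delta)\bigr) \qquad \text{for } r \le i',
\]
is left unproven, and the attack routes you sketch would not deliver these constants. The natural transposition coupling — mapping each ranking with $x_j \succ_{\hr} x_r \succ_{\hr} x_i$ to one with $x_i \succ_{\hr} x_r \succ_{\hr} x_j$ by swapping $x_i$ and $x_j$ — only guarantees the inversion count drops by at least one (as you anticipate, the items between $x_i$ and $x_j$ in $\hr^*$ need not lie between the two swapped positions), so it yields a factor $\bigl(1-\exp(-\phi_h)\bigr)$ times a residual three-item probability you would still have to bound, which is strictly weaker than the stated factor $\bigl(1-\exp(-\phi_h\Delta)\bigr)$ since $\Delta \ge 2$; and the ``minimal-inversion representative'' heuristic for the prefactor $\exp(-\phi_h(j-r+1))$ is not an argument, because the conditional mass is spread over all rankings. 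Telescoping over intermediate indices merely reproduces the same per-pair problem $\Delta-1$ times.

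What closes the gap — and what the paper actually does — is to bypass coupling entirely and invoke the exact closed-form pairwise marginal of the Mallows model (\Cref{lem:preferonetoother}, i.e., Proposition 3.8 of \citep{boehmer2023properties}): $\myP[x_a \succ_{\hr} x_b]$ depends only on the index gap $d = b-a+1$ and equals $\frac{d}{1-\exp(-\phi_h d)} - \frac{d-1}{1-\exp(-\phi_h (d-1))}$. Applying this with gaps $j-r+1$ and $i-r+1$ and a few lines of elementary algebra produces exactly the factor $\exp(-\phi_h(j-r+1))\bigl(1-\exp(-\phi_h\Delta)\bigr)$; your worry that ``the normalization constant does not obviously cancel'' dissolves because in this marginal it already has. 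So the defect in your proposal is real but local: the global structure is the paper's, and citing (or re-deriving, e.g., via the repeated-insertion representation behind \Cref{lem:MallowsFirstk}) the closed-form marginal lets the rest of your argument go through verbatim; without it, the specific constants in the stated sufficient condition are unjustified.
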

Notice that $\psi(i, j, r)$  on the right-hand side depends only on the algorithmic assistant’s distribution.
In the case where most of the values of $\psi(i,j,r)$ concentrate on terms with $r \le i'$ (e.g., the algorithm is aligned with the human on the high-valued items), the first term asks for an exponential gap between $v_{i'}$ and $v_i$.
Meanwhile, the second term depends on the index gap between the two swapped items (i.e., $\Delta$) and the accuracy level of the human (i.e., $\phi_h$).
The condition is more likely to hold when the index gap and the accuracy level of the human are both non-trivial, i.e., larger than some non-negligible constants.
The following provides a concrete example where the condition holds.
\begin{example}
Consider the setting of four items and two algorithm assistants that randomly select two items from the top three of their ground-truth rankings, displaying the top two items with a higher probability than displaying other subsets.
Suppose the ground-truth rankings of $A_1$ and $A_2$ are $x_1, x_2, x_3, x_4$ and $x_1, x_3, x_2, x_4$ respectively, where $x_2$ and $x_3$ are the misaligned items. 
The human has value of $v_1= 100, v_2 = 2,  v_3 = 1, v_4 = 1$.
The human's ranking satisfies the Mallows model with accuracy parameter $\phi_h = 1$.
Let $i'=1$. Then the two sides of \Cref{lem:more_utility_k_2_mallows} are respectively given by
\begin{align*}
LHS = \frac{v_1}{v_2} = \frac{100}{2} = 50\quad 
RHS \le \exp(3)\cdot \frac{1}{1 - \exp(-1)} < LHS\,.
\end{align*}
Therefore, the human receives a higher utility when collaborating with the misaligned algorithm $A_2$ than the aligned one $A_1$.
\end{example}

\begin{takeaway} The human benefits from collaborating with a misaligned algorithm when there exists an item that is \emph{significantly more valuable} than $x_i$ and $x_j$, and (i) the human possesses a \emph{non-trivial} level of accuracy, and (ii) there is a clear index gap between $x_i$ and $x_j$ in their ranking.
\end{takeaway}

Lastly, we remark that a similar set of extensions also holds under the Plackett–Luce model.
When the human's distribution satisfies the Plackett-Luce model with noise parameter $\beta$, we have the following two results.

\begin{restatable}{lemma}{lemLessUtility}
\label{lem:less_utility_k_2_plackett_luce}
Sufficient condition for when collaboration with a misaligned algorithm is \textbf{harmful} for the human under the
Plackett-Luce model is when $v_0 - v_j \le 1.27\beta$.
\end{restatable}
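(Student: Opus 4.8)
The plan is to reduce the comparison to a one-dimensional monotonicity statement about the human's pairwise choice. Fix $k=2$, as in the companion Mallows lemmas. By label-invariance, the set distribution of the misaligned algorithm $A_2$ is obtained from that of $A_1$ by swapping the labels $x_i$ and $x_j$: for every third item $x_r$ ($r\neq i,j$), the probability that $A_2$ presents $\{x_i,x_r\}$ equals $\myP[\pi_1[:2]=\{x_j,x_r\}]$ and vice versa, while the probability of $\{x_i,x_j\}$ is unchanged. Hence passing from $A_1$ to $A_2$ shifts exactly $\psi(i,j,r)\ge 0$ units of probability from $\{x_i,x_r\}$ to $\{x_j,x_r\}$. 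Writing $h(v)$ for the human's expected utility from her noisy choice out of the pair consisting of $x_r$ (fixed, value $v_r$) and an item of value $v$, this gives the clean decomposition
\begin{align*}
\myE[u(x_C^1)] - \myE[u(x_C^2)] = \sum_{r\neq i,j}\psi(i,j,r)\bigl(h(v_i) - h(v_j)\bigr).
\end{align*}
Since every $\psi(i,j,r)\ge 0$, it suffices to show $h(v_i)\ge h(v_j)$ for each $r$; this makes the whole sum nonnegative, i.e., $A_2$ is weakly worse, which is exactly ``harmful.''

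Second, I would hold $x_r$ fixed and study $h$ as a function of the other item's value. Under Plackett–Luce on a two-element set, the human picks the item of value $v$ with probability $p(v)=e^{v/\beta}/(e^{v_r/\beta}+e^{v/\beta})$, so $h(v) = (1-p(v))v_r + p(v)\,v$. Because $v_i\ge v_j$, it is enough to show $h$ is nondecreasing on $[v_j,v_i]$. Differentiating and using $p'(v)=\tfrac1\beta p(v)(1-p(v))$ yields $h'(v) = p(v)\bigl[1 + \tfrac1\beta(1-p(v))(v-v_r)\bigr]$, so $h'$ can only be negative when $v<v_r$, in which case its sign is governed by the inequality $\tfrac1\beta(1-p(v))(v_r-v)\le 1$. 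Substituting $t=(v_r-v)/\beta$ and $1-p(v)=1/(1+e^{-t})$, this reduces to $t/(1+e^{-t})\le 1$.

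Third, I would pin down the constant. The map $t\mapsto t/(1+e^{-t})$ is increasing on $[0,\infty)$ and equals $1$ at the root $t^\star$ of $t=1+e^{-t}$ (equivalently $t-e^{-t}=1$), namely $t^\star\approx 1.2785$; thus $t\le 1.27$ already guarantees $t/(1+e^{-t})<1$. Along $[v_j,v_i]$ the relevant $t=(v_r-v)/\beta$ is largest when $v_r$ equals the top value $v_0$ and $v=v_j$, so $t\le (v_0-v_j)/\beta$. Hence the hypothesis $v_0-v_j\le 1.27\beta$ forces $t\le 1.27$ uniformly over all $r$ and all $v\in[v_j,v_i]$, giving $h'\ge 0$, each summand nonnegative, and the claim.

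The main obstacle is the third-item terms with $r<i$ (items strictly more valuable than both swapped items): there $v_r>v_i\ge v_j$, so $h$ genuinely can decrease, and this must be ruled out quantitatively rather than by a ``more value is better'' heuristic — this is precisely where $v_0-v_j\le 1.27\beta$ enters, capping how far below $v_r$ the value $v_j$ can sit. By contrast, terms with $r>j$ are automatic, since there $v\ge v_r$ on the whole interval and $h'>0$. A secondary point to handle carefully is the label-invariance bookkeeping establishing the $\psi$-weighted decomposition and the nonnegativity of $\psi$ inherited from \Cref{lem:more_utility_k_2_mallows}.
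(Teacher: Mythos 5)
Your proof is correct and takes essentially the same route as the paper: the same $\psi(i,j,r)$-weighted decomposition of the utility change (which the paper extracts from the computation behind \Cref{lem:prob_of_picking_item_i1}), followed by a one-dimensional monotonicity argument with the same critical constant, with your uniform bound $t \le (v_0-v_j)/\beta$ merely streamlining the paper's three-way case analysis over $r$. Indeed, your pair-utility function satisfies $h(v) = v_r - f(v_r - v)$ for the paper's $f(x) = x/(\exp(x/\beta)+1)$, so your derivative condition $t/(1+e^{-t}) \le 1$ is algebraically identical to the paper's condition $f'\ge 0$, both holding up to the root of $t = 1 + e^{-t}$, namely $t^\star \approx 1.2785 > 1.27$.
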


\begin{restatable}{lemma}{lemMoreUtility}
\label{lem:more_utility_k_2_plackett_luce}
Sufficient condition for when collaboration with a misaligned algorithm is \textbf{helpful} under the Plackett-Luce model for the human is when some $i' \in [i-1]$ satisfies
\begin{align*}
\frac{v_{i'}}{v_i} \ge \frac{\sum_{r\neq i, j} \frac{\psi(i, j, r)}{\exp((v_r - v_i)/\beta) + 1}}{\sum_{r=1}^{i'} \frac{\psi(i, j, r)}{\exp((v_r - v_i)/\beta) + 1}} \cdot \frac{2}{1- \exp\left(-\Delta/\beta\right)},
\end{align*}
where $\Delta  = v_i - v_j$ and $\psi(i, j, r) = \myP[\pi_1[:2] = \{x_i, x_r\}] - \myP[\pi_1[:2] = \{x_j, x_r\}] \ge 0$.
\end{restatable}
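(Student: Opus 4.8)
The plan is to reduce the utility comparison to a signed sum over items and then quantify each sign's magnitude using the Plackett--Luce binary choice probabilities. Write $\delta_r = \myP[x_C^2 = x_r] - \myP[x_C^1 = x_r]$, so that the quantity to be signed is $\myE[u(x_C^2)] - \myE[u(x_C^1)] = \sum_{r} v_r \delta_r$. Because the $\delta_r$ sum to zero, this equals $\sum_{r\neq i}(v_r - v_i)\delta_r$, which isolates item $x_i$ (the only item whose selection probability drops, by \Cref{lem:prob_of_picking_item_i1}) and makes explicit that items more valuable than $x_i$ push the difference positive while less valuable ones push it negative. The goal is to give a condition under which the positive push wins.

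First I would use label-invariance to pin down every $\delta_r$ exactly. Swapping $x_i$ and $x_j$ in the algorithm's central ranking acts as the transposition $\tau=(i\,j)$ on output rankings, so $\myP[\pi_2[:2]=S]=\myP[\pi_1[:2]=\tau(S)]$; since $k=2$ the only top-sets whose probabilities move are those containing exactly one of $x_i,x_j$. Combining this with the human's choice rule $\myP[x_C = x_r] = \sum_{S\ni x_r}\myP[\pi[:2]=S]\,q(x_r\mid S)$ gives, for $r\neq i,j$, $\delta_r = \psi(i,j,r)\,[\,q(x_r\mid\{x_r,x_j\}) - q(x_r\mid\{x_r,x_i\})\,]$, together with $\delta_i = -\sum_{s}\psi(i,j,s)\,q(x_i\mid\{x_i,x_s\})$ and $\delta_j = \sum_{s}\psi(i,j,s)\,q(x_j\mid\{x_j,x_s\})$, where $q(x_a\mid\{x_a,x_b\})$ denotes the human's probability of choosing $x_a$ from the pair. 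Substituting these back and regrouping collapses the difference into the single clean sum $\myE[u(x_C^2)] - \myE[u(x_C^1)] = \sum_{r\neq i,j}\psi(i,j,r)\,c_r$ with $c_r = (v_r - v_i)\,w_r + (v_j - v_r)\,w'_r$, where $w_r = q(x_i\mid\{x_i,x_r\}) = (1+\exp((v_r-v_i)/\beta))^{-1}$ is exactly the weight appearing in the statement and $w'_r = q(x_j\mid\{x_j,x_r\})$.

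Next I would exploit the Plackett--Luce sigmoid form to expose the factor $1-\exp(-\Delta/\beta)$. A direct computation gives the identity $w_r - w'_r = w_r\,q(x_r\mid\{x_r,x_j\})\,(1-\exp(-\Delta/\beta))$ with $\Delta=v_i-v_j$, which is the source of both the $1-\exp(-\Delta/\beta)$ term and, after the bound $q(x_r\mid\{x_r,x_j\})\ge \tfrac12$ for the high-value indices $r\le i'$ (valid since $v_r\ge v_{i'}\ge v_i> v_j$ there), the constant $2$. Restricting the positive part of $\sum_r \psi(i,j,r)\,c_r$ to $r\le i'$ and using $v_r\ge v_{i'}$ lower-bounds it by a multiple of $v_{i'}\,\frac{1-\exp(-\Delta/\beta)}{2}\sum_{r=1}^{i'}\psi(i,j,r)\,w_r$, while the remaining terms are bounded in absolute value by $v_i\sum_{r\neq i,j}\psi(i,j,r)\,w_r$. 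Requiring the first to dominate the second yields precisely the stated inequality on $v_{i'}/v_i$.

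The main obstacle I anticipate is the clean control of the negative contributions. These come from two distinct sources that must both be folded into the single aggregate $v_i\sum_{r\neq i,j}\psi(i,j,r)\,w_r$: the items ranked below $x_i$ (for which $v_r - v_i<0$), and the spread-out ``cost'' $(v_j - v_r)\,w'_r$ of the human more frequently selecting the downgraded item $x_j$, which contributes a $-\Delta\,w'_r$ penalty to every $c_r$, including the favorable ones $r\le i'$. Showing that these penalties are uniformly absorbed — so that the whole comparison reduces to the ratio of the two $\psi$-weighted sums times $\frac{2}{1-\exp(-\Delta/\beta)}$ — is the delicate step, and mirrors the corresponding bookkeeping in the Mallows analysis of \Cref{lem:more_utility_k_2_mallows}.
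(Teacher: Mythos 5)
Your proposal is correct and takes essentially the same route as the paper's proof: the itemwise differences $\delta_r$ you obtain from the transposition symmetry are exactly those computed in \Cref{lem:prob_of_picking_item_i1} and reused in the shared proof of \Cref{lem:less_utility_k_2_plackett_luce}, and your endgame---rewriting $c_r = v_r(w_r - w'_r) + v_j w'_r - v_i w_r$, discarding the nonnegative $v_j w'_r$ and $r > i'$ terms, extracting the factor $\tfrac{1}{2}\bigl(1-\exp(-\Delta/\beta)\bigr)w_r$ from $w_r - w'_r$ via the sigmoid half-bound, and absorbing all negative contributions into $v_i\sum_{r\neq i,j}\psi(i,j,r)\,w_r$---is precisely the paper's bookkeeping. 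The ``delicate step'' you flag resolves exactly as you anticipate, and your only deviations are cosmetic (an explicit identity for $w_r - w'_r$ where the paper runs a chain of inequalities, and using $v_r \ge v_j$ rather than $v_r \ge v_i$ to justify the factor of $2$).
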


At a high level, similar to \Cref{lem:less_utility_k_2_mallows}, when the human is highly noisy or $x_i$ and $x_j$ are nearly top items, misalignment is harmful.
Slightly different from \Cref{lem:more_utility_k_2_mallows}, the positive result under the Plackett–Luce model considers the value gap between the two items (i.e., $v_i - v_j$), which is consistent with the definition of the distribution.

\section{Multiple Humans Using Single AI: Welfare Objectives}\label{sec:strathum}

In Section \ref{sec:whobenefitsmore}, we studied the question of which (potentially misaligned) algorithm an \emph{individual} human would prefer.
In this section, we study the question of which algorithm to select in order to satisfy welfare notions over a \emph{population} of humans. 
In particular, this section focuses on the Mallows model (the rankings of both the algorithm and the human satisfy the Mallows model), and we will study algorithms that can either rearrange the orderings of items of its ground-truth ranking $\ar^*$ or change its accuracy level $\phi_a$. 

There are various societal objectives that we may consider: in Section \ref{subsec:welfare}, we study expected utilitarian social welfare. For this objective, we show that finding the optimal arrangement of items (e.g., determining the optimal central ordering of the algorithm) is \NP-hard. Finally, even though it is computationally hard, it can be formulated as a linear mixed-integer program (MIP) with only a linear number of binary variables. However, a welfare-maximized strategy may not always be desirable in every respect: next, Section \ref{subsec:compuplift} turns to other objectives, such as uplift. We provide counterexamples where maximizing welfare violates uplift, and we also find cases where introducing a higher noise level helps achieve uplift. Finally, we provide illustrations of solutions achieving uplift for special cases (such as top-item recovery). We conclude by illustrating generalizations of our results with numerical simulations. 

\subsection{Social Welfare Maximization}\label{subsec:welfare}
First, recall that our definition of expected utilitarian social welfare is given by: 

\socialwelfare*

Next, we explore how we can maximize the utilitarian social welfare. 
Our main result is Theorem \ref{thm:SWNP}, which shows that while selecting the optimal accuracy (or noise) parameter is extremely straightforward, selecting the optimal algorithmic ordering over items can be \NP-hard.  Finding the optimal arrangement of items can be reduced to the problem of finding the maximum independent set, which is known to be \NP-complete. 
In addition, given an optimal arrangement, minimizing noise is always optimal.
However, minimizing noise is not always better for any arrangement -- especially if the arrangement is highly (equivalently, maximizing accuracy) misaligned with the ground-truth arrangements of the majority of the population.

\begin{restatable}[Computational hardness]{theorem}{thmSWNP}
\label{thm:SWNP} 
The expected social welfare is maximized when the algorithm's distribution is noiseless.
However, it remains \NP-hard to find $\ar^*$ that maximizes the expected social welfare, even in the \emph{top-item recovery} setting.
\end{restatable}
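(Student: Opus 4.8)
The plan is to treat the two claims separately, since they rest on very different ideas: the noiseless-optimality part is a one-line convexity argument, whereas the hardness part is a reduction from maximum independent set.

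For noiseless optimality, the key observation is that the human's behavior depends only on the \emph{subset} the algorithm presents, not on how the algorithm produced it. So for each size-$k$ subset $S$ let $W(S)$ denote the welfare obtained when $S$ is presented deterministically, i.e. $\sum_i p_i$ times the expected utility of human $i$ choosing from $S$ under her own noisy ranking. Any algorithm — a Mallows model with any central ranking $\ar^*$ and any accuracy $\phi_a$ — induces some distribution $\{q_S\}$ over the $k$-subsets (the top-$k$ sets of its sampled rankings), and by the law of total expectation its expected social welfare equals $\sum_S q_S\,W(S)$, a convex combination of the $W(S)$'s. Hence it is at most $\max_S W(S)$, and this maximum is attained by the noiseless algorithm whose ground-truth ranking places the maximizing subset $S^\star$ on top (so $q_{S^\star}=1$). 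This shows that some noiseless algorithm is globally optimal. Note this does \emph{not} claim noiseless is best for a \emph{fixed} $\ar^*$ (the top-$k$ of a given $\ar^*$ need not be the best subset) — only that the global optimum can be taken noiseless, which is all the theorem asserts.

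For \NP-hardness, the previous paragraph reduces the task to showing that choosing the size-$k$ subset $S$ maximizing $W(S)$ is hard, even in top-item recovery, where $W(S)=\tfrac1N\sum_v f_v(S)$ and $f_v(S)$ is the probability that human $v$ recovers her unique valuable item from $S$. I would reduce from maximum independent set on bounded-degree (e.g. cubic) graphs, which is still \NP-hard. Given $G=(V,E)$ with $|V|=N$ and target size $k$, create one item $x_v$ and one uniform-probability human type $H_v$ per vertex, with $H_v$ valuing only $x_v$; set $H_v$'s Mallows central ranking to place $x_v$ first, the neighbor items $\{x_u:u\in N(v)\}$ immediately after (ranks $2,\dots,\deg(v)+1$), and all remaining items far below. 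Using the Mallows pairwise marginals, a competitor at central distance $d$ from $x_v$ ``steals'' the choice with probability $\Theta(e^{-d\phi_h})$, so a neighbor in $S$ (distance at most $3$, by bounded degree) inflicts a first-order loss $\Theta(e^{-3\phi_h})$ on $f_v(S)$, while the \emph{combined} effect of all non-neighbors in $S$ (distance at least $4$) is only $O(k\,e^{-4\phi_h})$.

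The crux is to convert this separation into a clean decision threshold. Choosing $\phi_h$ so that $e^{-\phi_h}$ is a small inverse polynomial in $k$ (a rational, keeping the instance polynomial and the welfare values efficiently computable) makes the neighbor-induced loss dominate the accumulated non-neighbor noise by a factor $\Omega(k)$. Consequently $W(S)$ is maximized exactly when $S$ is $k$ vertex items forming an independent set — every served human then faces only far, weak competitors — whereas any $S$ containing an edge, or wasting a slot on an item nobody values, is strictly worse; a threshold placed in the resulting gap decides whether $G$ has an independent set of size $k$. The main obstacle, and where the real work lies, is establishing the Mallows recovery estimates precisely enough and \emph{uniformly over all} subsets $S$ — controlling the joint event that $x_v$ beats every other item of $S$ via first-event and union bounds — so that the welfare gap provably separates the two cases; the bounded-degree restriction and the inverse-polynomial choice of $e^{-\phi_h}$ are exactly what make these estimates go through while keeping the whole reduction polynomial.
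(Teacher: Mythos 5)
Your proposal is correct, and both halves follow the same skeleton as the paper's proof: the noiseless-optimality claim via the law of total expectation (welfare is a convex combination $\sum_S q_S W(S)$ over presented subsets, hence maximized by deterministically presenting the best subset — and your caveat that this does not hold per fixed $\ar^*$ matches an Observation the paper makes explicitly), and the hardness claim via a reduction from Independent Set with one item and one top-item-recovery human type per vertex, neighbors placed immediately after the human's own item and non-neighbors far away. Where you diverge is the noise regime used to create the welfare gap, and the divergence is substantive. The paper drives $\phi_h$ \emph{down} to $\Theta(1/n)$ and pads the instance with $m-n$ dummy items inserted between neighbors and non-neighbors, invoking \Cref{lem:set_constant} so that near pairs (index gap $< n$) are decided essentially by a coin flip (probability at most $\tfrac12+\epsilon$) while far pairs (gap $> m-n$) are near-deterministic (at least $1-\epsilon$); this yields a \emph{constant} per-human gap between the YES case ($(1-\epsilon)^{k-1}k/n$) and the NO case ($\le (k-1+2\epsilon)/n$), works on general graphs, and needs no fine marginal asymptotics. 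You instead drive $\phi_h$ \emph{up} on cubic graphs, with no dummy items, so that an adjacent competitor steals with probability $\Theta(e^{-3\phi_h})$ while the aggregate non-neighbor noise is $O(k e^{-4\phi_h})$, and you separate the two by taking $e^{-\phi_h}$ inverse-polynomial in $k$ (e.g.\ $e^{-\phi_h}\le 1/(Ck^2)$ makes the numbers work, with the pairwise marginal of \Cref{lem:preferonetoother} giving inversion probability $\Theta(d e^{-d\phi_h})$, so your sketched estimates do go through via the first-event upper bound $f_u(S)\le \myP[x_u \succ_\rho x_v]$ and a union-bound lower bound). What each buys: the paper's small-$\phi$/padding regime gives a cleaner constant-gap threshold at the cost of the auxiliary lemma and a blown-up item set ($m = 1/\phi^2 + n$); your large-$\phi$ regime keeps the instance at $N$ items and avoids padding, at the cost of requiring bounded degree, an inverse-polynomial (rather than constant) gap, and sharper uniform control of the Mallows marginals — exactly the work you flagged as remaining, which is routine but does need the rational-parameter bookkeeping (e.g.\ choosing $e^{-\phi_h}$ rational) to place the threshold.
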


To complement the hardness result, we show that the welfare-maximizing algorithm can be exactly characterized as the optimal solution to a mixed-integer program (MIP) and solved efficiently in practice, even for $m=20$ items and heterogeneous populations of humans with $720$ types of humans.
The full proof and detailed empirical results can be found at \Cref{app:strategic_algo_welfare_max} and \Cref{app:exper_perf_of_mip}. 
Meanwhile, although finding the welfare-maximizing algorithm in the top-item recovery setting is computationally hard, one can still efficiently identify the welfare-maximizing algorithm among those that \emph{achieve uplift}, as we will show in \Cref{lem:top_recovery_comp}.

Finally, we briefly discuss the noiselessness result for the optimal arrangement in Theorem \ref{thm:SWNP}. This comes by showing that the objective can be rewritten as the sum over all possible sets of $k$ items that could be presented to the population of humans: of all of those sets, one maximizes utilitarian welfare, and thus deterministically returning that set would maximize welfare as in Definition \ref{def:socialwelfare}.

\subsection{Uplift}\label{subsec:compuplift}
However, a social welfare-maximizing strategy may not always be desired as it can end up hurting some users (i.e, they make worse decisions than they would acting by themselves).
This is especially true if a certain type of human's ground-truth ranking $\hr_i^*$ does not align well with most other humans' ground-truth rankings, and so to maximize expected social welfare, the algorithm may optimize for one type of human preference at the expense of other types of humans.
As a result, these types of humans may receive less accurate recommendations from the algorithm in the collaboration: they may even obtain lower utility than they would by themselves. A social welfare desiderata relating to this objective is \emph{uplift} across the population of people: when is it possible to design an algorithm such that every type of human benefits, relative to the utility they would obtain from solving the problem by themselves? 

\compuplift*

As a warm-up, we first consider the special case where the human is aligned with the algorithm on every item for which it has positive utility: the following lemma shows that uplift (e.g., strict benefit for the human) is guaranteed. 

\begin{restatable}{lemma}{lemAligedTopItems}
\label{lem:alignedtopitems}
Suppose human of type $i$ only has positive values for the top $T$ items, i.e., $v_{i,j} > 0$ for $j \le T$ and $v_{i, j} =0$ for $j > T$. 
Then if $\ar^*(j) = \hr^*(j)$ for any $j\le T$, $\phi_a = \phi_h$, and $1< k <m$, uplift is achieved.
\end{restatable}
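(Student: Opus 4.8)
The plan is to reduce the claim to a statement about how shrinking the human's choice set affects her chance of landing on high-value items, and then to exploit the alignment on the top $T$ items together with the shared accuracy parameter. First I would relabel items by the human's ranking, so that $\hr^*=(x_1,\dots,x_m)$ and $v_1\ge\cdots\ge v_T>0=v_{T+1}=\cdots=v_m$, and write $S$ for the algorithm's presented set (its top-$k$ items under $\ar\sim\D_a$). The first key observation is that the human acting alone is exactly collaboration with $k=m$: when $S=M$ the human's pick $x_C$ is the $\hr$-best item of $M$, which is $x_H=\hr[1]$. So uplift is precisely the statement that shrinking the presented set from $M$ to a size-$k$ set drawn from the aligned, equally-noisy algorithm strictly increases expected utility.

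Second, I would pass from expected utility to a family of threshold events via Abel summation: writing $T_j=\{x_1,\dots,x_j\}$ and $v_{T+1}=0$, one has $\myE[u(X)]=\sum_{j=1}^{T}(v_j-v_{j+1})\myP[X\in T_j]$ for $X\in\{x_C,x_H\}$. Since each coefficient $v_j-v_{j+1}$ is nonnegative (and $v_T-v_{T+1}=v_T>0$), weak uplift follows once I show $\myP[x_C\in T_j]\ge\myP[x_H\in T_j]$ for every $j\le T$, and strict uplift follows as soon as one such inequality is strict at a threshold with positive coefficient. This reduces the problem to a sequence of \enquote{recover a top-$j$ item} comparisons, i.e.\ a generalization of the top-item-recovery analysis to a good set $G=T_j$ against a bad set $M\setminus T_j$.

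Third, for a fixed threshold I would condition on the algorithm's set $S$. The event $\{x_C\in T_j\}$ says the $\hr$-best item of $S$ lies in $G$, while $\{x_H\in T_j\}$ says the $\hr$-best item of $M$ lies in $G$. Passing from $M$ to $S$ removes some \emph{blockers} (bad items, whose removal can only help a good item reach the top) and possibly some \emph{winners} (good items, whose removal can only hurt). By inversion-monotonicity, conditioned on $S\supseteq G$ the comparison is favorable: only bad items have been deleted, so $\myP_{\hr}[\hr\text{-best of }S\in G]\ge\myP_{\hr}[\hr\text{-best of }M\in G]$, strictly because $k<m$ forces at least one bad item to be dropped. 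The only adverse contribution comes from realizations where the algorithm drops a top-$j$ item, i.e.\ $S\not\supseteq G$. This is exactly where the hypotheses are used: because $j\le T$ and $\ar^*$ agrees with $\hr^*$ on the top $T$ positions while sharing the human's accuracy $\phi_a=\phi_h$, the good items are precisely the algorithm's own top items, so $\myP[S\not\supseteq G]$ is controlled and the algorithm's rare \enquote{drop a good item} mistakes can be paired against the human's symmetric mistakes.

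I expect this last step---showing that the benefit from discarding bad items strictly outweighs the harm from the occasional dropped good item---to be the main obstacle, since it is the only place the sign of the comparison is in doubt. My plan is a coupling/monotonicity argument on the algorithm's randomness: decompose $\myP[x_C\in T_j]-\myP[x_H\in T_j]$ into the nonnegative gain on $\{S\supseteq G\}$ and the nonpositive loss on $\{S\not\supseteq G\}$, and use inversion-monotonicity together with $\phi_a=\phi_h$ and top-$T$ alignment to show the gain dominates; the three-item instance, where the net gain in probability is proportional to $e^{-2\phi}(1-e^{-\phi})>0$, is the prototype for this cancellation. An alternative route to the same estimate, avoiding the threshold decomposition, is to induct on $k$ from $m$ downward, coupling the top-$k$ and top-$(k+1)$ sets from a single draw of $\ar$: removing the $(k+1)$-st algorithmic item changes the human's choice only when that item is simultaneously her $\hr$-favorite among those presented and the algorithm's lowest-ranked presented item, and conditioning on this disagreement is exactly what biases the swap toward improving utility. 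Either way, the crux is the conditional-positivity estimate enabled by alignment and equal accuracy.
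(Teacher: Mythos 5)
Your reduction via Abel summation to the prefix events $\myP[x_C \in T_j] \ge \myP[x_H \in T_j]$ for $j \le T$ is exactly the paper's skeleton (the paper phrases the conversion back to utilities as Karamata's inequality, applied to its Proposition asserting precisely this prefix dominance). But the step you yourself flag as the crux is where the proposal has a genuine gap, and the plan you sketch for it would not close it. Your gain/loss decomposition compares the conditional success probability on $\{S \supseteq G\}$ and $\{S \not\supseteq G\}$ against the \emph{unconditional} solo baseline $\myP[x_H \in T_j]$; these pieces do not pair up, and you give no quantitative bound showing the gain dominates the loss. The paper's missing idea is to avoid any cross-set cancellation entirely: write the solo human's probability by conditioning on \emph{her own} top-$k$ prefix set, which under $\phi_a=\phi_h$ and top-$T$ alignment is distributed like the algorithm's presented set (up to a relabeling fixing the top-$T$ items, using the closed form of \Cref{lem:MallowsFirstk}, whose set probability depends only on the items' index sum in the central ranking). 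Conditioned on her prefix set being $\mS$, the restriction of her ranking to $\mS$ is again Mallows, so her probability of landing in the top $i$ is \emph{exactly} $Z_s(\phi)/Z_k(\phi)$ with $s=\abs{\mS \cap \{x_1,\dots,x_i\}}$; meanwhile, conditioned on the algorithm presenting the matched set, the collaboration probability is \emph{at least} $Z_s(\phi)/Z_k(\phi)$ by \Cref{lem:general_lower_bound_for_comparison}. Dominance is therefore termwise, set by set, and the asymmetric costs/benefits you were trying to balance never need to be weighed against each other. Note also that this computation is genuinely Mallows-specific; your occasional appeals to bare inversion-monotonicity would not deliver the exact conditional $Z_s(\phi)/Z_k(\phi)$.

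Your alternative route --- downward induction on $k$ with a per-step claim that shrinking the presented set is favorable --- is provably unsound as stated. At $k=1$ the collaboration utility equals the solo utility: the Mallows first-place probability of the item at central position $j$ is $\exp(-\phi(j-1))/Z_m(\phi)$ in both rankings for $j \le T$ (by top-$T$ alignment and $\phi_a=\phi_h$), and only those items have value. So if each decrement of $k$ from $m$ down to $1$ were weakly favorable, the chain $U_1 \ge U_2 \ge \cdots \ge U_m$ together with $U_1 = U_m = \myE[u(x_H)]$ would force equality throughout, contradicting the strict uplift the lemma asserts for $1 < k < m$. Hence per-step monotonicity in $k$ must fail somewhere (indeed utility strictly \emph{increases} from $k=1$ to $k=2$), and the conditioning heuristic you offer for the single-item removal does not determine the sign: the removed item being her $\hr$-favorite among those presented may be a noisy mistake (a zero-value item), in which case removal helps, or her true favorite, in which case it hurts.
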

Note that \Cref{lem:alignedtopitems} strictly generalizes results in \citep{donahue2024listsbetteronebenefits}, which provided these results only for when exactly $2$ items are presented and the human is in the top item recovery case (only has value for her top item).
Moreover, \Cref{lem:alignedtopitems} does not put any constraint on the zero-valued items, and the algorithm and human can be misaligned on these items.
The proof can be found at \Cref{app:uplift_and_comp}.

\subsubsection{Tension between Social Welfare and Uplift}
In the above case, a social welfare maximizing arrangement implies uplift (if feasible) since there is only one type of human.
However, when it comes to multiple misaligned humans, forcing welfare-maximization could hurt some type of human, e.g., when there are two types of humans with population shares of $0.99$ and $0.01$ and completely different preferences, maximizing welfare makes the algorithm fully aligned with the preferences of the majority type, thereby hurting the minority group.

Secondly, uplift can be viewed as a type of fairness notion: it requires that each type of human receive some benefit from using the algorithm. 
In contrast, social welfare is a well-accepted metric for evaluating the algorithmic efficiency.
In \Cref{app:tension_welfare_objectives}, we further describe numerical experiments on connections between the two objectives.
We study the change in optimal social welfare after enforcing the uplift constraint.
Usually, enforcing uplift will reduce the optimal social welfare, especially when the distribution of humans is highly imbalanced, e.g., certain types of humans dominate the population.
In addition, in \Cref{sec:empirical}, we compare the two optimal algorithmic solutions, each optimizing a single welfare objective, on a real-world dataset.
We also find that solely maximizing utilitarian welfare could cause extreme unfairness. 

\subsubsection{Small Noise Facilitates Uplift}
Given the tension between the two objectives, small noise may be an effective method to achieve uplift.
Considering the same setting as the last example, it seems likely that a small degree of randomization could sharply increase welfare for the less populous type of human.
\Cref{ex:positivetemp} shows this more formally, and the proof can be found at \Cref{app:noiseness_help_uplift}.

\begin{restatable}{lemma}{lemPostiveTemp}
\label{ex:positivetemp}
There exist settings where uplift can occur at lower accuracy (higher noise), but fails at higher accuracy. 
\end{restatable}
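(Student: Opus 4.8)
The plan is to prove \Cref{ex:positivetemp} by constructing an explicit finite instance---a small number of items $m$, a curation size $k$, and two misaligned human types with population shares---and then directly comparing the uplift condition $\myE[u_i(x_C^i)] > \myE[u_i(x_H^i)]$ for \emph{both} types at two different algorithm accuracy levels $\phi_a$. The key conceptual idea, foreshadowed by the surrounding discussion, is that a \emph{noiseless} (high-accuracy) algorithm deterministically returns a single set of $k$ items, and that fixed set may be well-suited to one human type but actively harm the minority type---so that one type fails the strict uplift inequality. Introducing noise causes the algorithm to present a \emph{distribution} over sets, occasionally surfacing items favored by the disadvantaged type, which can push that type strictly above its solo baseline while leaving the majority type still above its own baseline.

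First I would fix the smallest workable configuration: two human types with ground-truth rankings that are misaligned on the items that matter (e.g., their top-valued items are different), with $k$ strictly between $1$ and $m$ so that curation is nontrivial. I would choose the utility vectors $v_{i,j}$ and the population shares $p_1, p_2$ so that at $\phi_a \to \infty$ (noiseless) the welfare-maximizing or otherwise natural deterministic set aligns with the majority type and leaves the minority type with $x_C$ no better (in fact strictly worse in expectation) than $x_H$, violating uplift. Then I would pick a moderate finite $\phi_a$ (lower accuracy, higher noise) and compute, under the Mallows model for both the algorithm and each human, the probability that each type selects each item under collaboration. Because every distribution here is inversion-monotonic and label-invariant, these selection probabilities are explicit finite sums over the $\binom{m}{k}$ presentable sets, each weighted by the algorithm's Mallows probability of presenting that set times the human's Mallows probability of picking a given element as her favorite within it.

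The main computational step is then verifying the two strict inequalities simultaneously at the noisy parameter while exhibiting a failure at the high-accuracy limit. Concretely, I would show: at high accuracy, $\myE[u_2(x_C^2)] \le \myE[u_2(x_H^2)]$ (minority type not uplifted), whereas at the chosen finite $\phi_a$ we have both $\myE[u_1(x_C^1)] > \myE[u_1(x_H^1)]$ and $\myE[u_2(x_C^2)] > \myE[u_2(x_H^2)]$. For the majority type this should follow because the noisy algorithm still presents its favored set with high enough probability to beat the solo baseline; the delicate direction is the minority type, where I must confirm that the extra probability mass the noise places on minority-favorable sets outweighs the mass it diverts away from any set that was already helping type $2$.

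The hard part will be the minority-type inequality at finite $\phi_a$: I need the noise to help type $2$ \emph{strictly} without simultaneously dragging the majority type $1$ below its own baseline, and both effects scale with $\phi_a$ in opposing ways. I expect to control this by making the two types' low-value regions overlap (so that noise which ``hurts'' the majority only reshuffles items type $1$ is near-indifferent about) while keeping a sharp value gap at the top for each type; this decouples the two inequalities enough that a single intermediate $\phi_a$ satisfies both. Since the statement is purely existential, a single fully worked numerical instance---verified by direct evaluation of the finite Mallows selection probabilities at the two accuracy settings---suffices, and I would present it as such rather than attempting a general parametric characterization.
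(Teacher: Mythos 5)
Your proposal has the right general mechanism (a deterministic curation set starves some type, while noise redistributes presentation probability), but it contains a genuine gap in the quantifier over strategies. The lemma, as the paper uses it (see the introduction's claim that there are ``cases where uplift can \emph{only} be achieved by a noisy algorithm''), requires that \emph{every} high-accuracy strategy fails uplift, not merely one adversarially or ``naturally'' chosen arrangement. With only two human types and $1 < k < m$, this cannot be made to work: a noiseless algorithm whose top-$k$ set contains \emph{both} types' top-valued items gives each type a strictly better-than-solo chance of recovering her favorite (she then picks her top item from a short list with probability at least $1/2$, versus roughly $1/m$ alone when she is noisy), so uplift \emph{is} achievable at infinite accuracy in your instance. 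Showing that the welfare-maximizing deterministic set hurts the minority only establishes the welfare-versus-uplift tension, which the paper treats separately; it does not establish failure of uplift at high accuracy. The fix, which is exactly what the paper's proof does, is to make the number of distinct top-valued items exceed $k$: the paper takes \emph{three} types with pairwise distinct top items ($\hr_1^* = (x_1, x_2, \ldots)$, $\hr_2^* = (x_2, x_3, \ldots)$, $\hr_3^* = (x_3, x_1, \ldots)$), top-item recovery values, and $k=2$, so that any deterministic pair necessarily excludes some type's only valued item, giving that type zero collaboration utility against a strictly positive solo baseline --- a failure that holds for every noiseless arrangement.

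A secondary difference worth noting: your plan to verify two strict inequalities by direct numerical evaluation over all $\binom{m}{k}$ sets at a carefully balanced finite $\phi_a$ is unnecessary and fragile. The paper sidesteps all of this by taking $\phi_h$ near zero and $m$ large, so each type's solo success probability is about $1/m$; then for the noisy algorithm with $\ar^* = (x_1, x_2, x_3, \ldots)$ one only needs the crude bound that each of $x_1, x_2, x_3$ appears in the presented pair with probability at least $1/8$, and conditional on inclusion the relevant human picks her top item with probability at least $1/2$, giving collaboration utility at least $1/16 > 1/m$ simultaneously for all three types. This collapses your ``delicate direction'' (the opposing scalings in $\phi_a$) entirely: no fine-tuned intermediate accuracy is needed, because the solo baseline has been made negligible rather than the noisy gain made large.
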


We remark that \Cref{ex:positivetemp}'s message is in line with other works on fairness (e.g., \citep{jain2024scarce, singh2018fairness}) that deem randomization to be helpful for fairness. One difference in our setting is that we expect that in many cases there will be relatively large benefits in terms of uplift from relatively small levels of noise (and thus, relatively small impact on the utilitarian welfare objective). This is because small amounts of noise that result in occasionally including a new item has \emph{asymmetric costs and benefits}: it has large benefits for users who stronger prefer that item to those that are already included because they will be able to select it with relatively high probability. However, it has small negative impacts on users who have low value for that item, since on the occasions when it is included, they will have a high probability of correctly ignoring that item as irrelevant. 

\subsubsection{General Results for Uplift}
\label{subsubsec:general_results_for_uplift}
Motivated by the above examples, we consider how to design an algorithm that meets the objective of uplift.
We begin by noting that there are cases where achieving uplift is impossible. To build intuition, consider a scenario in which humans are highly misaligned in their top preferences. In such situations, an uplift strategy is only feasible if all of their most preferred items can be accommodated within the limited window of size $k$. This naturally motivates characterizing instances where uplift is achievable. To this end, we first show that it remains \NP-hard to determine the existence of a strategy (which may be noisy) that achieves uplift.
However, we note that when the size of presented items $k$ is constant, one can verify whether a given strategy achieves uplift. Thus, for a constant $k$, the problem is in \NP. We note that in real-life settings, we expect $k$ to be relatively small, given that humans would likely be unable to process very large subsets of items. 
The proof (can be found at \Cref{app:np_hard_for_deciding_uplift_existence}) is based on a reduction from the vertex cover problem. 
\begin{restatable}{theorem}{thmCompGeneral}
\label{thm:comp_general}
It is \NP-hard to determine whether there exists a strategy satisfying uplift $(\ar^*, \phi_a)$.
Further, it is \NP-complete when $k$ is given as a constant (which means whether a given strategy $(\ar^*, \phi_a)$ achieves uplift can be verified in polynomial time).
\end{restatable}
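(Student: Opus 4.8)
The plan is to establish two pieces: membership in \NP\ when $k$ is a fixed constant, and \NP-hardness of the existence question via a reduction from \textsc{Vertex Cover}; together these give \NP-completeness in the constant-$k$ regime exactly along the lines the parenthetical suggests. For membership, I would use the fact that when $k$ is constant there are only $\binom{m}{k}=O(m^k)$ candidate sets the algorithm can present. Taking a candidate strategy $(\ar^*,\phi_a)$ as the certificate, the marginal $\myP[\ars{k}=S]$ under the Mallows model can be evaluated in polynomial time for each $S$, so both $\myE[u_i(x_C^i)]=\sum_{S}\myP[\ars{k}=S]\cdot\myE[u_i(\text{pick from }S)]$ and $\myE[u_i(x_H^i)]$ are polynomial-time computable, and checking the strict inequality of \Cref{def:compuplift} for every type verifies uplift. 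Restricting to rational accuracy parameters of polynomially bounded bit-length keeps the certificate polynomial.

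For hardness, given a \textsc{Vertex Cover} instance $(G=(V,E),t)$ I would create one item $x_v$ per vertex, set $k=t$, and introduce one human type $h_e$ per edge $e=(u,v)$, where $h_e$ assigns equal positive value to its endpoints $x_u,x_v$ and value $0$ to every other item, and all types share a common accuracy $\phi_h$. The design goal is that $h_e$ attains uplift essentially iff its edge is \emph{covered}, i.e.\ at least one of $x_u,x_v$ is present. Writing $U_{\text{alone}}$ for its alone-utility and $q_1,q_2$ for the probabilities that $h_e$ correctly selects a valued item from a $k$-set containing one, resp.\ two, of its endpoints, its collaboration utility lies between $q_1\,p_e$ and $q_2\,p_e$, where $p_e$ is the probability the presented set covers $e$. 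Thus uplift forces $p_e>U_{\text{alone}}/q_2=:\tau$ and is guaranteed once $p_e=1$, provided $q_1>U_{\text{alone}}$.

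The forward direction is then immediate: a size-$t$ vertex cover $S$ induces a \emph{noiseless} algorithm whose top-$k$ set is $S$, covering every edge with probability $1$, so $q_1>U_{\text{alone}}$ yields uplift for all types (mirroring \Cref{lem:alignedtopitems}). The reverse direction is where noisy strategies must be controlled, and the key idea is a union bound over the support of the algorithm's distribution. Since a finite-$\phi_a$ Mallows model puts positive mass on every $k$-subset, any uplift strategy satisfies $p_e>\tau$ for all $e$; calibrating values so that $\tau>1-1/|E|$ gives $\sum_{e}(1-p_e)<|E|(1-\tau)<1$, hence the sampled top-$k$ set covers \emph{all} edges with strictly positive probability, so some $k$-set in the support is a vertex cover of size $k=t$. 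This makes the existence of a (possibly noisy) uplift strategy equivalent to a size-$t$ vertex cover, and I would note the reduction can be arranged so that $k$ is treated as a fixed parameter in the verifier, matching the constant-$k$ completeness claim.

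The main obstacle is the calibration: showing that the admissible window for $U_{\text{alone}}$, namely $\bigl(q_2(1-1/|E|),\,q_1\bigr)$, is nonempty and realizable with polynomially bounded rational values. This amounts to proving $q_1/q_2>1-1/|E|$, i.e.\ that a sufficiently accurate human selects a valued item from a $k$-set almost as reliably whether one or two endpoints are present. I would obtain this by taking $\phi_h$ large enough that $q_1,q_2,U_{\text{alone}}$ all lie within $1/|E|$ of $1$, lower-bounding the relevant Mallows marginals via inversion-monotonicity, and then tuning the valued items' ground-truth positions (equivalently $m$ relative to $k$) to place $U_{\text{alone}}$ strictly inside the window. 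Verifying these marginal estimates with polynomial-bit parameters, and confirming the descending-value and positivity conventions are respected, is the most delicate part of the argument.
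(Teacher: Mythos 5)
Your proposal is correct in its overall architecture and shares the paper's skeleton — reduction from \textsc{Vertex Cover} with one item per vertex, one human type per edge whose top two (equally valued) items are the edge's endpoints, a noiseless algorithm presenting the cover for the forward direction, and \NP-membership for constant $k$ via enumerating the $O(m^k)$ presentable sets and evaluating Mallows marginals in closed form — but your reverse direction takes a genuinely different route. The paper argues by contradiction through two-sided bounds on the algorithm's noise: on a \textsc{No} instance any deterministic top-$k$ set leaves some edge-human at zero utility whenever it is presented, forcing $\exp(-\phi_a) > 1-\epsilon_1^{1/k}$ (the algorithm must be ``very random''), while guaranteeing every human's favorite item appears with high probability forces $\exp(-\phi_a) < 2\epsilon_1^{1/(2k)}$ (it must be ``very robust''), and these are made incompatible by the choice of $\epsilon_1$. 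You instead calibrate so that each type's alone-utility exceeds $1-1/|E|$, deduce $p_e > 1-1/|E|$ from uplift for every edge, and apply a union bound to conclude that some $k$-set in the support of \emph{any} uplift strategy (noisy or not) covers all edges simultaneously — directly extracting a vertex cover. This is arguably cleaner and handles noiseless and noisy strategies uniformly, at the price of a more demanding calibration: you need $U_{\text{alone}} > 1-1/|E|$, so $\phi_h$ must grow like $\log|E|$, whereas the paper works with a fixed constant $\epsilon_1 < 1/4$. Two points need firming up. First, your sandwich $q_1 p_e \le \myE[u_e(x_C^e)] \le q_2 p_e$ is not well-defined as stated, since the conditional success probability varies with the presented set; you should take $q_1$ as a minimum over covering sets (and may simply bound $q_2 \le 1$, which is all the union bound needs). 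Second, the ``delicate calibration'' you defer — ensuring $q_1 > U_{\text{alone}}$ while $U_{\text{alone}} > 1-1/|E|$ — requires an explicit construction: the paper resolves exactly this by inserting $B$ dummy items between each human's two endpoints and all other vertex items, so that any competing vertex item sits at index gap at least $B$ in her ranking, making each pairwise failure probability at most $\epsilon_3$ with $(1-\epsilon_3)^n$ beating the alone-utility (via \Cref{lem:set_constant} and the bound $U_{\text{alone}} = Z_2(\phi_h)/Z_m(\phi_h)$). Your phrase ``tuning the valued items' ground-truth positions'' gestures at this padding but does not construct it; with the dummy-item device inserted, the window $\bigl(1-1/|E|,\, q_1\bigr)$ for $U_{\text{alone}}$ is indeed nonempty with polynomially bounded parameters, since $1-U_{\text{alone}} \ge e^{-2\phi_h}(1-e^{-\phi_h})$ while $1-q_1 \le n\cdot\mathrm{pairfail}(B)$ decays geometrically in $B$. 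With those repairs your argument goes through and constitutes a valid, somewhat more direct alternative to the paper's proof.
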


Next, we provide characterizations of what an uplift strategy looks like in some special cases, such as top item recovery (where we show that the optima algorithm is a noiseless algorithm and presents only items that at least one human values), and where the humans share the same set of best items. 

For the top item recovery setting, consider a simple strategy that ranks all these top items first in the algorithm's ranking and sets the noise as zero (equivalently, set $\phi_a$ as $+\infty$).
\Cref{lem:top_recovery_comp} shows that the simple strategy achieves uplift. 
We note that maximizing social welfare remains \NP-hard even in this special case (using the same reduction of~\Cref{thm:SWNP}).

\begin{restatable}{lemma}{lemTopRecoveryCom}
\label{lem:top_recovery_comp}
In the top item recovery setting, when the ground-truth rankings of humans satisfy $\abs{\mathcal{M}_0} \le m-1$ where $\mathcal{M}_0$ is the set of distinct items that some human has positive value for, then always presenting $\mathcal{M}_0$ to the human achieves uplift and also maximizes the expected utilitarian social welfare among all the noiseless algorithms that achieve uplift.
\end{restatable}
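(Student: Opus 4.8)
The plan is to reduce both claims to a single monotonicity property: how the probability that the human correctly identifies her favorite changes as the presented set shrinks. Writing $t_i := \hr_i^*(1)$ for the unique item type $i$ values (with value $v_{i,1}$), in the top-item recovery setting the expected collaborative utility for any presented set $S \ni t_i$ is $\myE[u_i(x_C^i)] = v_{i,1}\cdot q_i(S)$, where $q_i(S) := \myP[t_i \ogt{\hr_i} x \text{ for all } x \in S\setminus\{t_i\}]$ is the probability that $t_i$ is ranked first among $S$ under the human's noisy ranking; her solo utility $\myE[u_i(x_H^i)]$ is the same expression with $S = M$. Thus the whole lemma is governed by the behavior of $q_i(S)$.

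The key step I would prove is that $q_i$ is strictly antitone in the presented set. For $t_i \in S \subseteq S'$, the event ``$t_i$ beats every item of $S'\setminus\{t_i\}$'' is contained in ``$t_i$ beats every item of $S\setminus\{t_i\}$'', so $q_i(S') \le q_i(S)$; and whenever $S \subsetneq S'$ the difference event---$t_i$ beats all of $S$ but loses to some item of $S'\setminus S$---has strictly positive probability, because the human's Mallows distribution with finite positive $\phi_h$ has full support. Hence $q_i(S') < q_i(S)$. This event-inclusion argument is the only place the noise model enters, and it is the crux of both halves.

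For uplift I would apply this with $S = \mathcal{M}_0$ and $S' = M$: every $t_i$ lies in $\mathcal{M}_0$, and $\abs{\mathcal{M}_0} \le m-1$ forces $\mathcal{M}_0 \subsetneq M$, so strict antitonicity yields $\myE[u_i(x_C^i)] = v_{i,1} q_i(\mathcal{M}_0) > v_{i,1} q_i(M) = \myE[u_i(x_H^i)]$ for every type $i$, which is exactly uplift. For the welfare claim I would first observe that any noiseless uplift algorithm presents a deterministic set $S$ with $\mathcal{M}_0 \subseteq S \subsetneq M$: if some $t_i \notin S$ then type $i$ earns zero under collaboration yet strictly positive utility alone (full support), contradicting uplift and forcing $\mathcal{M}_0 \subseteq S$; and $S = M$ gives no improvement. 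Since the utilitarian welfare equals $\sum_i p_i v_{i,1} q_i(S)$ and antitonicity maximizes every term simultaneously over $\{S : \mathcal{M}_0 \subseteq S\}$ at the smallest such set $S = \mathcal{M}_0$ (which is itself uplift-feasible by the previous sentence), presenting $\mathcal{M}_0$ maximizes welfare among all noiseless uplift algorithms.

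The main obstacle is the strict antitonicity: the containment $q_i(S') \le q_i(S)$ is immediate, but the strictness requires exhibiting a positive-probability ranking in which $t_i$ dominates $\mathcal{M}_0$ yet is beaten by an excluded item, which is where the full support of the human's distribution (equivalently, a finite noise level) is essential---note that uplift genuinely fails for a noiseless human, so this hypothesis cannot be dropped. The remaining work is bookkeeping, and the convenient feature making it clean is that antitonicity aligns all per-type success probabilities with one another, so shrinking the window toward $\mathcal{M}_0$ helps every type at once and there is no trade-off to resolve within the uplift-feasible region.
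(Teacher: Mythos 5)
Your proposal is correct and takes essentially the same route as the paper's proof: the paper likewise argues that shrinking the presented set toward $\mathcal{M}_0$ (which contains every type's favorite item) raises each human's probability of selecting her top item relative to choosing from all of $M$, and that any noiseless uplift algorithm must include all of $\mathcal{M}_0$ (else some type earns zero) while excluding unvalued items only helps. Your event-inclusion argument with strictness via full support of the finite-$\phi_h$ Mallows distribution simply makes rigorous the monotonicity step that the paper states informally.
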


\section{Empirical Study}\label{sec:empirical}
In this section, we conduct an empirical study of human-AI collaboration with misaligned ground-truth rankings on a real-world dataset to answer the following question: \emph{how does human accuracy affect the tension between different welfare objectives?}

We use the sushi preference dataset~\citep{kamishima2003nantonac}, which consists of about 5k rankings of ten kinds of sushi, where we
denote the ten types of sushi by $x_1, \dots, x_{10}$. For computational efficiency, we consider only the partial rankings of the humans in our sushi dataset over the items $x_1, \ldots, x_5$, treating these partial rankings as their ground-truth preferences.
Each ranking is also associated with the fraction of the population in the dataset who had this preference over the types of sushi. A detailed breakdown of the most frequent sushi rankings is presented in \Cref{tab:tablesec6} (\Cref{app:tablesec6}).

Note that in this dataset, it is inherently impossible to distinguish between a) population-level heterogeneity in preferences, and b) user-level noise in reporting those preferences. To our knowledge, there is not a preference dataset that cleanly displays both a) and b). As such, in this section we make the assumption that users have reported their sushi preferences perfectly (no errors from b) and that the preference dataset simply reflects user heterogeneity (only difference is from a). In order to model user-level noise, we add noise (e.g., Mallows) on top of the ground-truth preferences within \citep{kamishima2003nantonac}. Furthermore, because this dataset does not include explicit human valuation scores for the sushi items, we assume item values within each ranking is determined by their \emph{Borda counts}. Formally, given a ranking $\pi$ over $m$ items $\{x_1, \ldots, x_m\}$, where $\pi(i)$ denotes the position of item $x_i$ (with $\pi(i) = 1$ being the most preferred), the Borda score assigned to item $x_i$ is
\[
B(x_i; \pi) = m - \pi(i).
\]
Thus, the top-ranked item receives a score of $m - 1$, the second-ranked item a score of $m - 2$, and so on, with the least-preferred item receiving a score of $0$. This provides a simple cardinal approximation of ordinal preferences, which we use as the value function in our experiments.

The objective is to design an algorithm that presents only three items from $x_1, \ldots, x_5$ to each human to maximize a chosen notion of welfare. We find this experiment especially natural for our setting: if a restaurant created a \enquote{menu} of options that is quite large ($>100$), a user would probably have a low probability of finding sushi she likes. However, if the restaurant creates a very small menu, then user-level heterogeneity in preferences might mean that some humans are hurt by the algorithm. In this section, we explore these trade-offs within the specific sushi preference dataset. We assume the algorithm is perfectly noiseless, so the menu always consists of the top three sushis in its ground-truth ranking.
To study the tension between different welfare objectives, we compare three different welfare objectives:
\begin{enumerate}[leftmargin=*]
\item Let $A_w$ denote the algorithm that maximizes \emph{utilitarian welfare}.
\item Let $A_m$ denote the \emph{majority ranking}, i.e., the ranking supported by the largest number of individuals.
\item Let $A_u$ denote the algorithm that maximizes \emph{uplift}, defined as the number of humans who benefit from collaboration.
\end{enumerate}
\begin{figure}[ht]
\begin{minipage}[t]{0.49\linewidth}
\centering
    \includegraphics[width=0.9\textwidth]{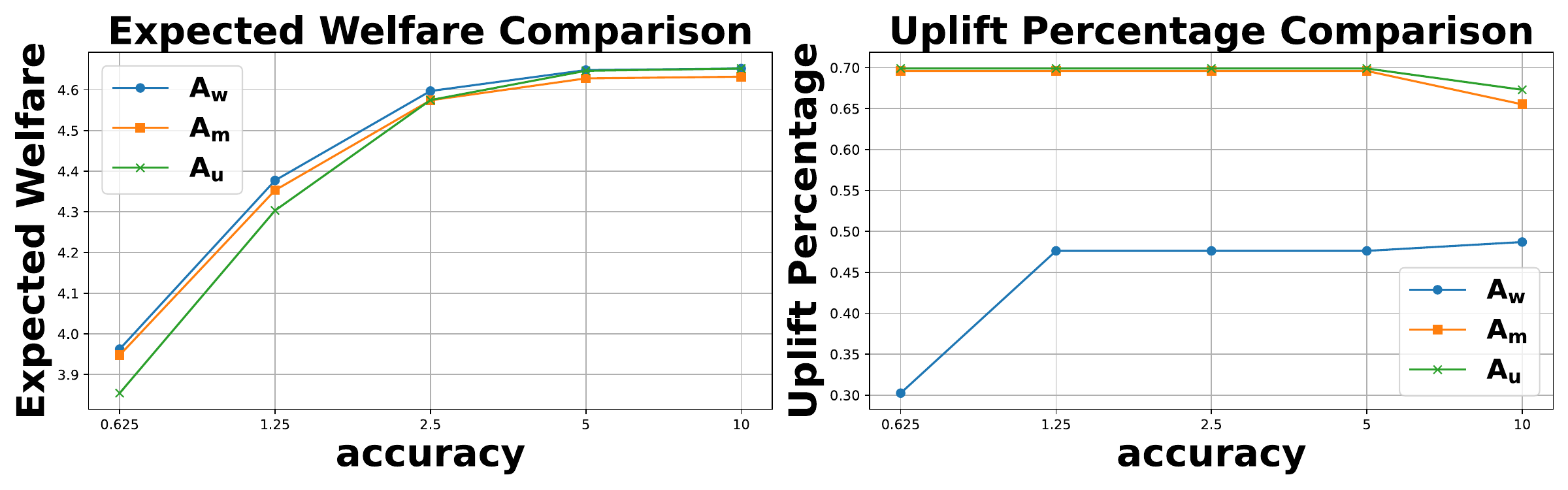}
    \end{minipage}
\hfill
    \begin{minipage}[t]{0.49\linewidth}
\centering    \includegraphics[width=0.9\textwidth]{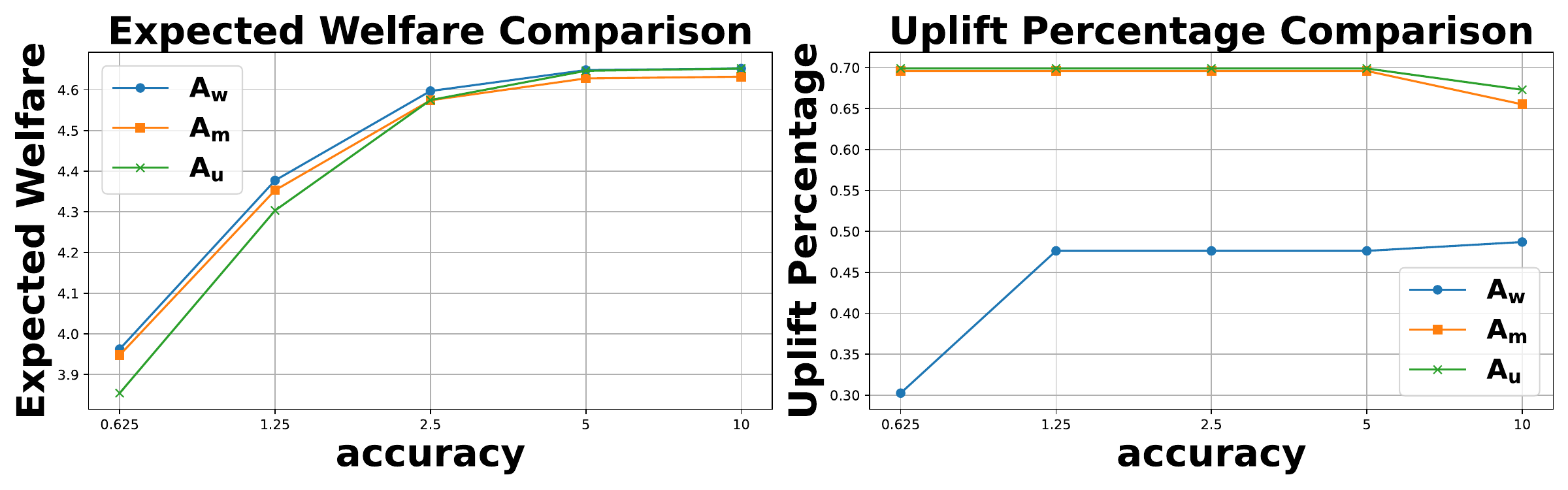}
\end{minipage}
        \caption{Utilitarian welfare and uplift achieved by the majority algorithm $A_m$, the welfare-maximizing algorithm $A_w$, and the uplift-maximizing algorithm $A_u$ across varying levels of human accuracy.}\label{fig:comparison_a_tb}
\end{figure}

In \Cref{fig:comparison_a_tb}, we plot the utilitarian welfare and uplift achieved by each of the algorithms across different levels of human accuracy. In the left plot, we observe that as humans become more accurate, their expected utilitarian welfare increases under both $A_m$ and $A_w$. However, this monotonic improvement does not hold for $A_u$: beyond a certain human accuracy threshold, the performance of this algorithm declines.

Interestingly, in the right plot, we observe that as humans become more accurate, after a certain threshold, all algorithms can guarantee positive uplift for a smaller fraction of individuals. This suggests that when human judgments are noisier, a larger proportion of people benefit from collaboration with a fixed algorithm. However, as humans become more accurate, no single algorithm remains universally beneficial. Moreover, we observe that the algorithm optimizing utilitarian welfare performs substantially worse than the other two in terms of uplift across most accuracy regimes. This suggests a trade-off between the uplift and utilitarian welfare objectives. In this dataset, the majority algorithm performs relatively well on both metrics.

\section{Discussion}
\label{sec:discuss}
In this paper, we studied a model of human-algorithm collaboration in a setting where the algorithm acts as a \emph{curator} of items for a \emph{noisy} human, and the algorithm may be \emph{misaligned} with the human's preferences over items. In particular, we define misalignment as a disagreement on the relative ordering over items by value: while such misalignment can be detrimental (because the algorithmic assistant returns items the human may find suboptimal), it can also be \emph{beneficial} (because the assistant makes \enquote{different mistakes} that the human may be more easily able to correct). Our goal in this paper was to analyze when the relative benefits of misalignment outweigh the costs. 

In Section \ref{sec:whobenefitsmore} we began by studying the preferences an \emph{individual} human would have over an algorithmic assistant, showing that certain types of misalignment are helpful. In Section \ref{sec:strathum}, we turned to the question of designing an algorithm to maximize welfare over a \emph{population} of humans who have diverse preferences. While in general it is not possible to simultaneously maximize the welfare of each human engaging with the algorithm, we study how to maximize some collective welfare objectives, like maximizing utilitarian welfare or ensuring that each human strictly benefits from using the algorithmic assistant, relative to what utility she could get by herself. In general, we show that exactly achieving either of these objectives is computationally hard. However, we also give positive results for restricted settings: For maximizing utilitarian welfare, we present a mixed-integer program, building on similar efficient formulations previously applied to problems involving Mallows preferences. In addition, we identify compelling special cases where uplift is attainable and provide an explicit algorithm that achieves it.
Throughout, we supplement our theoretical results with numerical simulations, exploring generalizations in permutation distributions as well as values agents have for each item. 



\section{Extensions}\label{sec:extensions}
There are numerous potential extensions to our work.
In \Cref{sec:model_extension}, we describe a few extensions of our theoretical model, including the extensions of the noisy model and the way of human-algorithm interaction.
In \Cref{sec:implications}, we discuss substantial implications for policy making, specifically on the relative benefits and harms of misalignment for users.

\subsection{Model Extensions}
\label{sec:model_extension}
First, some extensions could involve relaxing our theoretical model. For example, our main theoretical results focus on permutation models related to the number of inversions (e.g., the Mallows model). While we show in numerical simulations that our core results generalize to other models such as RUM, extending these results theoretically (or extending our results to other classes of permutations) could be interesting. Additionally, one core assumption of our work is that humans have the same \emph{magnitude of preferences}: that is, every human has the same value for their favorite item, the same value for their second-favorite item, and so on (even though the identity of their favorite items may differ). This corresponds to assuming that all humans are equally \enquote{picky}. While this assumption is necessary for tractability purposes, this is likely not true in general: some humans may be equally happy with multiple options, while other humans may be much more selective. Relaxing this assumption may lead to different trade-offs in objectives. 

Other extensions could change our model of human-algorithm interaction. For example, our model of subset curation assumes that the human observes the set of presented items as an unordered \emph{set} and is not biased by any inherent ordering of items. This is almost certainly not true in practice, as it has been demonstrated that humans are often biased by the order in which items are presented (e.g., see \citep{joachims2007evaluating, agichtein2006improving, craswell2008experimental}), and this effect has been studied in human-algorithm collaboration more specifically (e.g. \citep{mclaughlin2024designing, donahue2024listsbetteronebenefits, wang2022modeling}). It could also be possible that the human and algorithm undergo a sequence of back-and-forth interactions which ultimately end up at a final solution (as in \citep{collina2025collaborative}), which may lead to different implications on personalization. Finally, our results on optimizing the algorithm for different social welfare functions assume that the algorithm has perfect knowledge of the distribution of human preferences. In practice, this is almost certainly not the case, and it could be useful to study settings where the algorithm must learn human preferences over multiple interactions. 

\subsection{Implications for Policy}
\label{sec:implications}
Our results have potential implications for policymakers, designers of algorithms, and users of algorithmic tools. 

\paragraph{Potential for benefits in misalignment.} In some real-life settings, humans have near-perfect ability to identify the \enquote{correct} outcome (e.g., the political opinion that they most closely identify with). However, in many real-life settings, humans may only be able to imperfectly (noisily) recover their own true preferences: prior research has shown that humans noisily pick the favorite item from sets, such as presented movies or items on a menu \citep{agranov2017stochastic, plackett1975analysis, luce1959individual, hey1994investigating}. In this setting, our results show that having algorithms that are \enquote{misaligned} in specific ways can be beneficial. This implies that designers of algorithms should maybe not always focus on designing algorithms that personalize to individual humans (eliminating misalignment): in settings where humans are noisy, personalizing may end up being harmful. This implies that algorithmic designers may want to take different strategies towards alignment depending on the context in which humans will use the tool, focusing on alignment more in noiseless settings and less in settings where humans are expected to know their own preferences less well. 

\paragraph{Benefits of positive temperature.} Our results also show that there can be benefits to algorithmic noise (e.g., positive temperature) when a single algorithm is serving a population of heterogeneous users. Benefits of randomization for goals related to fairness have been shown in other settings (e.g., \citep{jain2024scarce, singh2018fairness}), though to our knowledge, this question is less well studied in the context of complementarity or similar benchmarks. Most generative AI tools have a constant temperature across multiple queries: our work suggests that for certain welfare objectives, it could be helpful to have temperature change dynamically, being lower in settings where human opinions are roughly in agreement and larger in settings where human opinions are more diverse. 

\paragraph{Implications for users.} Our results also suggest strategies for users of algorithmic tools. Users sometimes have direct and indirect ways of controlling personalization: for example, they can choose to accept or delete cookies, can opt to use incognito mode, can volunteer or (sometimes) delete information generative tools have stored on their prior interactions, or can select tools that they believe to be more or less closely aligned with their beliefs. Our results provide motivation for users to sometimes strategically seek options that reduce personalization, even absent other concerns such as privacy. 

\paragraph{Tensions in societal goals.} It should not be surprising that some societal goals may be in tension with each other. However, we hope that our research could highlight particular objectives that are relatively understudied in the pluralistic alignment setting, such as ensuring all types of humans benefit in a randomized task (as compared to their utility if they solved the problem themselves). While we believe that there could be contexts in which either utilitarian welfare or goals like uplift would be optimal, we think it is worth having designers deliberately decide on which they may choose to optimize in different contexts. 

\section{Numerical Simulations}
\subsection{Numerical Extension of \cref{thm:benefit_inv1}}
\label{app:extension_of_theorem1}

We consider $m=4$ items, of which the algorithm displays $k=2$.
For the human, the value of item $x_i$, $v_i\propto e^{-\beta\cdot j}$, where $\beta\ge0$ controls the heterogeneity of values. 
Heterogeneity helps to relax the top item recovery case smoothly.

\paragraph{Extensions to Mallows Model.}
The first experiment assumes that both human and algorithm rankings follow a Mallows model with the same accuracy of $\phi_a = \phi_h = 0.5$.
Let $\hr^* = (x_1,x_2,x_3,x_4)$ be the human's ground-truth ranking, and let $\ar^*$ be an arbitrary algorithm's ground-truth ranking.  
\Cref{fig:mallows_aligned_misaligned} plots the \emph{difference} in human's expected utility working with various misaligned algorithms, and the aligned algorithm one. The right figure \Cref{fig:mallows_aligned_misaligned} restricted to top-aligned algorithms (algorithms that place $x_1$ first).

\begin{figure}[h]
    \centering
    \includegraphics[width=0.7\linewidth]{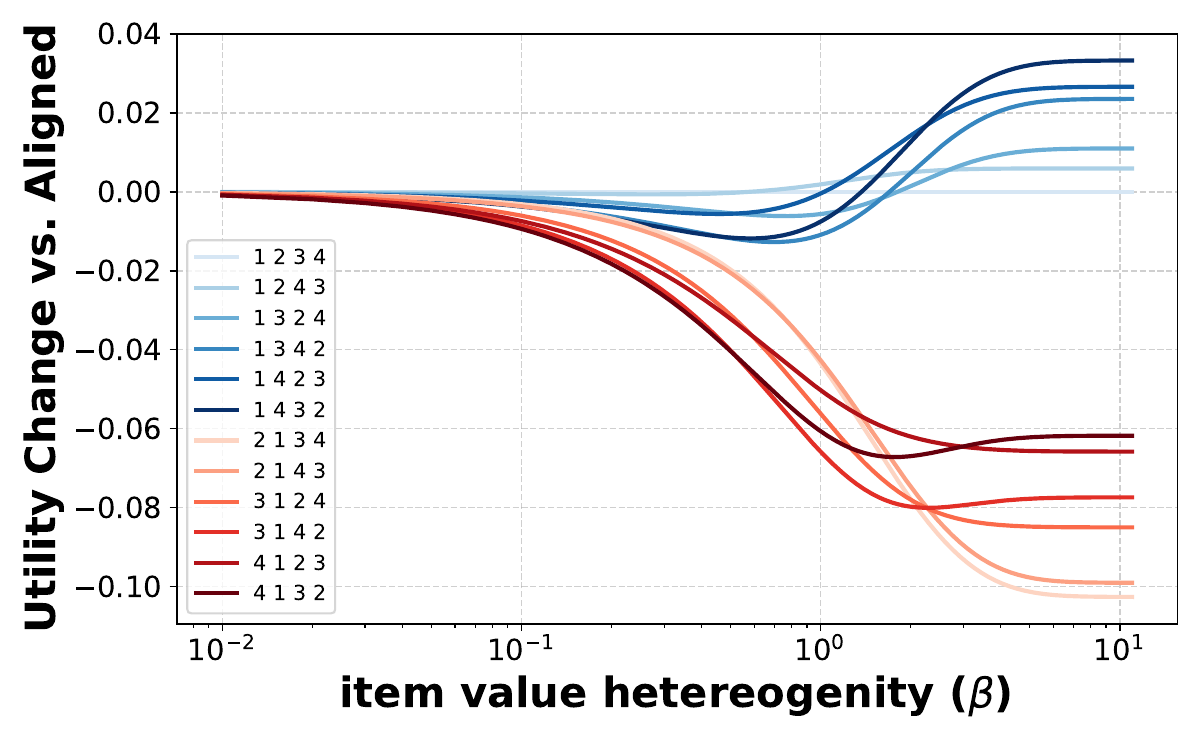}
    \includegraphics[width=0.7\linewidth]{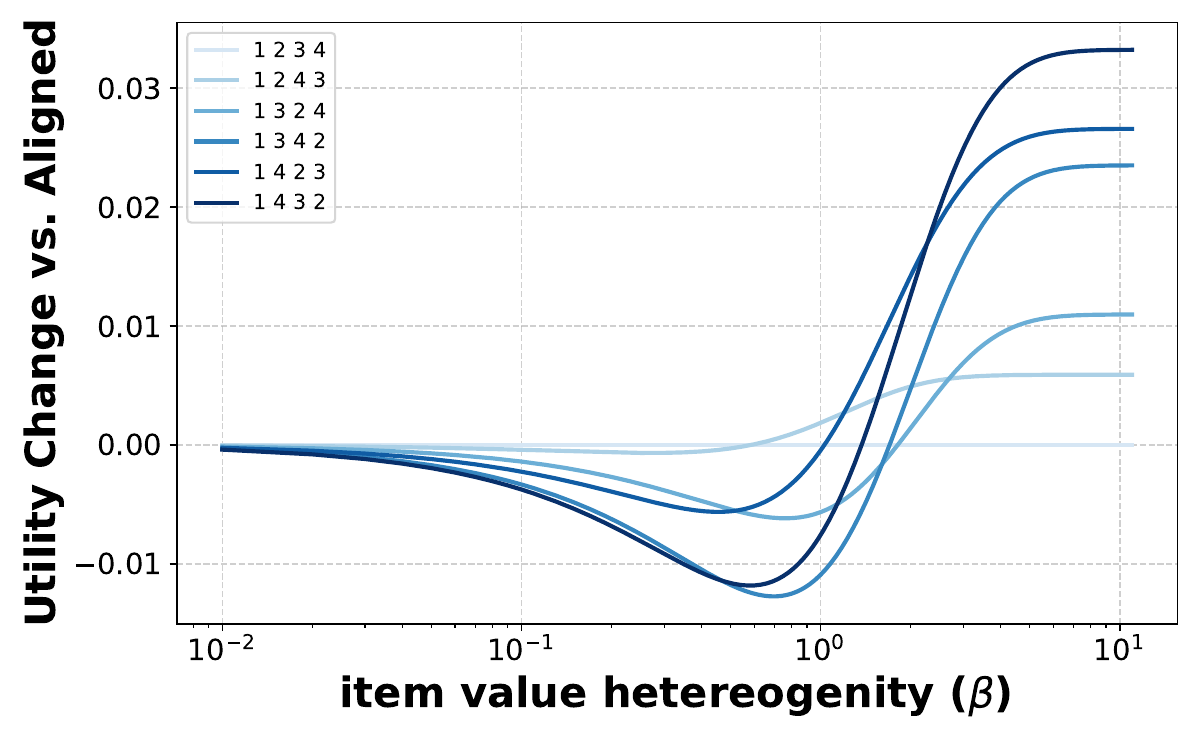}
    \caption{Comparison of human's expected utility differences after collaboration with a misaligned and an aligned algorithm, as a function of $\beta$. Each curve represents an algorithm with the corresponding ground-truth ranking.}
    \label{fig:mallows_aligned_misaligned}
\end{figure}

For small $\beta$, valuations are nearly uniform, so the human's utility is similar across all algorithms. 
As $\beta$ increases, the aligned algorithm performs best, but when $\beta$ becomes sufficiently large, the utility is dominated by the top item $x_1$, and top-aligned algorithms yield higher utility. 
In the extreme top-item recovery case (\Cref{thm:topitem}), the most inverted top-aligned algorithm achieves the highest utility, explaining why the curve for $\ar^*= (x_1,x_4,x_3,x_2)$ dominates at large $\beta$. 
Consistent with \Cref{thm:benefit_inv1}, all top-aligned algorithms outperform the aligned one when $\beta$ is large. 

\paragraph{Extensions to Random Utility Model.}
We perform a similar simulation by assuming that both the human and the algorithm follow the Plackett-Luce RUM with Gumbel noises of $0.1$. 
In \Cref{fig:RUM-all-misaligned}, we observe that top-aligned algorithms quickly outperform others. This observation mirrors the second property established in \Cref{thm:benefit_inv1}: swapping item $x_1$ from the top position with any lower-ranked item yields a new algorithm ranking that is less advantageous for the human.

We emphasize an important distinction in the RUM setting: unlike the Mallows model, modifying item values alters the probabilities of sampling different rankings. 
As a result, when $\beta$ becomes large, the probability that top-aligned algorithms present item~$x_1$, as well as the probability that the human selects item~$x_1$, both approach 1. This convergence explains why, at high $\beta$ values, the expected utilities of all top-aligned algorithms and the aligned algorithm become nearly identical in RUM.

\begin{figure}[h]
    \centering
    \includegraphics[width=0.7\linewidth]{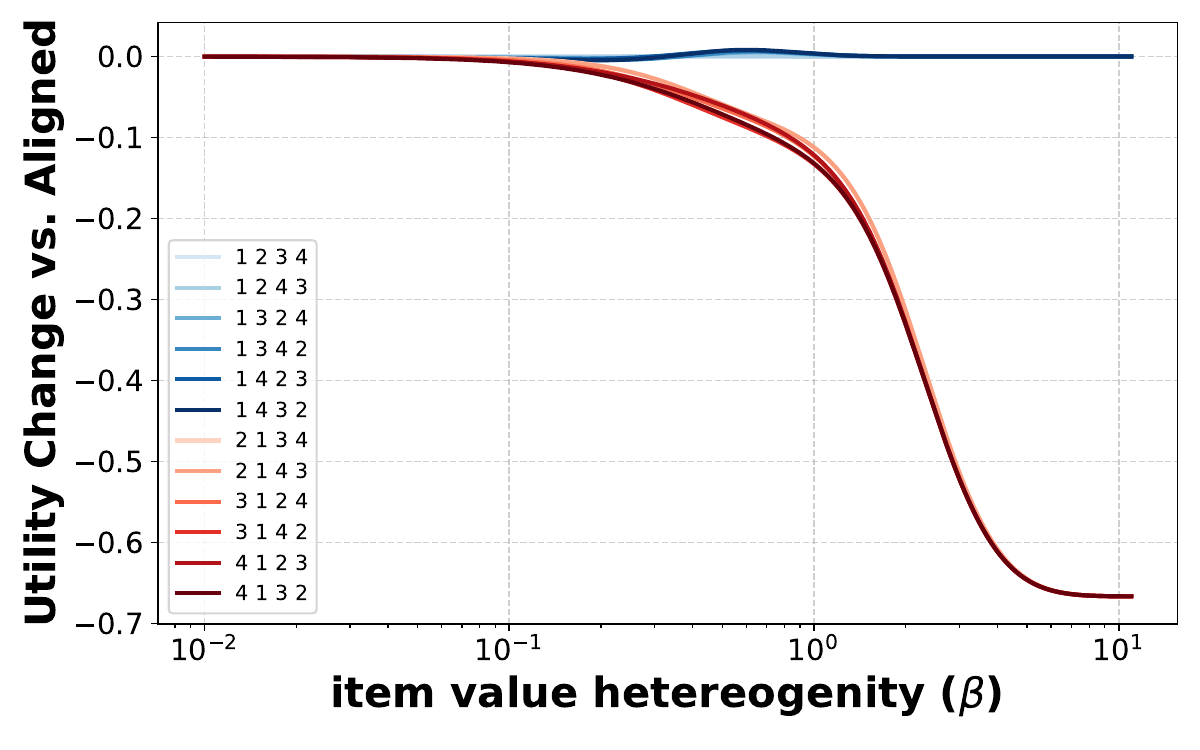}
    \includegraphics[width=0.7\linewidth]{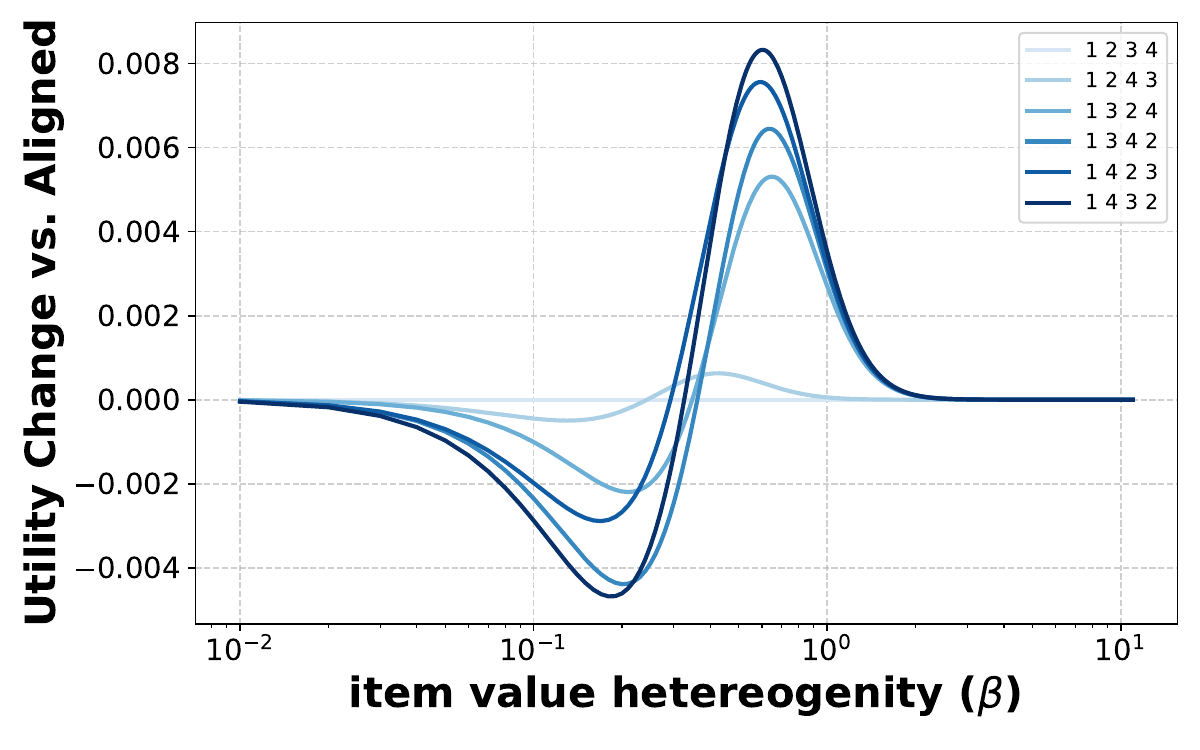}
    \caption{Comparison of human's expected utility differences under RUM.}
    \label{fig:RUM-all-misaligned}
\end{figure}

\begin{takeaway} 
In the Mallows model and RUM, misalignment on relatively small-valued items still helps the human.
\end{takeaway}

\subsection{Performance of MIP}
\label{app:exper_perf_of_mip}
We evaluate the performance of the MIP by varying both the number of items $m$ and the number of types of humans $n$.
In the first setting, we consider two humans with ground-truth rankings $\hr_1^*=(x_1,x_2,x_3,x_4,\ldots,x_m)$ and $\hr_2^* =(x_4,x_2,x_3,x_1,\ldots,x_m)$, where only $x_1$ and $x_4$ are swapped, and utilities $(v_{i, j})_{j=1}^m={4,3,2,1,0\ldots,0}$.
We vary $m$ from $10$ to $70$ and test $k=2,4,6,8$. 
For each test, we vary the frequencies of the first type of human from $0$ to $1$ and take the average of the running time.
In the second setting, humans are drawn from a Mallows distribution with $n=t!$ possible types, $t=1,\dots, 6$.
We fix $m=20$, and test $k=2,4,6,8$.
As shown in \Cref{fig:mip_time}, all instances terminate within 175 seconds, with the slowest cases taking 40.46 and 169.92 seconds, respectively. 
The running time grows polynomially in $m$ and linearly in $n$.

\begin{figure}[h]
\begin{minipage}{0.49\linewidth}
\centering
\includegraphics[width=0.8\textwidth]{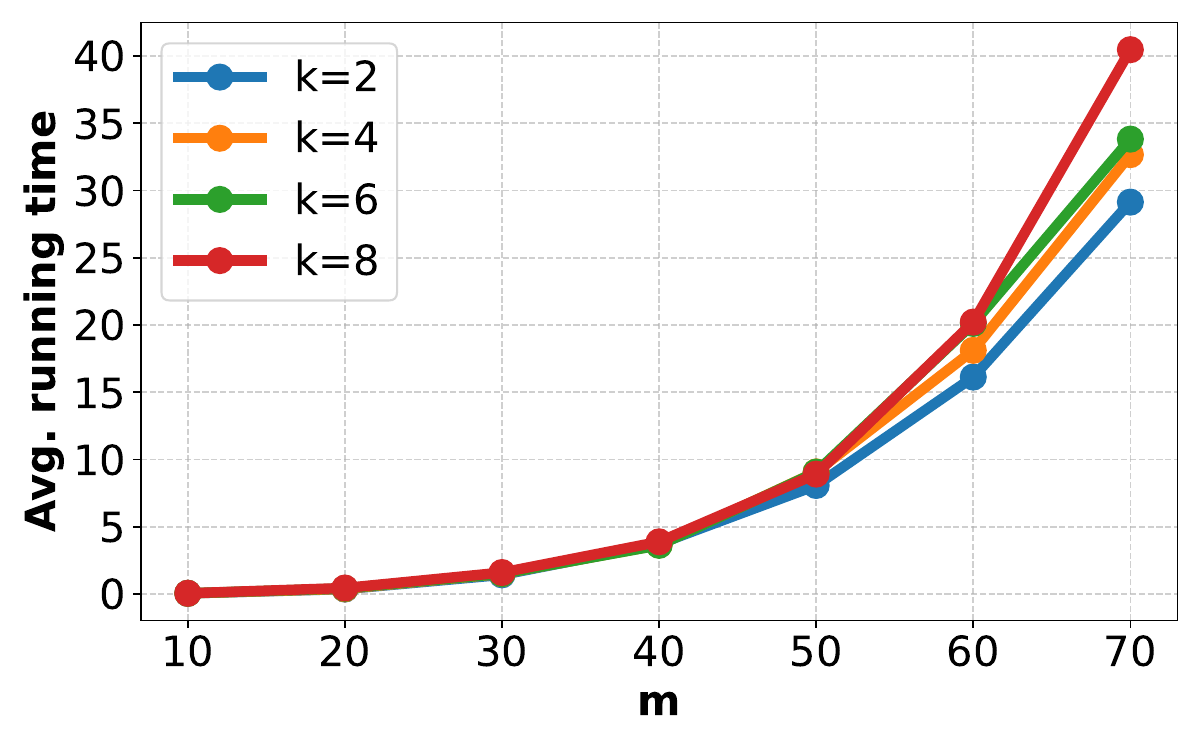}
\end{minipage}
\hfill
\begin{minipage}{0.49\linewidth}
\centering
\includegraphics[width=0.8\textwidth]{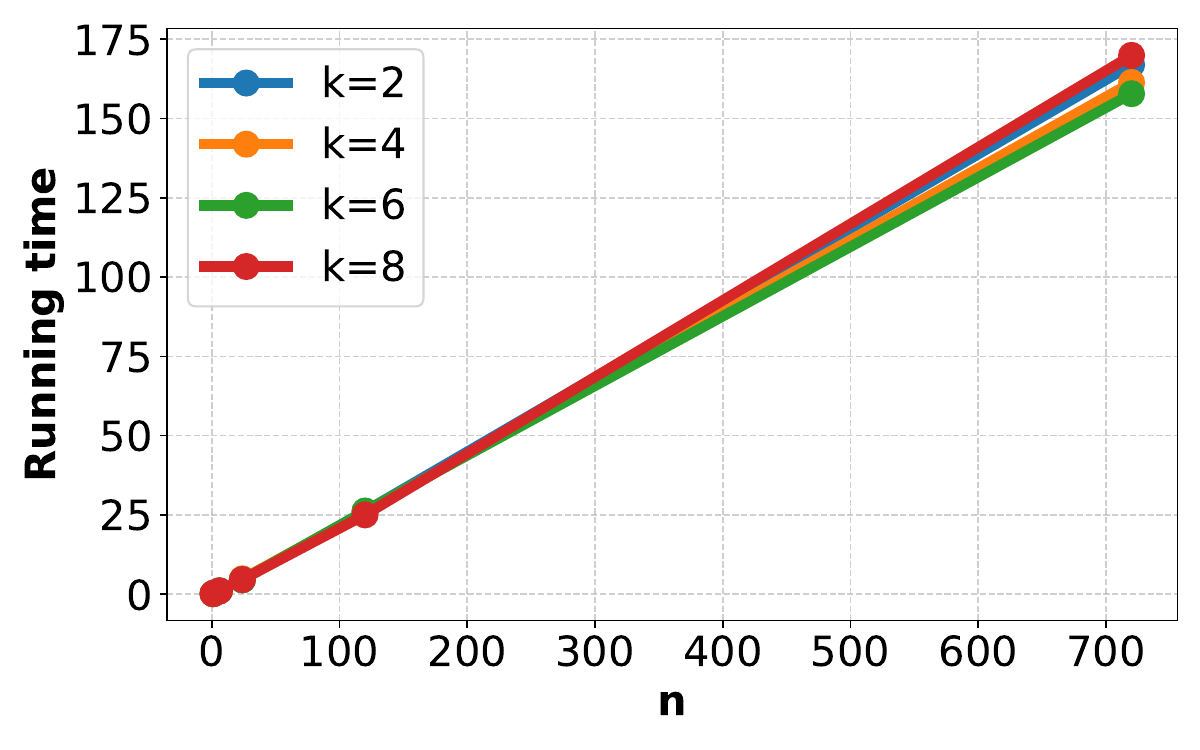}
\end{minipage}
\caption{Running time of MIP (varying $m$ and $n$)}
\label{fig:mip_time}    
\end{figure}

\begin{takeaway} 
The welfare-maximizing algorithm can be found efficiently in practice.
\end{takeaway}

\subsection{Tension between Welfare Objectives}
\label{app:tension_welfare_objectives}

The part mainly discusses the effect of enforcing uplift on social welfare. 
We consider $m=6$ items and the humans’ preferences differ only in the top three items, which follow a Mallows distribution with parameter $\gamma$, reflecting the balance of the populations.
When $\gamma$ is large, most humans have preferences aligned with $(x_1, x_2, \ldots, x_t)$.
When $\gamma$ is small, the humans' preferences become more dispersed.
Each human’s utilities are set as $(v_{i, j})_{j=1}^6 = {1, 1, 0.5, 0.2, 0, 0}$.
All humans share the same accuracy parameter $\phi_h$, which we vary from $0$ to $3$ in increments of $0.3$.
\Cref{fig:mip_uplift_t_3} plots the social welfare of the welfare-maximizing algorithms with and without the uplift constraint under different values of $\gamma$.
The expected social welfare under the uplift constraint is consistently lower than that without it.
When human preferences become more concentrated ($\gamma = 3$), the optimal solutions start to diverge.
The largest welfare gap occurs at $\phi_h = 1$, where the expected social welfare without the uplift constraint is $0.989$, compared to $0.954$ with the constraint.

\begin{figure}[h]
\begin{minipage}{0.49\linewidth}
\centering
\includegraphics[width=0.8\textwidth]{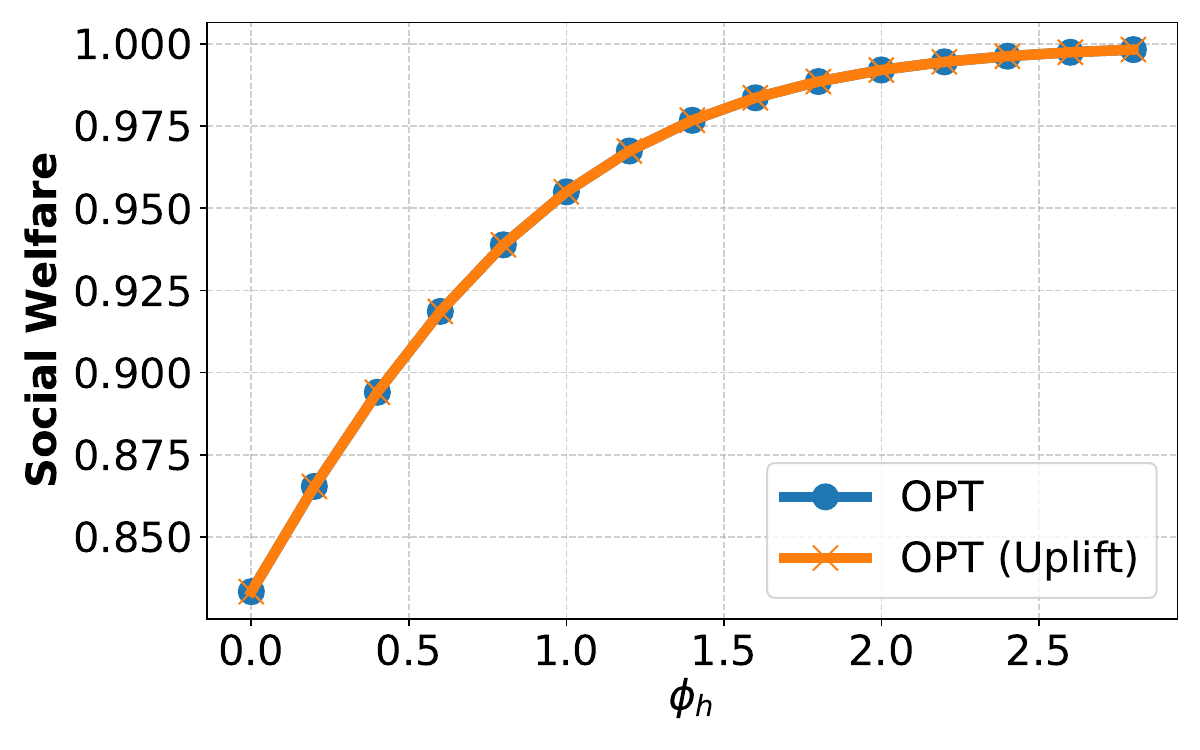}
\end{minipage}
\hfill
\begin{minipage}{0.49\linewidth}
\centering
\includegraphics[width=0.8\textwidth]{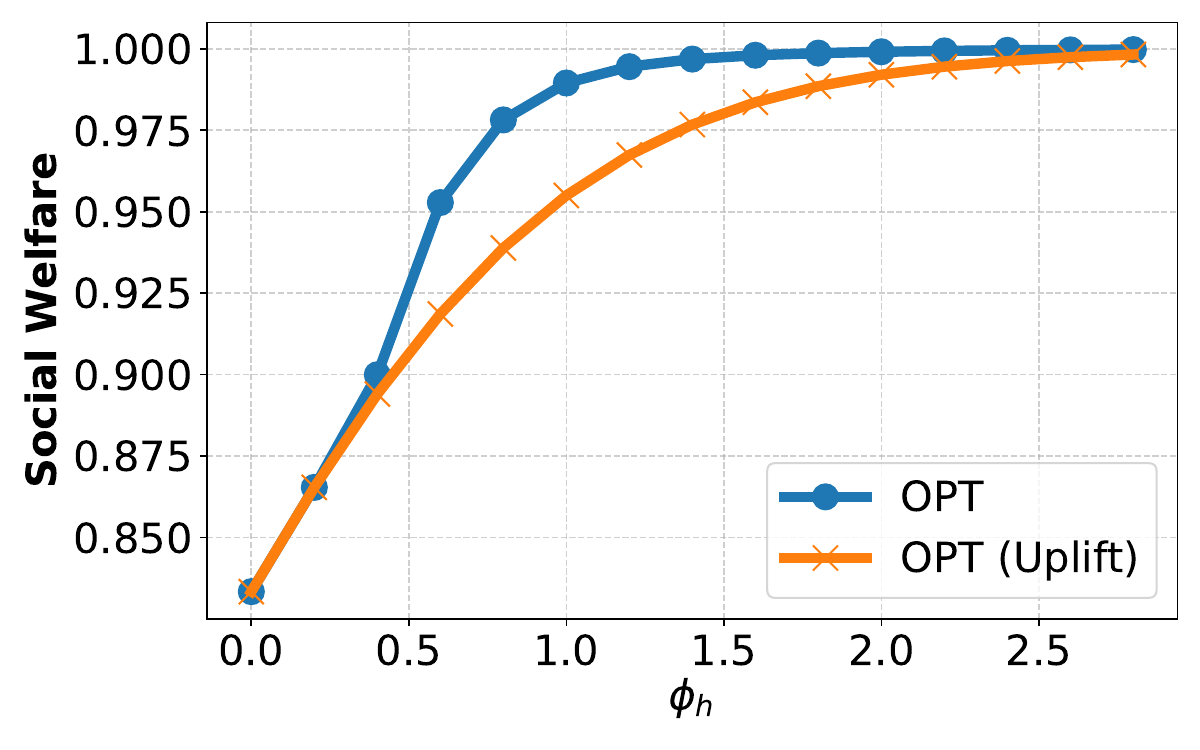}
\end{minipage}
\caption{Max welfare: $\pm$Uplift of $ \gamma=0.5$ (left) and $3$ (right)}
\label{fig:mip_uplift_t_3}
\end{figure}    

\begin{takeaway} 
Enforcing uplift reduces the optimal social welfare, especially when humans' distribution is imbalanced.
\end{takeaway}

\section*{Acknowledgements}
The research of Bhaskar Ray Chaudhury and Jiaxin Song was supported by NSF CAREER Grant CCF-2441580.
The research of Kate Donahue was supported by the MIT METEOR Fellowship.
The research of Parnian Shahkar was supported by NSF Grant CCF-2454115.
We thank Kira Goldner (Boston University), Michael Kearns (University of Pennsylvania), Ariel Procaccia (Harvard University), Aaron Roth (University of Pennsylvania), Jann Spiess (Stanford), Kostas Kollias (Google Research), Manish Raghavan (MIT), Biaoshuai Tao (SJTU), Ming Yin (Purdue), as well as attendees at the Themis Science Talk at Amazon, the Human-AI workshop at EC'25, the Human-AI Complementarity for Decision Making Workshop at CMU, the AI and Science Group at Meta, and the Boston Economics and Computing Hub Workshops, for their valuable discussions and insightful suggestions.
We also thank the anonymous reviewers for their constructive feedback.

\bibliographystyle{plainnat}
\bibliography{reference}

\newpage
\appendix
\section{Properties of Noisy Permutation Model}\label{app:propertyperm}
\subsection{Basic Properties}
Let $\mathfrak{S}(M)$ be the set of all rankings of $M$.
Denote by $x_i \ogt{\pi} S$ for a set of items $S$ if $x_i$ is before any item $x_j\in S\setminus \{x_i\}$. 
A swap $(x_i x_j)$ is \emph{valid} with respect to $\pi$ and a ground-truth ranking $\pi^*$ if $\pi$ and $\pi^*$ differ on the relative ranking of $x_i$ and $x_j$. 
By the definition of inversion-monotonicity, if $(x_i x_j)$ is a valid swap with respect to any given ranking $\pi$ and a ground-truth ranking $\pi^*$,  then $\pi\circ(x_i, x_j)$ has a higher probability than $\pi$. 
Moreover, according to \cite[Lemma 1]{donahue2024listsbetteronebenefits}, the number of inversions reduces by at least one after the swap.

\begin{lemma}\label{lem:mono_presenting_items}
Let $\D$ be an inversion monotonic distribution with ground-truth ranking $\pi^*$.
The random permutation $\pi$ is sampled from $\D$.
For any subset $\mS$ of $k$ items, the probability of $\mS$ being the first $k$ items of $\pi$ decreases if substituting $x_i\in \mS$ with another item $x_j\notin \mS$ with $x_i \succ_{\pi^*} x_j$.
\end{lemma}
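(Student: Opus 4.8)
The plan is to prove the claim by a probability-decreasing bijection between the rankings that place $\mS$ on top and those that place $\mS' := (\mS\setminus\{x_i\})\cup\{x_j\}$ on top, and then sum. Write $\mathcal{A} = \{\pi : \pi[:k] = \mS\}$ and $\mathcal{B} = \{\pi : \pi[:k] = \mS'\}$, so that the goal becomes $\myP[\pi\in\mathcal{A}] \ge \myP[\pi\in\mathcal{B}]$ (in fact strict). First I would set up the map $f:\mathcal{A}\to\mathcal{B}$ defined by $f(\pi) = \pi\circ(x_i x_j)$ and check it is well defined. For any $\pi\in\mathcal{A}$, since $x_i\in\mS$ and $x_j\notin\mS$, item $x_i$ occupies one of the first $k$ positions of $\pi$ while $x_j$ occupies a position strictly after $k$; swapping their two locations therefore replaces $x_i$ by $x_j$ inside the top-$k$ block and leaves every other item fixed, so $f(\pi)[:k] = (\mS\setminus\{x_i\})\cup\{x_j\} = \mS'$, i.e. $f(\pi)\in\mathcal{B}$. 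The same transposition, applied to $\mathcal{B}$, yields a map $g:\mathcal{B}\to\mathcal{A}$, and since a transposition is its own inverse, $g\circ f = \mathrm{id}_{\mathcal{A}}$ and $f\circ g = \mathrm{id}_{\mathcal{B}}$; hence $f$ is a bijection from $\mathcal{A}$ onto $\mathcal{B}$.

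The key step is a pointwise probability comparison via inversion-monotonicity. For $\pi\in\mathcal{A}$ we have $x_i\succ_\pi x_j$ (because $x_i$ lies in the top-$k$ block of $\pi$ and $x_j$ does not), and by hypothesis $x_i\succ_{\pi^*} x_j$. Thus the pair $(x_i,x_j)$ meets exactly the condition in the definition of inversion-monotonicity applied to $\sigma_1 = \pi$, giving $\myP[\pi] > \myP[\pi\circ(x_i x_j)] = \myP[f(\pi)]$. Summing this over all $\pi\in\mathcal{A}$ and reindexing through the bijection yields
\[
\myP[\pi\in\mathcal{A}] = \sum_{\pi\in\mathcal{A}}\myP[\pi] \ge \sum_{\pi\in\mathcal{A}}\myP[f(\pi)] = \sum_{\tau\in\mathcal{B}}\myP[\tau] = \myP[\pi\in\mathcal{B}],
\]
where the middle equality uses that $f$ is a bijection onto $\mathcal{B}$. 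Since $\mathcal{A}$ is nonempty and each summand strictly decreases, the inequality is in fact strict, which is the claimed decrease.

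I do not expect a deep obstacle here; the argument is essentially a coupling/pairing reduction of a set-level statement to the single-swap statement already baked into the definition. The only points requiring care are (a) confirming that the transposition $(x_i x_j)$ sends the top-$k$ set exactly from $\mS$ to $\mS'$ and perturbs nothing else, which hinges on $x_i$ being \emph{inside} and $x_j$ being \emph{outside} the top-$k$ block, and (b) verifying that the swap matches the precise hypotheses of inversion-monotonicity, namely that \emph{both} $\pi$ and $\pi^*$ rank $x_i$ ahead of $x_j$. I would also note that only inversion-monotonicity is invoked; label-invariance is not needed for this lemma.
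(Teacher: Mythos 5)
Your proof is correct and follows essentially the same route as the paper's: both arguments pair each ranking with top-$k$ set $\mS$ to its image under the transposition $(x_i\,x_j)$, observe that this bijection lands exactly on the rankings with top-$k$ set $\mS'$, and invoke inversion-monotonicity pointwise (valid since $x_i$ precedes $x_j$ in both $\pi$ and $\pi^*$) before summing. If anything, your write-up is cleaner: you state the pointwise inequality in the correct direction ($\myP[\pi] > \myP[\pi\circ(x_i\,x_j)]$, so the set containing the worse item $x_j$ is \emph{less} likely on top), whereas the paper's prose momentarily states the comparison backwards, and you correctly note that only inversion-monotonicity, not label-invariance, is used.
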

\begin{proof}
Let $\mS' = \mS\setminus \{x_i\}\cup \{x_j\}$.
Define two set of permutations $\mathfrak{S}_{\mS}$ and $\mathfrak{S}_{\mS'}$ as the set of permutations that place $\mS$ and $\mS'$ at the first $k$ locations, respectively.
Then we define a bijective mapping from a permutation $\pi \in \mathfrak{S}_{\mS}$ to a permutation $\pi' \in \mathfrak{S}_{\mS'}$ by swapping the items $x_i$ and $x_j$.
By inverse-monotonicity, the probability of $\pi'$ is (weakly) higher than that of $\pi$.
Therefore, we can conclude that the probability of $\mS'$ being the first $k$ of $\pi$ is greater than that of $\mS$.
\end{proof}

\begin{restatable}{lemma}{LemCompMono}
\label{lem:CompMono}
Let $\D$ be an inversion monotonic distribution with ground-truth ranking $\pi^*$.
A permutation $\pi$ is drawn from the distribution $\D$.
Let $\mS$ be a subset of $k$ items such that $x_\ell \in \mS$, $x_r \notin \mS$, and $x_\ell \succ_{\pi^*} x_r$. Define $\mS' = (\mS \setminus {x_\ell}) \cup {x_r}$. Then, for any item $x_i \in \mS\setminus \{x_\ell\}$, the probability that $x_i$ is ranked first among $\mS$ in $\pi$ is less than the probability that it is ranked first among $\mS'$ in $\pi$: 
$$
\myP[x_i \succ_\pi \mS] \le \myP[x_i \succ_\pi \mS']\,.
$$
The inequality becomes tight only when $x_r$ and $x_\ell$ are totally indistinguishable under $\D$.
\end{restatable}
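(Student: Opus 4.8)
The plan is to prove the inequality $\myP[x_i \succ_\pi \mathcal{S}] \le \myP[x_i \succ_\pi \mathcal{S}']$ by constructing an explicit bijection on the sample space of rankings and comparing probabilities term by term, exactly in the spirit of \Cref{lem:mono_presenting_items}. Fix the item $x_i \in \mathcal{S} \setminus \{x_\ell\}$. The event $\{x_i \succ_\pi \mathcal{S}\}$ is the set of rankings $\pi$ in which $x_i$ precedes every other item of $\mathcal{S}$ (in particular $x_i \succ_\pi x_\ell$), and similarly $\{x_i \succ_\pi \mathcal{S}'\}$ is the set of rankings in which $x_i$ precedes every other item of $\mathcal{S}' = (\mathcal{S} \setminus \{x_\ell\}) \cup \{x_r\}$ (in particular $x_i \succ_\pi x_r$). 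Since $\mathcal{S}$ and $\mathcal{S}'$ share all items except that $x_\ell$ is replaced by $x_r$, the two events agree on the requirement that $x_i$ beat every common item; they differ only in whether $x_i$ must beat $x_\ell$ or beat $x_r$.

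First I would set up the bijection $\Phi : \pi \mapsto \pi \circ (x_\ell x_r)$ that swaps the positions of $x_\ell$ and $x_r$. This is an involution on the set of all rankings $\mathfrak{S}(M)$. I would then restrict attention to rankings in which $x_i$ already precedes all the common items $\mathcal{S} \cap \mathcal{S}' = \mathcal{S} \setminus \{x_\ell\} = \mathcal{S}' \setminus \{x_r\}$, since on the complement neither event occurs and both probabilities receive zero contribution. Partition the relevant rankings by the relative order of $x_\ell$ and $x_r$ and by whether $x_i$ precedes them. The key cases are: (a) rankings where $x_i \succ_\pi x_\ell$ but $x_i \not\succ_\pi x_r$ (these lie in the first event but not the second), (b) the reverse, and (c) rankings where $x_i$ beats both or neither. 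For the matching argument, I would pair each ranking $\pi$ counted in $\{x_i \succ_\pi \mathcal{S}\}$ with its image $\Phi(\pi)$, and argue that $\Phi$ maps this event injectively into $\{x_i \succ_\pi \mathcal{S}'\}$ while not increasing probability along the way, so that summing yields the claimed inequality.

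The crux is to show the probability does not decrease under the relevant swaps. Because $x_\ell \succ_{\pi^*} x_r$, a swap that moves $x_r$ earlier and $x_\ell$ later removes an inversion relative to $\pi^*$ whenever it is a valid swap, so by inversion-monotonicity the swapped ranking has weakly higher probability. The careful bookkeeping is to verify that the bijection $\Phi$ sends each $\pi \in \{x_i \succ_\pi \mathcal{S}\}$ to a ranking $\Phi(\pi) \in \{x_i \succ_\pi \mathcal{S}'\}$ of weakly greater probability: if $x_i \succ_\pi x_\ell$, then after swapping, $x_r$ occupies $x_\ell$'s old slot, which is still after $x_i$, so $x_i \succ_{\Phi(\pi)} x_r$ and the common items are untouched, placing $\Phi(\pi)$ in the target event. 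Establishing that this $\Phi$ is injective on the source event and that each image has weakly larger probability (strictly larger on a positive-probability subset unless $x_\ell, x_r$ are indistinguishable) gives both the inequality and its tightness condition.

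The main obstacle I anticipate is handling the rankings where $x_i$ precedes both $x_\ell$ and $x_r$, or precedes neither: these contribute to both events, and I must ensure the bijection accounts for them without double-counting or creating a probability deficit. The cleanest route is to decompose the difference $\myP[x_i \succ_\pi \mathcal{S}'] - \myP[x_i \succ_\pi \mathcal{S}]$ into a sum over matched pairs $(\pi, \pi\circ(x_\ell x_r))$ and show each pair contributes nonnegatively; the pairs where $x_i$ beats both or neither cancel, and the net positive contribution comes exactly from the rankings where $x_i$ beats $x_r$ but not $x_\ell$ (a valid inversion-reducing swap). For the tightness claim, equality forces every such swap to preserve probability, which by inversion-monotonicity and label-invariance happens precisely when $x_\ell$ and $x_r$ are totally indistinguishable under $\D$.
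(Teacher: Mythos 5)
Your final decomposition---cancel the rankings in which $x_i$ beats both or neither of $x_\ell, x_r$, then pair the two symmetric-difference sets $\mathfrak{S}_R \setminus \mathfrak{S}_L$ and $\mathfrak{S}_L \setminus \mathfrak{S}_R$ via the swap $(x_\ell\, x_r)$ and invoke inversion-monotonicity---is exactly the paper's proof, which defines these same two sets and maps one to the other by that same swap. One caution: your interim claim that $\Phi$ sends \emph{every} $\pi$ with $x_i \succ_\pi \mS$ to a weakly more probable ranking in the target event is false on the intersection of the two events (when $x_\ell \succ_\pi x_r$, consistent with $\pi^*$, the swap \emph{adds} an inversion and lowers probability), but you flag precisely this obstacle and repair it in your closing paragraph, so the argument as finally stated is sound and matches the paper's.
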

\begin{proof}
Let $\mathfrak{S}_R = \{\pi: \text{$x_i$ is before $\mS'$ in $\pi$}\}$, and $\mathfrak{S}_L = \{\pi: \text{$x_i$ is before $\mS$ in $\pi$}\}$.
To show the inequality, it suffices to prove $\myP[\pi \in \mathfrak{S}_R \setminus \mathfrak{S}_L] >
\myP[\pi \in \mathfrak{S}_L \setminus \mathfrak{S}_R]$.
By definitions of $\mathfrak{S}_L$ and $\mathfrak{S}_R$, $\mathfrak{S}_R \setminus \mathfrak{S}_L$ contains the permutations where $x_\ell$ is placed before $x_i$ and $x_i$ is placed before other items of $\mS'$ and $x_r$.
Similarly, $\mathfrak{S}_L \setminus \mathfrak{S}_R$ consists of the permutations where $x_r$ is placed before $x_i$ and $x_i$ is placed before other items of $\mS$ (including $x_\ell$).
Define a mapping from $\mathfrak{S}_L \setminus \mathfrak{S}_R$ to $\mathfrak{S}_R\setminus \mathfrak{S}_R$ by swapping $x_\ell$ and $x_r$, which is a valid mapping by definition.
As the distribution is inversion monotonic, the probability of the new permutation is greater than the original one, which implies that $\myP[\pi \in \mathfrak{S}_R \setminus \mathfrak{S}_L] > 
\myP[\pi \in \mathfrak{S}_L \setminus \mathfrak{S}_R]$ and concludes the lemma.
\end{proof}

\subsection{Properties of Mallows Model}
\label{app:property_of_mallows}

In \Cref{tab:mallows_properties}, we provide an overview of the properties of the Mallows model used in our proofs, where $\pi\sim \D(\pi^*, \phi)$ and $Z_m(\phi) = \sum_{i=1}^m \exp(-\phi(i-1))$.
Without loss of generality, we assume $\pi^* = (x_1, \ldots, x_m)$ in this subsection.

\begin{table*}[ht]
\centering
\begin{tabular}{l||l|l}
\hline
{\bf Event} & {\bf Probability} & {\bf Source}  \\
\hline
$x_i$ ranked the first in $\pi$  &  $\frac{\exp(-\phi\cdot (i-1))}{Z_m(\phi)}$ &  \citep{awasthi2014learning}\\ 
$x_1$ ranked at the $i$-th location  in $\pi$ & $\frac{\exp(-\phi\cdot (i-1))}{Z_m(\phi)}$ & \citep{awasthi2014learning} \\ 
$x_1$ ranked at the first $k$ locations in $\pi$ &  $ \frac{Z_k(\phi)}{Z_m(\phi)}$ & \citep{awasthi2014learning} \\ 
$x_i$ ranked before $x_j$ with $i< j$ and $k=j-i+1$ in $\pi$ &  $\frac{k}{1-\exp(-\phi\cdot k)} - \frac{k-1}{1-\exp(-\phi\cdot (k-1))}$ &  \citep{boehmer2023properties} \\
a set of items $S$ being the first $k$ items of $\pi$ & closed-formed & \Cref{lem:MallowsFirstk} \\
the first item of $S$ in $\pi$ is the best among $S$ in $\pi^*$ & $\ge \frac{1}{Z_k(\phi)}$ with $k=\abs{S}$ & \Cref{lem:LowerBoundForComparison}\\
 the first item of $S$ in $\pi$ is within the top $s$-th among $S$ in $\pi^*$ &  $\ge \frac{Z_s(\phi)}{Z_k(\phi)}$ with $k=\abs{S}$ & \Cref{lem:general_lower_bound_for_comparison}\\
\hline
\end{tabular}
\caption{Properties of the Mallows model.}
\label{tab:mallows_properties}
\end{table*}

\addtocontents{toc}{\protect\setcounter{tocdepth}{1}}
\subsubsection{Item Location}
\label{app:item_loc}
\addtocontents{toc}{\protect\setcounter{tocdepth}{2}}

We first show that, for every set of $\mS$ with $k$ items, the probability of $\mS$ being the first $k$ items of $\ar$ can have a closed-form expression in the Mallows model using the insertion-based sampling.
Consider a fixed order $\pi_S= (x_{j(1)}, \ldots, x_{j(k)})$ of the $k$ items.
We insert the $k$ items in order $\pi_S$.
At the time of inserting the $i$-th one, the probability of placing $x_{j(i)}$ at the $i$-th location conditioned on the previous insertion, depends on the relative location of $x_{j(1)}$ among $\mS$ and the inserted items.
Consequently, it implies the probability of the prefix is solely determined by the number of inversions in $\pi_S$.
By summing it up for all $\pi_S$, we get a closed-form probability of $\mS$ being the first $k$, which only depends on the sum of indices of items of $\mS$.
The formal theorem statement and proof are presented as follows:

\begin{restatable}{lemma}{LemMallowsFirstk}
\label{lem:MallowsFirstk}
In a Mallows distribution $\pi\sim \mathcal{D}(\pi^*,\phi)$, given a subset $S$ of items of size $k$, the probability of set $S$ of being the first $k$ items of $\pi$ is given by 
$$
\myP[\ars{k} = S] = \exp\left(-\phi\cdot \left(\sum\nolimits_{x_j\in S}j - \frac{k(k+1)}{2}\right)\right)\cdot \prod_{i=1}^{k} \frac{Z_i(\phi)}{Z_{m-i+1}(\phi)}\,.
$$
\end{restatable}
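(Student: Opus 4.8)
The plan is to compute $\myP[\ars{k} = S]$ via the repeated-insertion model (RIM) representation of the Mallows distribution, which is the standard sampling procedure that makes closed forms tractable. Under RIM, we build a ranking by inserting items $x_1, x_2, \ldots, x_m$ one at a time in order of the ground-truth ranking $\pi^* = (x_1, \ldots, x_m)$; when inserting $x_i$ into the $(i-1)$ items already placed, it is put into position $p \in \{1, \ldots, i\}$ with probability proportional to $\exp(-\phi \cdot (p-1))$, i.e. with probability $\exp(-\phi(p-1))/Z_i(\phi)$. The key fact I would invoke is that this process exactly generates the Mallows distribution, and that the number of inversions decomposes additively over insertion steps. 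The event $\{\ars{k} = S\}$ is the event that the $k$ items of $S$ occupy the first $k$ positions (as an unordered set), so I would sum the RIM probability over all orderings $\pi_S$ of $S$ consistent with $S$ being the prefix.

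First I would fix an arbitrary internal order $\pi_S = (x_{j(1)}, \ldots, x_{j(k)})$ of the items of $S$ and compute the probability that $\pi$ begins with exactly this ordered prefix. The crucial observation is that for RIM the probability factorizes: items \emph{not} in $S$ must all be inserted after the first $k$ positions, and since insertions happen in ground-truth order, the contribution of the $S$-items can be separated from the normalization contributed by the non-$S$ items. I would argue that each item $x_{j(i)}\in S$, when inserted, contributes a factor determined by the number of already-placed items that should rank below it (inversions it creates), and that the total exponent collected over the $k$ items of $S$ equals $-\phi$ times the number of inversions of $\pi_S$ relative to the induced order on $S$. The normalization factors $Z_i(\phi)$ for $i = 1, \ldots, k$ come from the $S$-insertions, while the condition that all $m-k$ remaining items land after position $k$ contributes the $Z_{m-i+1}(\phi)$ terms in the denominator (equivalently, the $\prod_{i=1}^k Z_i(\phi)/Z_{m-i+1}(\phi)$ structure).

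Next I would sum over all orderings $\pi_S$. The sum $\sum_{\pi_S} \exp(-\phi \cdot \mathrm{inv}(\pi_S))$ over all permutations of the $k$ items of $S$ is a Mallows normalization constant \emph{but shifted}: the items of $S$ have ground-truth ranks $\{j : x_j \in S\}$, so the minimum achievable number of inversions among them (relative to the full ground truth) is determined by how spread out those ranks are. Collapsing the ranks of $S$ to $1, \ldots, k$ introduces a baseline offset of exactly $\sum_{x_j \in S} j - \binom{k+1}{2} = \sum_{x_j\in S} j - \tfrac{k(k+1)}{2}$ inversions, which factors out as $\exp(-\phi(\sum_{x_j\in S} j - k(k+1)/2))$, and the remaining free sum over relative orderings of $S$ telescopes into the product of $Z_i(\phi)$ terms already accounted for. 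This yields exactly the claimed expression.

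The main obstacle I anticipate is the bookkeeping in the factorization step: carefully justifying that the RIM insertion probabilities for the $S$-items, conditioned on $S$ occupying the prefix, cleanly separate into (a) a term depending only on $\mathrm{inv}(\pi_S)$ and (b) the normalization $\prod Z_i/Z_{m-i+1}$, without cross-terms leaking between $S$ and non-$S$ items. The subtlety is that the insertion position of a later $S$-item depends on how many non-$S$ items were inserted before it, so I would need to condition correctly on the event that every non-$S$ item is pushed beyond position $k$ and verify that this conditioning contributes precisely the $Z_{m-i+1}(\phi)$ denominators. Once the additive decomposition of inversions under RIM is set up cleanly (citing the standard insertion-model fact and \cite[Lemma 1]{donahue2024listsbetteronebenefits} for the inversion count behavior), the rest reduces to recognizing the offset $\sum_{x_j\in S} j - k(k+1)/2$ and summing a geometric-type series, which is routine.
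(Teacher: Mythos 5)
Your overall bookkeeping skeleton --- fix an ordering $\pi_S$ of $S$, extract a factor $\exp(-\phi\cdot \mathrm{inv}(\pi_S))$ plus the baseline offset $\sum_{x_j\in S} j - \tfrac{k(k+1)}{2}$, then sum over the $k!$ orderings using $\sum_{\pi_S}\exp(-\phi\cdot \mathrm{inv}(\pi_S)) = \prod_{i=1}^k Z_i(\phi)$ --- is exactly the computation in the paper's proof. But the mechanism you propose for producing the per-ordering probability does not go through, and you have correctly located the failure point yourself without resolving it. Under RIM, items are inserted in ground-truth order $x_1,\ldots,x_m$, so insertions of $S$-items and non-$S$-items \emph{interleave} whenever $S\neq\{x_1,\ldots,x_k\}$; worse, whether a non-$S$ item ends up beyond position $k$ is determined only after all subsequent insertions (later items can be inserted above it), so ``every non-$S$ item lands after position $k$'' is a global event about the entire insertion trajectory, not something you can condition on step by step. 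Consequently there is no per-insertion conditioning that ``contributes precisely the $Z_{m-i+1}(\phi)$ denominators'': RIM's step normalizations are $Z_1,\ldots,Z_m$ attached to insertions in ground-truth order, and the denominators $Z_{m-i+1}(\phi)$ for $i=1,\ldots,k$ simply do not arise from them stepwise. Asserting that factorization is the entire difficulty of the lemma, and as a stepwise claim it is false.

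Two repairs are available. The paper avoids RIM altogether: it builds the prefix of $\pi$ top-down, position by position, using the Mallows chain-rule property that, conditioned on the first $i-1$ positions, the $i$-th position is occupied by the $r$-th best \emph{remaining} item with probability $\exp(-\phi(r-1))/Z_{m-i+1}(\phi)$. Taking $r = j(i)-A(i)$, where $A(i)$ counts already-placed items of $S$ preceding $x_{j(i)}$ in $\pi^*$, the product over $i=1,\ldots,k$ yields the $Z_{m-i+1}$ denominators immediately, and the exponent splits as $\sum_i (j(i)-i) + \sum_i (i-1-A(i)) = \bigl(\sum_{x_j\in S} j - \tfrac{k(k+1)}{2}\bigr) + \mathrm{inv}(\pi_S)$. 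Alternatively, an even more elementary route matching your offset observation: for every $\pi$ with $\ars{k}=S$, the cross-inversions between $S$ and $M\setminus S$ are \emph{constant}, equal to $\sum_{x_j\in S} j - \tfrac{k(k+1)}{2}$, so $\mathrm{inv}(\pi)$ decomposes into that constant plus inversions internal to $S$ plus inversions internal to $M\setminus S$; summing $\exp(-\phi\cdot\mathrm{inv}(\pi))/\prod_{i=1}^m Z_i(\phi)$ over the block structure and using $\prod_{i=1}^{m-k} Z_i(\phi)\big/\prod_{i=1}^{m} Z_i(\phi) = 1\big/\prod_{i=1}^{k} Z_{m-i+1}(\phi)$ gives the formula with no sequential process at all. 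Either closes your gap; as written, the RIM-conditioning step would not survive scrutiny. (A minor further slip: your insertion probability is oriented backwards --- inserting $x_i$ at position $p$ from the top creates $i-p$ inversions, so the probability should be $\exp(-\phi(i-p))/Z_i(\phi)$ --- though this is cosmetic next to the main issue.)
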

\begin{proof}
Fix an order $\pi_S$ of $S$.
Below, we give the probability of $\pi_S$ being the prefix of $\pi$.
Denote the $k$ items by $x_{j(1)}, \ldots, x_{j(k)}$ according order $\pi_S$.
We insert the $k$ items in order.
At time of inserting the $i$-th item $x_{j(i)}$, consider the probability of locating $x_{j(i)}$ at the $i$-th place in $\pi$ conditioned on $x_{j(1)}, \ldots, x_{j(i-1)}$ being located as the first $i-1$ items of $\pi$.
Observe that the number of inserted items that are placed before $x_{j(i)}$ in the ground truth ranking $\pi^*$ is equal to $\abs{\{i': j(i') < j(i) \wedge i' < i\}}$.
Denote that number by $A(i)$.
The current item $x_{j(i)}$ is the $(j(i) - A(i))$-th item among the set of remaining items $M \setminus \{x_{j(t)}\}_{t=1}^{i-1}$.
Then, according to the property of the Mallows model, the probability of $x_{j(i)}$ being placed in $i$-th place in $\pi$ conditioned on the past insertion is given by 
\begin{align}\label{eqn:mallows_cond_prob}
\myP\left[\pi(i) = x_{j(i)} \mid \wedge_{t=1}^{i-1} \pi(t) = x_{j(t)} \right] = \frac{1}{Z_{m-i+1}(\phi)}\cdot \exp\left(-\phi\cdot (j(i) - A(i) - 1)\right)\,.
\end{align}
Based on it, the probability of $\pi_S$ being the prefix of $\pi$ is equal to the product of these conditional probabilities, which is given by
\begin{align*}
\myP\left[\wedge_{i=1}^k \pi(i) = x_{j(i)}\right] & = \prod_{i=1}^k \Pr\left[\pi(i) = x_{j(i)} \mid \wedge_{t=1}^{i-1} \pi(t) = x_{j(t)} \right] \tag{By the chain rule}\\ 
& = \prod_{i=1}^{k} \frac{\exp\left(-\phi\cdot (j(i) - A(i) - 1)\right)}{Z_{m-i+1}(\phi)} \tag{By \Cref{eqn:mallows_cond_prob}}
\end{align*}
By direct calculation, we can see that
\begin{align*}
\myP\left[\wedge_{i=1}^k \pi(i) = x_{j(i)}\right] & = \exp\left(-\phi\cdot \sum_{i=1}^k(j(i) - i + i - 1 - A(i))\right) \cdot \prod_{i=1}^k \frac{1}{Z_{m-i+1}(\phi)}\,.
\end{align*}
Notice that the term $i - 1 -A(i)$ is essentially the number of items that are placed before $x_{j(i)}$ in $\pi$ but ranked after $x_{j(i)}$ in the ground-truth ranking $\pi^*$.
Hence, each of them forms an inversion with $x_{j(i)}$.
Thus, the sum of $i - 1 -A(i)$ over $i$ equals to the number of inversions of $\pi_S$.
Therefore, we have
\begin{align}\label{eqn:prob_specific_perm}
\myP\left[\wedge_{i=1}^k \pi(i) = x_{j(i)}\right] = \exp\Big(-\phi\cdot \Big(\sum_{x_j\in S}j - \frac{k(k+1)}2\Big)\Big)\cdot \exp(-\phi\cdot \mathrm{inv}(\pi_S))\cdot \prod_{i=1}^k \frac{1}{Z_{m-i+1}(\phi)}\,.
\end{align}
By summing the above equation over all the permutations of items in $S$, we can obtain that
\begin{align*}
\myP[\ars{k} = S] = \sum_{\pi_S\in \mathfrak{S}(S)}\myP\left[\wedge_{i=1}^k \pi(i) = x_{j(i)}\right].
\end{align*}
Then applying \Cref{eqn:prob_specific_perm}, since the sum of $\exp(-\phi\cdot \mathrm{inv}(\pi_S))$ over all permutations of $S$ is equal to $\prod_{i=1}^k Z_i(\phi)$~\citep{mallows1957non}, we then have 
\begin{equation*}
\myP\left[\pi(:k) = S\right] = \exp\left(-\phi\cdot \left(\sum_{x_j\in S}j - \frac{k(k+1)}2\right)\right)\cdot  \frac{\prod_{i=1}^k Z_i(\phi)}{\prod_{i=1}^k Z_{m-i+1}(\phi)}\,.
\qedhere
\end{equation*}
\end{proof}   

\addtocontents{toc}{\protect\setcounter{tocdepth}{1}}
\subsubsection{Itemwise Comparison}
\label{sec:pairwise_comp}
\addtocontents{toc}{\protect\setcounter{tocdepth}{2}}

\begin{lemma}[Proposition 3.8 of \citep{boehmer2023properties}]\label{lem:preferonetoother}
In a Mallows distribution $\pi\sim \mathcal{D}(\pi^*, \phi)$,  for any two distinct items $x_i, x_j\in M$ with $i< j$ and $k=j-i+1$, the probability that $x_i$ is before $x_j$ in $\pi$ is given by 
$$
\Pr\left[x_i \succ_\pi x_j\right] = \frac{k}{1-\exp(-\phi\cdot k)} - \frac{k-1}{1-\exp(-\phi\cdot (k-1))}\,.
$$
\end{lemma}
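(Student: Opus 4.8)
The plan is to reduce the pairwise comparison to a computation inside a Mallows model over just the $k = j - i + 1$ contiguous items $\{x_i, x_{i+1}, \ldots, x_j\}$, and then evaluate it by the same insertion-based sampling used in the proof of \Cref{lem:MallowsFirstk}. Write $q = \exp(-\phi)$, so that $Z_s(\phi) = \sum_{t=0}^{s-1} q^t = \frac{1-q^s}{1-q}$. The first step is to argue that the relative order of the contiguous block $\{x_i, \ldots, x_j\}$ is itself a Mallows model on $k$ items with the same parameter $\phi$; relabel its reference as $(y_1, \ldots, y_k)$ with $y_1 = x_i$ and $y_k = x_j$. Then $\Pr[x_i \succ_\pi x_j]$ equals $g(k) := \Pr[y_1 \succ_\pi y_k]$, the probability that the top reference item of a $k$-item Mallows precedes the bottom one.

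For the block-marginal reduction I would establish two elementary deletion facts: deleting the top reference item $x_1$ (resp. the bottom reference item $x_m$) from $\mathcal D((x_1,\ldots,x_m),\phi)$ leaves a Mallows model on the remaining items with the same $\phi$. For the top item, fix a relative order $\tau$ of $\{x_2,\ldots,x_m\}$; inserting $x_1$ at position $s$ from the top contributes exactly $s-1$ inversions (all the items above it), so $\mathrm{inv}(\sigma) = \mathrm{inv}(\tau) + (s-1)$, and summing over $s$ produces a factor $\sum_{s=1}^{m} q^{s-1} = Z_m(\phi)$ that is independent of $\tau$; the bottom item is symmetric. Iterating these deletions from both ends strips $x_1,\ldots,x_{i-1}$ and $x_{j+1},\ldots,x_m$ and yields the claimed block marginal.

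Then I would compute $g(k)$ by sampling the $k$-item block via repeated insertion with $y_k$ inserted last. By the prefix property (deleting the bottom reference item) the items $y_1,\ldots,y_{k-1}$ form a $(k-1)$-item Mallows, in which the top item $y_1$ sits at position $p$ with probability $q^{p-1}/Z_{k-1}(\phi)$ (the item-location fact in \Cref{tab:mallows_properties}). Inserting $y_k$ at slot $r$ from the top has probability $q^{k-r}/Z_k(\phi)$, and one checks that $y_1 \succ_\pi y_k$ exactly when $r > p$. Hence
\begin{align*}
g(k) = \sum_{p=1}^{k-1} \frac{q^{p-1}}{Z_{k-1}(\phi)} \sum_{r=p+1}^{k} \frac{q^{k-r}}{Z_k(\phi)} = \sum_{p=1}^{k-1} \frac{q^{p-1}}{Z_{k-1}(\phi)}\cdot \frac{1-q^{k-p}}{1-q^{k}} = \frac{1}{1-q^k} - \frac{(k-1)\,q^{k-1}}{Z_{k-1}(\phi)\,(1-q^k)}.
\end{align*}
Substituting $Z_{k-1}(\phi) = (1-q^{k-1})/(1-q)$ and simplifying reduces this to $\frac{k}{1-q^k} - \frac{k-1}{1-q^{k-1}}$ (equivalently the telescoping form $h(k)-h(k-1)$ with $h(k)=k/(1-q^k)$), which is the claimed expression after substituting back $q = \exp(-\phi)$.

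The only genuinely non-routine step is the block-marginal reduction — that a contiguous-block marginal of a Mallows model is again Mallows with the same $\phi$ — since without it the relevant item $x_i$ would be an interior item of the prefix block, and the position distribution needed in the sum above would not have the clean single-geometric form $q^{p-1}/Z_{k-1}(\phi)$ but rather a messy convolution. Everything after that reduction is a short geometric-series computation.
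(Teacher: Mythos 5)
Your proof is correct, and all steps check out. Note, however, that the paper itself contains no proof of this statement: it is imported verbatim as Proposition~3.8 of \citet{boehmer2023properties}, so there is no internal argument to compare against. What you have produced is a valid self-contained derivation, and it fits naturally with the machinery the paper already uses: your end-deletion argument for the contiguous-block marginal is sound (deleting the current top or bottom \emph{reference} item contributes $\mathrm{inv}(\sigma)=\mathrm{inv}(\tau)+(s-1)$, resp.\ $+d$, so the insertion positions sum to a $\tau$-independent factor $Z_m(\phi)$; crucially this works only for end items, which is exactly why the reduction must be to a contiguous block, as you observe), and the final computation is the same repeated-insertion sampling used in the paper's proof of \Cref{lem:MallowsFirstk} together with the item-location fact from \Cref{tab:mallows_properties}. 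I verified the algebra: with $q=\exp(-\phi)$,
\begin{align*}
\frac{k}{1-q^k}-\frac{k-1}{1-q^{k-1}}-\left(\frac{1}{1-q^k}-\frac{(k-1)q^{k-1}}{Z_{k-1}(\phi)\,(1-q^k)}\right)
= (k-1)\,\frac{q^{k}-q^{k-1}}{(1-q^k)(1-q^{k-1})}+\frac{(k-1)q^{k-1}(1-q)}{(1-q^{k-1})(1-q^k)}=0,
\end{align*}
so your closed form matches the claimed one; the sanity check $k=2$ gives $g(2)=1/(1+q)$, the correct adjacent-pair probability, and your inversion bookkeeping for the last insertion ($y_k$ at slot $r$ creating $k-r$ inversions, with $y_1\succ_\pi y_k$ iff $r>p$, and independence of the insertion from the partial ranking under the repeated-insertion model) is all accurate. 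One minor point worth making explicit: the lemma as stated allows an arbitrary central ranking, so you should open by invoking label-invariance to assume $\pi^*=(x_1,\ldots,x_m)$, as the paper does at the start of that appendix subsection.
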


\begin{lemma}[Monotonicity of Pairwise Comparison]
\label{lem:mono_pairwise}
In a Mallows distribution $\pi\sim \mathcal{D}(\pi^*, \phi)$, the probability of the pairwise comparison between items $x_i$ and $x_j$ with $j> i$ increases with the index gap $j-i$.
Particularly, $\myP[x_i \succ_{\pi} x_j] \le \myP[x_{i'} \succ_{\pi} x_{j'}]$ if $j'-i'\ge j-i$.
\end{lemma}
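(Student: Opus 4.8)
The plan is to exploit the closed form from \Cref{lem:preferonetoother}, which shows that $\myP[x_i \succ_\pi x_j]$ depends on $i,j$ only through the gap $k = j-i+1$. Hence it suffices to prove that the single-variable quantity $f(k) := \frac{k}{1-\exp(-\phi k)} - \frac{k-1}{1-\exp(-\phi(k-1))}$ is nondecreasing in the integer $k \ge 1$. Monotonicity in the gap $j-i$ then follows immediately, and the claimed inequality $\myP[x_i \succ_\pi x_j] \le \myP[x_{i'}\succ_\pi x_{j'}]$ whenever $j'-i' \ge j-i$ is exactly the statement that $f(k) \le f(k')$ for $k \le k'$.

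The key observation is that $f$ is a discrete first difference: setting $g(k) = \frac{k}{1-\exp(-\phi k)}$, one has $f(k) = g(k) - g(k-1)$. First differences of a convex function are nondecreasing, so it is enough to show that $g$ is convex. First I would pass to a continuous variable by writing $t = \phi k$ and $h(t) = \frac{t}{1-\exp(-t)}$, so that $g(k) = \frac{1}{\phi}\,h(\phi k)$; since $\phi > 0$, convexity of $h$ on $(0,\infty)$ yields convexity of $g$, and evaluating at consecutive integers gives $f(k+1) \ge f(k)$. A minor boundary point to note is that at $k=1$ the $(k-1)$ term is a removable $0/0$ equal to $\lim_{s\to 0} s/(1-\exp(-\phi s)) = 1/\phi$, consistent with extending $g$ continuously to $g(0) = 1/\phi$, so no special case is needed.

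The analytic core, and the hard part, is establishing $h''(t) \ge 0$ on $(0,\infty)$. Differentiating $h$ twice and clearing the positive denominator $(1-\exp(-t))^3$, the second derivative factors (with $u := \exp(-t)$) as $h''(t) = \frac{u\,[(t-2)+u(t+2)]}{(1-u)^3}$, so the sign of $h''(t)$ equals the sign of $\rho(t) := (t-2) + \exp(-t)(t+2)$. I would then prove $\rho \ge 0$ on $[0,\infty)$ by a nested boundary argument: $\rho(0)=0$, $\rho'(0)=0$, and $\rho''(t) = t\exp(-t) \ge 0$. The last fact makes $\rho'$ nondecreasing, hence $\rho' \ge \rho'(0)=0$, hence $\rho$ is nondecreasing and so $\rho \ge \rho(0)=0$. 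This gives $h'' \ge 0$, completing the chain.

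The main obstacle is precisely the $h'' \ge 0$ computation: the brute-force second derivative is unwieldy, and the entire payoff comes from factoring it into $u\,[(t-2)+u(t+2)]$ and recognizing that the resulting bracket $\rho$ can be controlled by two successive integrations of the manifestly nonnegative $\rho''(t)=t\exp(-t)$. A purely discrete alternative, verifying $g(k+1)-2g(k)+g(k-1)\ge 0$ directly, is in principle possible but algebraically messier; the continuous-convexity route is cleaner and avoids keeping track of several exponential terms simultaneously.
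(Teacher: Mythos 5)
Your proof is correct, but it takes a genuinely different route from the paper's. The paper also invokes \Cref{lem:preferonetoother}, but only to transfer between positions (since the probability depends solely on the index difference); its core step is combinatorial: for $i=i'$ and $j'\ge j$, it maps each ranking in the event $\{x_{j'}\succ_\pi x_i \succ_\pi x_j\}$ injectively to one in $\{x_j \succ_\pi x_i \succ_\pi x_{j'}\}$ by swapping $x_j$ and $x_{j'}$, and inversion-monotonicity makes the image weakly more probable, giving $\myP[x_i\succ_\pi x_j]\le \myP[x_i\succ_\pi x_{j'}]$. You instead work entirely inside the closed form: writing $\myP[x_i\succ_\pi x_j]=f(k)=g(k)-g(k-1)$ with $g(k)=k/(1-\exp(-\phi k))=\tfrac1\phi h(\phi k)$ and $h(t)=t/(1-e^{-t})$, you reduce monotonicity of $f$ to convexity of $h$, and your analytic core checks out: I verified $h''(t)=\frac{u\left[(t-2)+u(t+2)\right]}{(1-u)^3}$ with $u=e^{-t}$, and your nested argument $\rho(0)=\rho'(0)=0$, $\rho''(t)=te^{-t}\ge 0$ correctly yields $\rho\ge 0$ (note $\rho'(t)=1-e^{-t}(t+1)$); the removable singularity at $k=1$ is handled consistently, and in fact only $k\ge 2$ is ever needed since $j>i$. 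The trade-off: the paper's swap argument is distribution-robust—its key step uses only inversion-monotonicity, not Mallows-specific structure—whereas your calculus proof is tied to the Mallows closed form; in exchange, yours is self-contained given \Cref{lem:preferonetoother}, avoids the bijection bookkeeping, and proves something slightly stronger, namely that the pairwise-comparison probability is the first difference of a convex function of the gap (discrete convexity), not merely nondecreasing.
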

\begin{proof}
We first prove it holds for the case of $i=i'$.
Hence, by assumption, $j'\ge j$.
For any ranking $\pi$ that ranks $x_i$ before $x_{j}$ while ranking $x_i$ after $x_{j'}$, we map it to a new one by swapping $x_{j}$ and $x_{j'}$.
By \cite[Lemma 1]{donahue2024listsbetteronebenefits}, the number of inversions reduces by at least one, and the probability of the new ranking is weakly larger than the original one.
Therefore, it implies that the probability of $x_i$ being before $x_j$ is smaller.
By \Cref{lem:preferonetoother}, the probability of pairwise comparison only depends on the difference of the two indices.
Therefore, the lemma holds for every $i, j, i'$, and $j'$.
\end{proof}

\begin{lemma}\label{lem:set_constant}
For any $n$ and a constant $\epsilon < 1/2$, there exists $\phi > 0$ and $m$ such that in Mallows distribution $\pi \sim \D(\pi^*, \phi)$ with ground-truth ranking $\pi^* = (x_1, \ldots, x_m)$, for any two items $x_i$ and $x_j$, the following holds
\begin{align*}
\myP[x_i \succ_\pi x_j] \in \begin{cases}
(0, \frac12 + \epsilon], & \text{if } j- i +1 \le n \\
[1 - \epsilon, 1), & \text{if } j- i +1 \ge m - n
\end{cases}
\end{align*}
\end{lemma}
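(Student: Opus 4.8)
The plan is to collapse the entire two-sided claim onto the single-variable closed form of \Cref{lem:preferonetoother}, which gives
\[
\myP[x_i \succ_\pi x_j] = f(k) := \frac{k}{1-\exp(-\phi k)} - \frac{k-1}{1-\exp(-\phi(k-1))}, \qquad k = j-i+1 .
\]
Thus the pairwise probability depends on $x_i,x_j$ only through the index gap $k$ (which ranges over $\{2,\dots,m\}$ for distinct items), and by \Cref{lem:mono_pairwise} the map $f$ is nondecreasing in $k$. This monotonicity immediately reduces the claim to controlling $f$ at just the two boundary gaps: it suffices to arrange $f(n) \le \tfrac12 + \epsilon$ (which then covers every $k \le n$) and $f(m-n) \ge 1-\epsilon$ (which then covers every $k \ge m-n$).

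First I would fix $\phi$ to handle the near regime. A Taylor expansion as $\phi \to 0^+$ gives $\frac{k}{1-\exp(-\phi k)} = \frac{1}{\phi} + \frac{k}{2} + O(\phi)$ for each fixed $k$, and subtracting the analogous expansion of the second summand makes the divergent $1/\phi$ terms cancel, yielding $\lim_{\phi \to 0^+} f(k) = \tfrac12$. Hence, by continuity in $\phi$, I can choose $\phi > 0$ small enough that $f(n) \le \tfrac12 + \epsilon$; monotonicity then delivers $f(k) \le \tfrac12 + \epsilon$ for all $2 \le k \le n$. Note this uses only the finitely many gaps up to $n$, so no uniformity over $k$ is needed here.

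With $\phi$ now pinned down, I would handle the far regime by driving the gap to infinity. Writing $\frac{k}{1-\exp(-\phi k)} = k + k\exp(-\phi k) + O(k\exp(-2\phi k))$, the leading $k$ and $k-1$ terms cancel up to the constant $1$, while the residual terms $k\exp(-\phi k)$ and $(k-1)\exp(-\phi(k-1))$ vanish as $k \to \infty$ (exponential decay beats the linear factor); thus $\lim_{k \to \infty} f(k) = 1$. Consequently there is a threshold $K$ with $f(K) \ge 1-\epsilon$, and by monotonicity $f(k) \ge 1-\epsilon$ for all $k \ge K$. Setting $m := K + n$ makes $m-n = K$, so $f(k) \ge 1-\epsilon$ whenever $k \ge m-n$. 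The strict endpoints ($f(k)>0$ and $f(k)<1$) then hold automatically because $\phi$ is finite and positive: under the Mallows model every pairwise ordering has strictly positive probability.

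The main obstacle is the apparent tension between the two regimes: the near bound pushes $f$ toward $\tfrac12$ (favoring small $\phi$), whereas the far bound pushes $f$ toward $1$ (favoring large $\phi k$). The resolution lies entirely in the order of quantifiers, since both $\phi$ and $m$ are ours to pick: I first fix $\phi$ as a function of $n$ and $\epsilon$, and only then inflate $m$ so that the smallest far-regime gap $m-n$ is large enough that $\phi(m-n)$ forces $f$ within $\epsilon$ of $1$. The only technical care needed is to confirm that each limit is taken in a single parameter with the other held fixed, so the expansions above are valid pointwise and no joint uniform estimate is required.
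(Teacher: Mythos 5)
Your proposal is correct, and it reaches the statement by a partially different route than the paper. The shared skeleton is the same: fix $\phi$ first to control the small-gap regime, then inflate $m$ for the large-gap regime, using the monotonicity of \Cref{lem:mono_pairwise} to reduce everything to the two boundary gaps. The difference is in how each boundary is handled. For the near regime, the paper does \emph{not} use the closed form at all: it runs a combinatorial swap argument (mapping each ranking with $x_i \succ x_j$ to its swap, with probability ratio at most $\exp(\phi n)$), which yields $\myP[x_i \succ_\pi x_j] \le \frac{\exp(\phi n)}{1+\exp(\phi n)}$ and hence the explicit choice $\phi = \frac{1}{n}\log\frac{1+2\epsilon}{1-2\epsilon}$; you instead apply the closed form of \Cref{lem:preferonetoother} with a Taylor expansion in $\phi \to 0^+$, which is a clean unification (both regimes through the single function $f(k)$) but is non-constructive. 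Likewise in the far regime the paper extracts a quantitative estimate, $f(k) \ge \frac{1-k\exp(-\phi(k-1))}{1-\exp(-\phi(k-1))}$, leading to the explicit $m = 1/\phi^2 + n$, where you take the limit $k \to \infty$ and invoke an unspecified threshold $K$. Your limit computations are valid (the $1/\phi$ terms do cancel, and $k\exp(-\phi k) \to 0$ for fixed $\phi > 0$), and your observation that the quantifier order dissolves the apparent tension between the two regimes is exactly right, as is your implicit reliance on the fact that the pairwise marginal is independent of $m$. What the paper's explicitness buys, and your version forgoes, is worth noting: this lemma is consumed inside the \NP-hardness reductions of \Cref{thm:SWNP} and \Cref{thm:comp_general}, where the constructed instance must have size polynomial in the input, so one needs $m = O(n^2)$ (as the paper's $m = 1/\phi^2 + n$ with $\phi = \Theta(1/n)$ provides) rather than a bare existential $m$. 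For the lemma as literally stated your proof suffices, but if you intend it to support the reductions, you should upgrade your two limits to the explicit bounds (which your expansions already essentially contain).
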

\begin{proof}
When $j - i +1 \le n$, we first show an upper bound of the probability of $x_i$ being before $x_j$.
Similarly, we define a bijective mapping from the rankings where $x_i$ is before $x_j$ to rankings where $x_i$ is after $x_i$ by swapping the location of $x_i$ and $x_j$.
Since $j - 1 + 1\le n$, the number of inversions will increase by at most $n$.
Hence, $\myP[x_i \succ_\pi x_j] \le \exp(\phi\cdot n)$, which immediately implies that, $\myP[x_i \succ_\pi x_j] \le \frac{\exp(\phi\cdot n)}{1+ \exp(\phi\cdot n)}$.
Thus, by direct calculation, it suffices to set $\phi = \frac1n\cdot \log(\frac{1+2\epsilon}{1 - 2\epsilon})$ for the first condition.

Next, we find the number $m$ satisfying the second condition. 
By \Cref{lem:mono_pairwise}, for any fixed $\phi$, the probability increases as $k$.
Hence, it suffices to show it holds for $m-n$.
Recall that, by \Cref{lem:preferonetoother},
$$
\myP[x_i \succ_\pi x_j] = \frac{k}{1-\exp(-\phi\cdot k)} - \frac{k-1}{1-\exp(-\phi\cdot (k-1))}\,.
$$
Direct calculation gives that
\begin{align*}
\myP[x_i \succ_\pi x_j] 
& = \frac{1}{1-\exp(-\phi\cdot (k-1))}\cdot \left(1 - \frac{\exp(-\phi\cdot (k-1)) - \exp(-\phi\cdot k)}{1- \exp(-\phi\cdot k)}k\right) \\ 
& \ge  \frac{1}{1-\exp(-\phi\cdot (k-1))}\cdot \left(1- \exp(-\phi\cdot (k-1))\cdot k \right)\,.
\end{align*}
To make sure the probability is greater than $1-\epsilon$, it suffices to set $\exp(-\phi\cdot (k-1)) \le \epsilon/k$, which can be satisfied when $k/\log k \ge 1/\phi$.
As $\log k \le \sqrt{k}$, it suffices to set $m - n \ge k \ge 1/\phi^2$.
Therefore, we set $\phi = \frac1n\cdot \log(\frac{1+2\epsilon}{1 - 2\epsilon})$ and $m = 1/\phi^2 + n$.
\end{proof}

\begin{restatable}{lemma}{LemLowerBoundForComparison}
\label{lem:LowerBoundForComparison}
In a mallows distribution $\pi\sim \D(\pi^*, \phi)$, given a subset of items $\mS =\{x_{i(1)}, \ldots, x_{i(k)}\}$ with $i(1) < \ldots < i(k)$, the probability of $x_{i(1)}$ being placed before all other items of $\mS$ in $\pi$ satisfies
$$
\myP_{\pi \sim \mathcal{D}(\pi^*, \phi)}\left[x_{i(1)} \succ_{\pi} \mS \right] \ge \frac{1}{Z_k(\phi)}\,.
$$
\end{restatable}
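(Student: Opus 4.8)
The plan is to order the items of $\mS$ by the ground-truth ranking and, for each one, track the probability that it is the \emph{first} element of $\mS$ to appear in $\pi$. Write $q_t = \myP[x_{i(t)} \succ_\pi \mS]$ for $t\in[k]$; since exactly one item of $\mS$ leads the rest, $\sum_{t=1}^k q_t = 1$, and the quantity we want to lower bound is $q_1$ (as $x_{i(1)}$ is the ground-truth-best item of $\mS$, because $i(1)<\dots<i(k)$). The whole statement will reduce to the geometric-decay claim that $q_{t+1}\le e^{-\phi}q_t$ for every $t\in[k-1]$. Indeed, iterating this gives $q_t \le e^{-\phi(t-1)}q_1$, and summing over $t$ yields
\[
1 = \sum_{t=1}^k q_t \;\le\; q_1 \sum_{t=1}^k e^{-\phi(t-1)} \;=\; q_1\, Z_k(\phi),
\]
i.e. $q_1 \ge 1/Z_k(\phi)$, as desired.

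To prove the decay claim I would fix $t$, set $a = x_{i(t)}$ and $b = x_{i(t+1)}$ (so $a \succ_{\pi^*} b$ since $i(t)<i(t+1)$), and use the involution $\pi \mapsto \pi\circ(a,b)$ that swaps the positions of $a$ and $b$. Restricted to the event $\{b \succ_\pi \mS\}$ this map is injective, and its image lies in $\{a \succ_\pi \mS\}$: if $b$ precedes every other item of $\mS$ in $\pi$, then after exchanging $a$ and $b$ the item $a$ occupies $b$'s former slot, which is ahead of all items of $\mS\setminus\{b\}$ (whose positions are unchanged) and ahead of $b$ itself, so $a$ now leads $\mS$. Summing the per-permutation bound $\myP[\pi] \le e^{-\phi}\,\myP[\pi\circ(a,b)]$ over the event $\{b \succ_\pi \mS\}$, and using that the images are distinct elements of $\{a \succ_\pi \mS\}$, gives exactly $q_{t+1} \le e^{-\phi} q_t$.

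The crux — and the step demanding the most care — is the per-permutation inequality $\myP[\pi] \le e^{-\phi}\,\myP[\pi\circ(a,b)]$, i.e. that exchanging the worse item $b$ (which lies \emph{ahead} of $a$ on this event) back behind the better item $a$ decreases the Kendall--Tau distance $d(\pi^*,\cdot)$ by at least one. This is \emph{not} automatic for a non-adjacent transposition, so I would prove it by a direct inversion count: the pair $(a,b)$ goes from inverted to sorted, contributing $-1$; and for each item $y$ lying strictly between $a$ and $b$ in $\pi$, the exchange flips both the $y$--$a$ and the $y$--$b$ comparisons. A short case analysis on the ground-truth position of $y$ (better than both, strictly between, or worse than both) shows each such $y$ contributes $0$ or $-2$, never a positive amount, so the net change is at most $-1$. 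By the Mallows weight $\exp(-\phi\cdot d(\pi^*,\cdot))$ this yields the claimed factor $e^{\phi}$, completing the proof. I expect this intervening-item case analysis to be the only delicate point; everything else is bookkeeping. The same decay inequality, summed instead as $\sum_{t\le s} q_t$, should also deliver the generalization in \Cref{lem:general_lower_bound_for_comparison}.
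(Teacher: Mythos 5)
Your proof is correct, and it reaches the bound by a route that differs from the paper's in two respects. First, the decomposition: you partition permutations by \emph{which} item of $\mS$ leads the set, setting $q_t = \myP[x_{i(t)} \succ_\pi \mS]$ and proving the decay $q_{t+1} \le e^{-\phi} q_t$ via the fixed-pair transposition $(x_{i(t)}, x_{i(t+1)})$; the paper instead partitions by the \emph{relative position} of the single best item $x_{i(1)}$ within $\mS$, and its swap map exchanges $x_{i(1)}$ with whichever item happens to occupy the adjacent relative slot (a permutation-dependent partner). Both skeletons are swap-injection plus geometric decay summed against $Z_k(\phi)$, and both are sound. Second, the key inequality: where the paper cites Lemma 1 of \citep{donahue2024listsbetteronebenefits} for the fact that a valid (possibly non-adjacent) transposition reduces the inversion count by at least one, you prove it from scratch, and your three-case analysis of the intervening items $y$ (contributing $0$, $-2$, or $0$ according to whether $y$ is better than both, between, or worse than both in $\pi^*$) is exactly right; correctly flagging that this is not automatic for non-adjacent swaps is the most careful part of your write-up. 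What each approach buys: the paper's citation makes its proof shorter, while your version is self-contained; moreover, your decomposition by leading item is precisely the one the paper itself adopts for the generalization in \Cref{lem:general_lower_bound_for_comparison} (its $\mathfrak{S}_j$ and $f(j) \ge e^\phi f(j+1)$ are your $q_t$ and decay claim), so your route unifies the two lemmas under one partition — though note that extracting $\sum_{j\le s} q_j \ge Z_s(\phi)/Z_k(\phi)$ from the decay alone still requires the nontrivial summation manipulation the paper carries out there, not just the termwise bound you used for $s=1$.
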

\begin{proof}
We partition the set of permutations into $k$ sets based on the relative position of $x_{i(1)}$.
Let $\mathfrak{S}_t$ be the set of permutations where $x_{i(1)}$ is placed at the $t$-th relative position among all items of $\mS$.
For any $\pi \in \mathfrak{S}_{t+1}$ with $t\le k-1$, we map it to a permutation $\pi \in \mathfrak{S}_t$ by swapping $x_{i(1)}$ and the item that is placed at the $t$-th position among all items of $\mS$.
It is a valid swapping with respect to $\pi$ and $\pi^*$ by definition.
By \cite[Lemma 1]{donahue2024listsbetteronebenefits}, the probability of $\pi'$ is at least $\exp(\phi)$ times the probability of $\pi$.
By summing the inequality up over all permutations of $\mathfrak{S}_{t+1}$, we can conclude that $\myP[\pi \in \mathfrak{S}_t] \ge \exp(\phi)\cdot \myP[\pi \in \mathfrak{S}_{t+1}]$.
Hence, by multiplying these inequalities, we can have
$$
\myP[\pi \in \mathfrak{S}_1] \ge \exp(\phi\cdot (t-1))\cdot \myP[\pi \in \mathfrak{S}_t]\,.
$$ 
Therefore, by summing it over all $t$, we have
\begin{equation*}
\myP[\pi\in \mathfrak{S}_1] \ge \frac{1}{\sum_{t=1}^k \exp(\phi\cdot (t-1))} \cdot \sum_{t=1}^k \myP[\pi\in \mathfrak{S}_t] = \frac{1}{Z_k(\phi)}.
\qedhere    
\end{equation*}
\end{proof}

\begin{restatable}{lemma}{LemGeneralLowerBoundForComparison}
\label{lem:general_lower_bound_for_comparison}
In a mallows distribution $\pi\sim \mathcal{D}(\pi^*, \phi)$ with $\pi^* = (x_1, \ldots, x_m)$, given a subset of items $\mS=\{x_{i(1)}, \ldots, x_{i(k)}\}$ with $i(1) < \ldots < i(k)$, for any $s$ such that $1\le s\le k$, the probability of one of $x_{i(1)}, \ldots, x_{i(s)}$ being the first among all items of $\mS$ is at least
$$
\sum_{j=1}^s \myP \left[x_{i(j)} \succ_{\pi} \mS\right] \ge \frac{Z_s(\phi)}{Z_k(\phi)}\,.
$$
\end{restatable}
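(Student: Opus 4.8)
The plan is to mimic the single-item argument of \Cref{lem:LowerBoundForComparison}, but to track, for each $t\in[k]$, the probability that the overall \emph{winner} of $\mS$ (the item of $\mS$ appearing first in $\pi$) is exactly $x_{i(t)}$. Write $a_t := \myP[x_{i(t)} \succ_\pi \mS]$. Since exactly one item of $\mS$ is first, $\sum_{t=1}^k a_t = 1$, and the quantity to be lower-bounded is precisely the prefix sum $\sum_{t=1}^s a_t$. Thus the lemma reduces to the inequality $\sum_{t=1}^s a_t \ge Z_s(\phi)/Z_k(\phi)$.

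First I would establish the key recursion $a_{t+1} \le \exp(-\phi)\,a_t$ for every $t\in[k-1]$. Let $\mathfrak{T}_t = \{\pi : x_{i(t)} \succ_\pi \mS\}$, so $a_t = \myP[\mathfrak{T}_t]$, and consider the transposition $\tau = (x_{i(t)}\,x_{i(t+1)})$. For $\pi\in\mathfrak{T}_{t+1}$ the winner $x_{i(t+1)}$ precedes $x_{i(t)}$ in $\pi$, while $x_{i(t)}\succ_{\pi^*} x_{i(t+1)}$ because $i(t)<i(t+1)$; hence $\tau$ is a valid swap, and by \cite[Lemma 1]{donahue2024listsbetteronebenefits} it reduces the Kendall--Tau distance by at least one, so $\myP[\pi\circ\tau] \ge \exp(\phi)\,\myP[\pi]$. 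Applying $\tau$ moves $x_{i(t)}$ into the earliest $\mS$-position and pushes $x_{i(t+1)}$ later while fixing every other item, so $\pi\circ\tau \in \mathfrak{T}_t$; since $\tau$ is a fixed involution, $\pi\mapsto \pi\circ\tau$ is injective. Summing the pointwise bound over $\mathfrak{T}_{t+1}$ then gives $a_t \ge \exp(\phi)\,a_{t+1}$, as claimed.

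It remains to deduce $\sum_{t=1}^s a_t \ge Z_s(\phi)/Z_k(\phi)$ from $a_{t+1}\le \exp(-\phi)a_t$, $a_t\ge 0$, and $\sum_{t=1}^k a_t = 1$. Put $w_t = \exp(-\phi(t-1))$, so that $Z_s(\phi) = \sum_{t=1}^s w_t$ and $Z_k(\phi)=\sum_{t=1}^k w_t$, and define $r_t = a_t/w_t = a_t\exp(\phi(t-1))$. The recursion is exactly the statement that $r_{t+1}\le r_t$, i.e.\ $(r_t)_{t=1}^k$ is non-increasing. Since $\sum_t a_t = 1$, the target inequality is equivalent to $\big(\sum_{t\le s} a_t\big)\big(\sum_{u>s} w_u\big) \ge \big(\sum_{t\le s} w_t\big)\big(\sum_{u>s} a_u\big)$, and substituting $a_t = w_t r_t$ shows the difference of the two sides equals $\sum_{t\le s}\sum_{u>s} w_t w_u (r_t - r_u)$, which is non-negative term by term because $r_t \ge r_u$ whenever $t \le s < u$. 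The cases $s=1$ and $s=k$ recover \Cref{lem:LowerBoundForComparison} and the trivial identity $1\ge 1$, respectively.

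The main obstacle is obtaining the \emph{exact} constant $Z_s(\phi)/Z_k(\phi)$ rather than a weaker one: the naive estimate that bounds each tail term $a_u$ ($u>s$) by $\exp(-\phi(u-s))a_s$ and then $a_s$ by the whole head $\sum_{t\le s}a_t$ discards the head's own geometric decay and yields only the strictly weaker bound $1/Z_{k-s+1}(\phi)$. The monotonicity of $r_t = a_t\exp(\phi(t-1))$ across the \emph{entire} index range is what lets the head's geometric structure be used as well, and it is the clean way to package both halves of the estimate into a single rearrangement inequality. Secondary care is needed only in verifying that the transposition map lands in $\mathfrak{T}_t$ and is injective, both of which follow immediately since $\tau$ is a fixed involution moving only two items.
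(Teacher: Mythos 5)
Your proof is correct and takes essentially the same route as the paper's: your decay bound $a_{t+1} \le \exp(-\phi)\,a_t$ is exactly the paper's $f(j) \ge \exp(\phi)\,f(j+1)$, obtained by the same valid-swap injection on winner events, and your rearrangement step $\sum_{t\le s}\sum_{u>s} w_t w_u (r_t - r_u) \ge 0$ is an algebraically equivalent (if more cleanly packaged) form of the paper's summation-reordering computation using $\sum_t f(t)=1$ and the decomposition $Z_k(\phi)-Z_s(\phi)=\exp(-\phi s)Z_{k-s}(\phi)$. The only difference is presentational: you spell out the transposition mapping that the paper invokes by reference to the argument of \Cref{lem:LowerBoundForComparison}.
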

\begin{proof}
Similar to the above proof, let $\mathfrak{S}_j$ be the set of permutations where $x_{i(j)}$ is the first one among all the items in $\mS$.
Then the left-hand side can be rewritten as 
$\sum_{j=1}^s \myP[\pi \in \mathfrak{S}_j]$. 
Denote each term by $f(j)$ for $j = 1, \ldots, s$.
Using the argument of valid-swapping-based mapping, we have $f(i) \ge \exp(\phi)\cdot f(i+1)$.
Meanwhile, as $\sum_{i=1}^k f(i) = 1$,  we have 
\begin{align*}
& \sum_{j=1}^s f(j) -  \frac{Z_s(\phi)}{Z_k(\phi)} = \sum_{j=1}^s f(j) -  \frac{Z_s(\phi)}{Z_k(\phi)} \cdot \sum_{j=1}^k f(j) \\ 
& \propto \sum_{j=1}^s f(j)\cdot (Z_k(\phi) - Z_s(\phi)) -  \sum_{j=s+1}^k f(j)\cdot Z_s(\phi) \tag{By multiplying $Z_k(\phi)$}\\
& = \sum_{j=1}^s f(j)\cdot \exp(-\phi\cdot s)\cdot Z_{k-s}(\phi) - \sum_{j=s+1}^k f(j)\cdot Z_s(\phi)\tag{$Z_k(\phi) - Z_s(\phi) = \frac{Z_{k-s}(\phi)}{\exp(\phi\cdot s)}$}
\end{align*}
Since $Z_{k-s}(\phi)$ can be unfolded as $Z_{k-s}(\phi) = \sum_{t=1}^{k-s} \exp(-\phi(t-1))$, by changing the order of summation, we can obtain that
\begin{align*}
\sum_{j=1}^s f(j) -  \frac{Z_s(\phi)}{Z_k(\phi)}  & \propto \sum_{t=1}^{k-s} \exp(-\phi\cdot (t- 1))\cdot \sum_{j=1}^s f(j)\cdot \exp(-\phi\cdot s)- \sum_{j=1}^k f(j)\cdot Z_s(\phi) \\
& =  \sum_{t=s+1}^{k} \exp(-\phi\cdot (t- 1))\cdot \sum_{j=1}^s f(j) - \sum_{j=s+1}^k f(j)\cdot Z_s(\phi)\\
& = \sum_{t=s+1}^{k}\sum_{j=1}^s \left(f(j)\cdot \exp(-\phi\cdot (t- 1))\right) -\sum_{j=s+1}^k f(j)\cdot Z_s(\phi)\\
& \ge \sum_{t=s+1}^{k}\sum_{j=1}^s f(t)\cdot \exp(-\phi\cdot (j-1)) - \sum_{j=s+1}^k f(j)\cdot Z_s(\phi) \tag{$f(i) \ge f(i+1)$} \\
& = \sum_{t=s+1}^{k}f(t)\cdot Z_s(\phi) - \sum_{j=s+1}^k f(j)\cdot Z_s(\phi) = 0,
\end{align*}
which concludes the proof.
\end{proof}

\section{Omitted Details of Section 4}
\label{app:omitted_proofs_of_sec4}

\subsection{Omitted Proofs of \Cref{thm:benefit_inv1}}

\LemProbOfiP*
\begin{proof}
Let $\ar_2^*$ be the ground-truth ranking of $A_2$, which swaps the ranking $x_i$ and $x_j$ compared to $A_1$'s ground-truth ranking $\ar_1^*$.
Let $\D_a^1$ and $\D_a^2$ be the distributions with ground-truth ranking $\ar_1^*$ and $\ar_2^*$ respectively.
In addition, denote by $x_C^1$ and $x_C^2$ the two items picked by the human working with Algorithm 1 and Algorithm 2, respectively.

\paragraph{Case I: Picking item other than $x_i$ and $x_j$.}
We first consider the change in the probability of picking some item other than $x_i$ or $x_j$.
Without loss of generality, denote by $x_r$ the considered item.
Denote by $x_C^1$ and $x_C^2$, respectively, the items picked by the human when collaborating with algorithm 1 and algorithm 2.
By the way of human-algorithm interaction, $x_r$ is chosen by a human only if it is included in the presented list $\mS$ (i.e., $\ars{k}= \mS$) and the human ranks it before any other item of $\mS$ (i.e., $x_r \succ_{\hr} \mS$).
Thus, the probabilities of $x_r$ being picked by the human are given by
\begin{align}
\myP[x_C^1 = x_r] &= \sum_{\mS: x_r \in \mS } \myP\left[\aris{1}{k}= \mS \right]\times \myP\left[x_r \succ_{\hr} \mS \right] \tag{Prob 1}, 
\label{prob:picking_xi_system1}\\
\myP[x_C^2 = x_r] &= \sum_{\mS: x_r \in \mS} \myP\left[\aris{2}{k} = \mS \right]\times \myP\left[x_r \succ_{\hr} \mS \right]
\label{prob:picking_xi_system2}\tag{Prob 2},
\end{align}
where $\pi_1\sim \D_a^1$, $\pi_2\sim \D_a^2$, and $\rho\sim \D_h$.

Next, we conduct a term-by-term comparison of \ref{prob:picking_xi_system1} and \ref{prob:picking_xi_system2} with respect to each $\mS$.
We start with two simple cases: $\mS$ contains both $x_i$ and $x_j$; $\mS$ contains neither $x_i$ nor $x_j$.
For either of the two cases, consider the permutation $\ar_1$ whose first $k$ items are $\mS$.
Swap $x_i$ and $x_j$ in $\ar_1$ and get another permutation $\ar_2$, which shares the same set of the first $k$ items.
As the two distributions only differ in the relative rankings of $x_i$ and $x_j$ in their ground-truth rankings, the probability of $\ar_1$ occurring in $\D_a^1$ is the same as $\ar_2$ occurring in $\D_a^2$.
Therefore, the two summations are equal in these terms.

Next, we compare the two probabilities by pairing the remaining summation terms corresponding to $\mS$ that contains only one of $x_i$ and $x_j$.
Denote by $\mS^i$ a set $\mS$ containing $x_i$ without $x_j$.
We pair $\mS^i$ with another subset by substituting item $x_i$ with $x_j$.
Denote the new set by $\mS^j$.
Next, we show that the following inequality holds for every constructed pair $(\mS^i, \mS^j)$,
\begin{align}\label{ineq:pair_sl_sr}
\sum_{\mS\in (\mS^i, \mS^j)} \myP\left[\aris{1}{k}= \mS \right] \myP_{\hr \sim \D_h}\left[x_r \succ_{\hr} \mS \right] \le &\sum_{\mS\in (\mS^i, \mS^j)} \myP\left[\aris{2}{k} = \mS \right]\myP_{\hr \sim \D_h}\left[x_r \succ_{\hr} \mS \right] \tag{InEq (1)}\,.
\end{align}
As $x_i$ is better than $x_j$ in human's ground-truth ranking $\hr^*$, it is easier for the human to rank $x_r$ before $x_j$, as described in \Cref{lem:CompMono}.
Formally,
$$
\myP[x_r \succ_{\hr} \mS^j] \ge  \myP[x_r \succ_{\hr} \mS^i]\,.     
$$
Also, since the two distributions $\D_a^1$ and $\D_a^2$ are isomorphic, differing only by a swap between $x_i$ and $x_j$ in ground-truth rankings, by relabeling the two items, we have 
\begin{align*}
&\myP_{\ar_1\sim \D^1_a} \left[\aris{1}{k} = \mS^i\right]  = 
\myP_{\ar_2\sim \D^2_a} \left[\aris{2}{k} = \mS^j\right],\\
&\myP_{\ar_1\sim \D^1_a} \left[\aris{1}{k} = \mS^j\right]  = 
\myP_{\ar_2\sim \D^2_a} \left[\aris{2}{k} = \mS^i\right]\,.
\end{align*}
Meanwhile, by \Cref{lem:mono_presenting_items}, as $x_i$ is placed before $x_j$ in $\ar_1^*$, we have
$$
\myP_{\ar_1\sim \D^1_a} \left[\aris{1}{k} = \mS^i\right] > \myP_{\ar_1\sim \D^1_a} \left[\aris{1}{k} = \mS^j\right]\,.
$$
By applying the rearrangement inequality, then \ref{ineq:pair_sl_sr} holds.

\paragraph{Case II: Picking $x_i$ or $x_j$.}
We next consider the change of the probabilities of picking $x_i$ or $x_j$.
First, we show that the probability of picking $x_i$ decreases after the algorithm places $x_i$ after $x_j$.
Similarly, 
\begin{align*}
\myP[x_C^1 = x_i] &= \sum_{\mS: x_i \in \mS } \myP\left[\aris{1}{k}= \mS \right] \cdot \myP_{\hr \sim \D_h}\left[x_i \succ_{\hr} \mS \right], \\
\myP[x_C^2 = x_i] &= \sum_{\mS: x_i \in \mS} \myP\left[\aris{2}{k} = \mS \right] \cdot \myP_{\hr \sim \D_h}\left[x_i \succ_{\hr} \mS \right].
\end{align*}
Next, we still compare the two probabilities by term.
Consider the terms where the set $\mS$ contains both $x_i$ and $x_j$.
The first terms $\myP[\aris{1}{k} = \mS]$ and $\myP[\aris{2}{k} = \mS]$ are the same by identical logic used above.
Next, consider the other terms where $\mS$ does not contain $x_j$.
Let $\mS' = \mS\setminus \{x_i\} \cup \{x_j\}$.
By a similar reason, $\myP[\aris{2}{k} = \mS] = \myP[\aris{1}{k} = \mS']$.
In addition, as $x_i$ is placed before $x_j$ in $\ar_1^*$, then by \Cref{lem:mono_presenting_items}, $\myP[\aris{1}{k} = \mS] > \myP[\ars{k} = \mS']$, which further concludes the first inequality.
The remaining comparison of the probabilities that the human selects $x_j$ under the two algorithms follows by symmetry.
\end{proof}

\ThmBenefit*
\begin{proof}
Consider the change in the probability of the human picking each item.
By the above lemma, after swapping $x_i$ to a later position in algorithm's ground-truth, only the probability of the human picking item $x_i$ decreases, while the probability of picking any other item increases.
Therefore, if both $x_i$ and $x_j$ are zero-valued to the human, then such a swap leads to an increase in the probability of picking any valuable item, which in turn increases human's utility.
For the second bullet, if $x_i$ is the most valuable item to the human, then the swap causes a loss in picking the most valuable item, which further leads to a decrease in human's utility.
\end{proof}

\CoroTopItem*
\begin{proof}
By the first bullet of \Cref{thm:benefit_inv1}, swapping any pair of zero-valued items will increase the human's utility.
Therefore, for the algorithm that maximizes human's utility, all items other than $x_1$ should be ranked in reverse order.
In addition, by the second bullet, ranking human's top item $x_1$ in any place other than the first position will decrease human's utility.
Thus, the optimal algorithm should rank $x_1$ first.
Using a symmetric argument, we can also prove that the ranking $(x_2, x_3, \ldots, x_m, x_1)$ yields the least benefit.

Notably, this result extends to the setting where the human has multiple top-valued items. Suppose the human assigns a positive value only to her top \( d \) items, which are equally valued. In this case, the algorithm's arrangement \( \ar^* \) that maximizes the human's expected utility places the top \( d \) items first, followed by the remaining items in reverse order; that is, $\ar^* = (x_1, \dots, x_d, x_m, \dots, x_{d+1})$.
\end{proof}

\subsection{Extensions of \Cref{thm:benefit_inv1}}
\label{app:extension_plackett_luce}

\lemLessUtilityMallows*
\lemLessUtility*
\begin{proof}
According to the computation of \Cref{lem:prob_of_picking_item_i1}, for every $r\neq i, j$, we have 
\begin{align*}
\myP[x_C^2 = x_r] - \myP[x_C^1 = x_r] = \left(\myP[\pi_1[:2] = \{x_i, x_r\}] - \myP[\pi_1[:2] = \{x_j, x_r\}]\right)\cdot \left(\myP[x_r \succ_{\rho} x_j] - \myP[x_r \succ_{\rho} x_i]\right)\ge 0
\end{align*}
Meanwhile, we have 
\begin{align*}
\myP[x_C^2 = x_j] - \myP[x_C^1 = x_j] = \sum_{r\neq i, j}\left(\myP[\pi_1[:2] = \{x_i, x_r\}] - \myP[\pi_1[:2] = \{x_j, x_r\}]\right)\cdot \myP[x_j \succ_{\rho} x_r] \ge 0\\
\myP[x_C^2 = x_i] - \myP[x_C^1 = x_i] = \sum_{r\neq i, j}\left(\myP[\pi_1[:2] = \{x_i, x_r\}] - \myP[\pi_1[:2] = \{x_j, x_r\}]\right)\cdot \myP[x_i \succ_{\rho} x_r] \le 0
\end{align*}
Let $\psi(i, j, r)$ be $\psi(i, j, r) = \myP[\pi_1[:2] = \{x_i, x_r\}] - \myP[\pi_1[:2] = \{x_j, x_r\}]$.
Therefore, the expected change in human's utility is given by
\begin{align*}
&\sum_{r\neq i, j} v_r \cdot (\myP[x_C^2 = x_r] - \myP[x_C^1 = x_r]) + v_j \cdot (\myP[x_C^2 = x_j] - \myP[x_C^1 = x_j]) + v_i \cdot (\myP[x_C^2 = x_i] - \myP[x_C^1 = x_i])  \\
= & \sum_{r\neq i, j} v_r \cdot  \psi(i, j, r)\cdot \left(\myP[x_r \succ_{\rho} x_j] - \myP[x_r \succ_{\rho} x_i]\right) + \sum_{r\neq i, j} v_j \cdot \psi(i, j, r)\cdot \myP[x_j \succ_{\rho} x_r] -  \sum_{r\neq i, j} v_i \cdot \psi(i, j, r)\cdot \myP[x_i \succ_{\rho} x_r]  \\
= & \sum_{r\neq i, j} v_r \cdot  \psi(i, j, r)\cdot \left(\myP[x_i \succ_{\rho} x_r] - \myP[x_j \succ_{\rho} x_r]\right) +  \sum_{r\neq i, j} v_j \cdot \psi(i, j, r)\cdot \myP[x_j \succ_{\rho} x_r] -  \sum_{r\neq i, j} v_i \cdot \psi(i, j, r)\cdot \myP[x_i \succ_{\rho} x_r] \\
= & \sum_{r\neq i, j}   \psi(i, j, r)\cdot \left((v_r - v_i) \cdot (\myP[x_i \succ_{\rho} x_r]) - (v_r - v_j) \cdot (\myP[x_j \succ_{\rho} x_r]) \right) 
\end{align*}

Next, we show that the above summation is always negative when the preconditions of the two models are satisfied.
Notice that, when $i \le r \le j$, by assumption, we have $v_i \ge v_r \ge v_j$.
Hence, the inner term is always negative (as $v_i \neq v_j$).
In addition, when $r \ge j$, by \Cref{lem:CompMono}, we know $\myP[x_i \succ_\rho x_r] \ge \myP[x_j \succ_\rho x_r]$.
Meanwhile, we can observe that $\abs{x_r - x_i} \ge \abs{x_r - x_j}$.
Hence, the inner term is still negative.
It suffices to prove that every term with $r < i$ is negative.

\paragraph{(a) Plackett-Luce model.} Let $\delta = v_r - v_i$ and $\Delta = v_i - v_j$.
By the definition of the Plackett-Luce model, the inner term can be rewritten as 
\begin{align*}
(v_r - v_i)\cdot \frac{\exp(v_i/\beta)}{\exp(v_r/\beta) + \exp(v_i/\beta)} - (v_r - v_j)\cdot \frac{\exp(v_j/\beta)}{\exp(v_r/\beta) + \exp(v_j/\beta)} = \frac{\delta}{\exp(\delta/\beta) + 1} - \frac{\delta +\Delta}{\exp((\delta +\Delta) / \beta) + 1}\,.
\end{align*}
Let function $f(x) = \frac{x}{\exp(x/\beta) + 1}$.
Since $f'(x) = \frac{(1-x/\beta)\exp(x/\beta) +1 }{(\exp(x/\beta) + 1)^2}$, then $f(x)$ is increasing when $x\le 1.278$.
Therefore, the above term is always negative since $\delta + \Delta  = v_r - v_j \le v_0 - v_j \le 1.27\beta$.

\paragraph{(b) Mallows model.} Since swapping item $x_i$ and $x_j$ at most increases the number of inversion by $\abs{j-i}$, then $\myP[x_j \succ_\rho x_r] \ge \exp(-\phi_h\cdot (i-j))\cdot \myP[x_i \succ_\rho x_r]$.
Since $v_r \le v_0 \le \frac{\exp(\phi_h\cdot (i-j))}{\exp(\phi_h\cdot (i-j) - 1}v_i  - \frac{1}{\exp(\phi_h\cdot (i-j)) - 1}v_j$, we have $v_r - v_i \le \exp(-\phi_h\cdot (i-j))\cdot (v_r - v_j)$, which means that the inner term is always non-positive and concludes the proof.
\end{proof}

\lemMoreUtilityMallows*
\lemMoreUtility*
\begin{proof}
Let $i'$ be the index satisfying the given condition in each of the two models, respectively.
Then the expected change is at least
\begin{align*}
& \sum_{r\neq i, j} v_r \cdot  \psi(i, j, r)\cdot \left(\myP[x_r \succ_{\rho} x_j] - \myP[x_r \succ_{\rho} x_i]\right) + \sum_{r\neq i, j} v_j \cdot \psi(i, j, r)\cdot \myP[x_j \succ_{\rho} x_r] -  \sum_{r\neq i, j} v_i \cdot \psi(i, j, r)\cdot \myP[x_i \succ_{\rho} x_r] \\
> & \sum_{r=1}^{i'}  v_r \cdot  \psi(i, j, r)\cdot \left(\myP[x_r \succ_{\rho} x_j] - \myP[x_r \succ_{\rho} x_i]\right)  - \sum_{r\neq i, j} v_i\cdot \psi(i, j, r)\cdot \myP[x_i \succ_{\rho} x_r]
\end{align*}

\paragraph{(a) Plackett-Luce model.} When the human's ranking satisfies the Plackett-Luce model, then the above value is at least
\begin{align*}
&\sum_{r=1}^{i'}  v_r \cdot  \psi(i, j, r)\cdot \left(\frac{\exp(v_r/\beta)}{\exp(v_j/\beta) + \exp(v_r/\beta)} - \frac{\exp(v_r/\beta)}{\exp(v_i/\beta) + \exp(v_r/\beta)}\right)  - \sum_{r\neq i, j} v_i\cdot \psi(i, j, r)\cdot \frac{\exp(v_i/\beta)}{\exp(v_i/\beta) + \exp(v_r/\beta)}  \\
\ge & \sum_{r=1}^{i'}  v_r \cdot  \psi(i, j, r)\cdot \frac{\exp(v_r/\beta)\cdot (\exp(v_i/\beta) - \exp(v_j/\beta))}{(\exp(v_i/\beta) + \exp(v_r/\beta))(\exp(v_j/\beta) + \exp(v_r/\beta))}  - \sum_{r\neq i, j} v_i\cdot \psi(i, j, r)\cdot \frac{\exp(v_i/\beta)}{\exp(v_i/\beta) + \exp(v_r/\beta)} \\
\ge & \sum_{r=1}^{i'}  v_r \cdot  \psi(i, j, r)\cdot \frac{\exp(v_r/\beta)\cdot (\exp(v_i/\beta) - \exp(v_j/\beta))}{(\exp(v_i/\beta) + \exp(v_r/\beta))^2}  - \sum_{r\neq i, j} v_i\cdot \psi(i, j, r)\cdot \frac{\exp(v_i/\beta)}{\exp(v_i/\beta) + \exp(v_r/\beta)} \\
\ge & v_{i'}\cdot \sum_{r=1}^{i'} \psi(i, j, r)\cdot \frac{\exp(v_r/\beta)\cdot (\exp(v_i/\beta) - \exp(v_j/\beta))}{(\exp(v_i/\beta) + \exp(v_r/\beta))^2} - v_i \cdot \sum_{r\neq i, j}\psi(i, j, r)\cdot \frac{\exp(v_i/\beta)}{\exp(v_i/\beta) + \exp(v_r/\beta)}\\
\ge & v_{i'}\cdot \sum_{r=1}^{i'} \psi(i, j, r)\cdot \frac{ \exp(v_i/\beta) - \exp(v_j/\beta)}{2(\exp(v_i/\beta) + \exp(v_r/\beta))} - v_i \cdot \sum_{r\neq i, j}\psi(i, j, r)\cdot \frac{\exp(v_i/\beta)}{\exp(v_i/\beta) + \exp(v_r/\beta)} \\
\ge & 0\,.
\end{align*}

\paragraph{(b) Mallows model.} When the human's ranking satisfies Mallows model, we can notice that for any $r\le i'$, 
\begin{align*}
\myP[x_r \succ_{\rho} x_j] - \myP[x_r \succ_{\rho} x_i] & \ge \frac{1}{1 - \exp(-\phi_h\cdot (r-j+1))} - \frac{1}{1 - \exp(-\phi_h\cdot (r-i))} \\
&\ge \frac{\exp(-\phi_h\cdot (r- j+1)) - \exp(-\phi_h\cdot (r-i))}{(1 - \exp(-\phi_h\cdot (r-i)))\cdot (1 - \exp(-\phi_h\cdot (r-j+1)))}\\ 
&\ge \exp(-\phi_h\cdot(r-j+1))\cdot (1-\exp(-\phi_h\cdot (i-j-1))
\end{align*}
Therefore, the expected change is at least
\begin{align*}
& \sum_{r=1}^{i'}  v_r \cdot \psi(i, j, r) \cdot \exp(-\phi_h\cdot(r-j+1))\cdot (1-\exp(-\phi_h\cdot (i-j-1)) - \sum_{r\neq i, j} v_i\cdot \psi(i, j, r)\cdot \myP[x_i \succ_{\rho} x_r] \\
\ge &v_{i'}\cdot  (1-\exp(-\phi_h\cdot (i-j-1))\cdot \sum_{r=1}^{i'} \psi(i, j, r)\cdot \exp(-\phi_h\cdot(r-j+1))- v_i \sum_{r\neq i, j} \psi(i, j, r) \\ 
\ge & 0, \tag{By assumption of $v_{i'}$ and $v_i$}
\end{align*}
which concludes the lemma.
\end{proof}

\section{Omitted Proofs in Section 5}
\label{app:strategic_algo}

\subsection{\NP-Hardness of Welfare-Maximization}
\label{app:strategic_algo_welfare_max}
\thmSWNP*
\begin{proof}
As discussed, the optimal solution is noiseless.
Next, we reduce from the Independent Set problem, where the input is a graph instance $G=(V, E)$ with $n$ vertices and $m$ edges, and an integer $k$.
The output is \YES if the size of the maximum independent set of graph $G$ is at least $k$ and \NO otherwise.
We construct the human-algorithm collaboration instance in the following way.

Fix $\epsilon >0$ to be a sufficiently small constant such that $(1-\epsilon)^{k-1} > 1- 1/(2k)$ and $\epsilon <1/4$.
Construct $m$ items $x_1, \ldots, x_m$. 
Denote the $n$ vertices as $v_1, \ldots, v_n$.
For each $i\in [n]$, we construct a ground-truth ranking $\hr_i^*$ as follows: place $x_i$ as the first item in $\hr_i^*$.
Iterate through all the other vertices from $v_1$ to $v_n$ and insert $x_j$ after $x_i$ in $\hr_i^*$ if vertex $v_i$ is adjacent to vertex $v_j$ in the input graph instance $G$.
Otherwise, put $v_j$ to the end of $\hr_i^*$.
For the remaining items $x_{n+1}, \ldots, x_m$, put them in the middle of the inserted items.
Set $p_i = 1/n$ for each $i\in [n]$ and $v_1 = 1$, $v_i =0$ for $i\ge 2$.

\begin{figure}[ht]
\centering
\begin{tikzpicture}[
    redroundnode/.style={circle, draw=black, thick, minimum size=0.5cm},
    orangeroundnode/.style={circle,draw=black, thick, minimum size=0.5cm},
    blackroundnode/.style={circle, draw=black,,  thick, minimum size=0.5cm},
    ]
\node[scale=0.7] at (9, 1) {\Large $\hr^*_j: \quad x_j \succ x_i \succ x_{n+1} \succ \cdots \succ x_m \succ x_j$};
\node[scale=0.7] at (9, 0) {\Large $\hr^*_i: \quad x_i \succ x_j \succ x_{n+1} \succ \cdots \succ x_m \succ x_k$};
\node[scale=0.7] at (9, -1) {\Large $\hr^*_k: \quad x_k \succ x_i \succ x_{n+1} \succ \cdots \succ x_m \succ x_j$};

\node[] at (9, -2) {\text{Constructed rankings $\hr_i^*, \hr_j^*$, and $\hr_k^*$}};

\node[blackroundnode, scale=0.8] (vi) at (2, 0) {$v_i$};
\node[redroundnode, scale=0.8] (vj) at (3, 1) {$v_j$};
\node[orangeroundnode, scale=0.8] (vk) at (3, -1) {$v_k$};

\node[] at (3, -2) {\text{Input graph $G$}};
\draw[-, thick] (vi) -- (vj); 
\draw[-, thick] (vi) -- (vk); 
\end{tikzpicture}
\end{figure}
We choose $\phi_h$ and $m$ so that in a Mallows model with $m$ items and accuracy parameter $\phi_h$, given two items, the probability of ranking before the other is at most than $1/2+ \epsilon$ when the difference of the indices is smaller than $n$ and is larger than $1-\epsilon$ when it is larger than $m-n$, which can be achieved by \Cref{lem:set_constant}.

Next, we first show that if the input graph instance is a \YES instance, then the optimal expected social welfare is at least $(1-\epsilon)^k\cdot k/n$.
Suppose the vertex index set of the maximum independent set of $G$ is $I$ with $\abs{I} \ge k$.
Put $k$ of the independent set as the first $k$ items of $\ar^*$ and the other $m-k$ items in an arbitrary order. 
Hence, specifying $\mS = \arss{k}$, for every human $i\in I$, the probability of picking her best item is at least
\begin{align*}
\myP[x_C^i = x_i \mid \mS] & = \myP[x_i \succ_{\hr_i} \mS] = \myP\Big[\wedge_{j\in I}(x_i \succ_{\hr_i} x_j)\Big] \ge \prod_{j\in I} \myP\left[x_i \succ_{\hr_i} x_j\right] \ge (1-\epsilon)^{k-1}\,.
\end{align*}
Therefore, the expected utility of the human with ground-truth $\hr_i^*$, $\myE[u(x_J\mid \hr_i^*)]$ is at least $(1-\epsilon)^{k-1}$.
As the human has a probability of $p_i = 1/n$ of having ground-truth $\hr_i^*$ for every $i\in I$, the expected social welfare is at least $(1-\epsilon)^{k-1}\cdot k/n$.

Next, we consider when the input instance is \NO.
Denote the $k$ items of $\ar^*$ by $\mS$.
As the size of $\mS$ is $k$, there are $n-k$ humans whose best item is not picked.
Hence, their utilities have a probability of at least $(1-\epsilon)$ being $0$.
In addition, as the input instance is \NO instance, consider the vertices corresponding to $\mS$.
There must be two adjacent vertices included at the same time.
Hence, there will be two humans (say $i, j$) whose best item is within the top $n$ items of the other one's ground-truth ranking.
By the assumption of $\phi_h$, the two humans only have the probability of no more than $1/2+\epsilon$ to pick their best item.
Therefore, the expected social welfare is at most 
\begin{align*}
&\frac1n\cdot \Big(\myE[u_i(x_C^i)] + \myE[u_j(x_C^j)] + \sum_{t\in I\setminus \{i, j\}}\myE[u_t(x_C^t)] + \sum_{t\notin I}\myE[u_t(x_C^t)]\Big)  \\
& \le \frac{1}n\cdot \left(2\cdot \Big(\frac12  + \epsilon\Big) + (k-2)\cdot 1 \right)\\
& < \frac1n\cdot \left(k - 1 + 2 \epsilon\right) < \frac{1}n\cdot k \cdot (1-\epsilon)^{k-1} \tag{as $(1-\epsilon)^{k-1} > 1- 1/(2k)$ and $\epsilon < 1/4$},
\end{align*}
which is smaller than the expected social welfare under the \YES instance.
Therefore, the problem of finding a welfare-maximizing strategy is \NP-hard for the algorithm.
\end{proof}

\begin{observation}
We observe that for a fixed arrangement $\ar^*$, noiselessness is not always optimal in terms of the social welfare.
Consider the following example.
The algorithm's ground-truth is fixed as $\ar^*=(x_1, x_2, x_3)$, but items $x_1$ and $x_2$ only have positive value to a proportion of $1\%$ of people.

If the algorithm is noiseless and only presents two items to the humans, it causes $99\%$ of people to get no benefit, which results in worse social welfare than a noisy algorithm that would randomize over items that are presented.
\end{observation}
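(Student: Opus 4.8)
The plan is to exhibit a single concrete instance on which a noisy algorithm (with the \emph{fixed} arrangement $\ar^*=(x_1,x_2,x_3)$ but finite accuracy $\phi_a$) strictly outperforms the noiseless one ($\phi_a=\infty$). First I would instantiate the population with two types: a fraction $0.01$ of humans (type $A$) has positive value only for $x_1$ and $x_2$ (a ground-truth ranking headed by $x_1,x_2$ with $v_{A,3}=0$), while the remaining fraction $0.99$ (type $B$) has positive value only for $x_3$, i.e.\ $x_3$ is their top item of value $v_B>0$ and $x_1,x_2$ are worth $0$ to them. I fix $k=2$. For the noiseless algorithm, the top two items of $\ar^*=(x_1,x_2,x_3)$ are deterministically $\{x_1,x_2\}$, so every type-$B$ human is presented two worthless items and contributes $0$; hence the social welfare is exactly $0.01\cdot\myE[u_A(x_C^A)]\le 0.01\cdot v_{A,1}$, bounded by the top value of type $A$.

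Next I would analyze a noisy algorithm with any finite $\phi_a$. Since a Mallows distribution centered at $(x_1,x_2,x_3)$ assigns strictly positive probability to every ranking (e.g.\ to $(x_1,x_3,x_2)$, whose top two items are $\{x_1,x_3\}$), the event that the presented pair contains $x_3$ has probability $q:=\myP[\ars{2}=\{x_1,x_3\}]+\myP[\ars{2}=\{x_2,x_3\}]>0$, which is computable in closed form via \Cref{lem:MallowsFirstk}. Conditioned on $x_3$ being presented alongside a single competitor, a type-$B$ human selects $x_3$ with probability at least $1/Z_2(\phi_h)>1/2$ by \Cref{lem:LowerBoundForComparison}, since $x_3$ is her most-preferred item among the presented pair. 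Therefore type $B$ now contributes at least $0.99\cdot q\cdot\tfrac12\cdot v_B$, while the type-$A$ contribution can drop by at most $0.01\cdot v_{A,1}$ (its entire attainable welfare). The net change in expected social welfare is thus at least $0.99\cdot q\cdot\tfrac12\cdot v_B-0.01\cdot v_{A,1}$.

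The final step is to choose the free parameters to make this strictly positive. Since $q$ is a fixed positive constant for the chosen $\phi_a$, it suffices to take $v_B$ large enough that $0.99\,q\,v_B/2>0.01\,v_{A,1}$; then the noisy algorithm strictly beats the noiseless one, establishing the observation. (If one instead wishes to keep all top-item values equal, one can take $\phi_a$ small so that $q$ approaches $2/3$, in which case the $99{:}1$ population split alone makes $0.99\,q/2$ exceed $0.01$.)

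I expect the only care-points to be the two quantitative bounds — writing $q$ explicitly from \Cref{lem:MallowsFirstk} and lower-bounding type $B$'s conditional choice probability through \Cref{lem:LowerBoundForComparison} — together with verifying that type $A$'s loss is genuinely capped by its maximal attainable welfare (so that the negative term is controlled regardless of how the noise reshuffles $A$'s presented set). None of these require delicate estimation: the extreme $99{:}1$ imbalance, combined with the fact that a noiseless $k=2$ curator with this fixed ranking \emph{never} surfaces $x_3$, makes the inequality robust, so the main conceptual content is simply identifying that the majority's valued item is permanently excluded in the noiseless regime.
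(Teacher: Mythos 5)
Your proposal is correct and follows essentially the same route as the paper: the paper's own justification for this observation is exactly the informal version of your argument (the noiseless curator with fixed ranking $(x_1,x_2,x_3)$ and $k=2$ permanently excludes $x_3$, so the $99\%$ who value only $x_3$ get nothing, while any positive noise surfaces $x_3$ with positive probability). You merely make the example quantitative — instantiating the two-type valuations and bounding $q$ via \Cref{lem:MallowsFirstk} and the conditional pick probability via \Cref{lem:LowerBoundForComparison} — which the paper leaves implicit, and all of these steps check out.
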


\subsection{Mixed Integer Linear Programming Formulation}
\label{app:mip_formulation}
Finally, we adapt the MIP of \cite{desir2016assortment} and formulate the problem of finding the welfare-maximizing strategy as an MIP.
\cite{desir2016assortment} studies the \emph{assortment optimization} problem under the Mallows model, where each item is associated with a nonnegative value $r_i$ and the goal is to pick a subset of items $\mS$ to maximize the weighted sum of choice probabilities.
Formally, the problem can be expressed as follows:
$$
\max_{\mS \subseteq [m]} \left(\sum_{x_i \in \mS} r_i \cdot \myP[x_i \succ \mS]\right)\,.
$$
\cite{desir2016assortment} shows the choice probability of $x_i$ being the first among all items in $\mS$ can be calculated by a dynamic programming algorithm.
However, it is worth noting that the algorithm relies on prior knowledge that some item is known to be included in $\mS$, which is further used as a ``guess'' of the MIP formulation.
We next provide a dynamic programming of the choice probability with marginal change to the original algorithm of \cite{desir2016assortment} without any prior knowledge.

Let $W(i, s, t)$ be the probability of item $x_i$ being chosen at position $s$ after $t$ steps of repeatedly insertion, where $i\in [m]$, $s\in [t]$, and $t\in [m]$.
The following algorithm assumes the ground-truth ranking $\pi^* = (x_1, \ldots, x_m)$ without loss of generality.
We use the same notations for constants $p_{t+1, s}$, $\gamma_{t+1, s}$ as in \cite{desir2016assortment}.
Probability $p_{t+1, s}$ is the probability of inserting item $x_{t+1}$ at position $s$ into a partial permutation of $x_1, \ldots, x_t$.
Also, $\gamma_{t,s} = \sum_{\ell=1}^s p_{t,\ell}$.
Both sets of constants have closed-form expressions once the ground-truth ranking and the accuracy parameter are given.
\begin{algorithm}[t]
\caption{Dynamic programming for choice probability without prior knowledge}
\SetKwProg{Fn}{Function}{:}{}
\Fn{{\tt ChoiceProb}$(\mS, x_i)$}{
\KwIn{a subset $\mS$ of items $[m]$;}
\KwOut{Probability of item $x_i$ being the first among all items in $\mS$;}
\For{$t=1, \ldots, m$} {
    \For{each $1\le i\le k-1$ and $s=1, \ldots, t$} {
        $W(i, s, t)\leftarrow (1 - \gamma_{t, s})\cdot W(i, s, t-1) + \mathbf{1}[x_t \notin \mS]\cdot \gamma_{t, s-1}\cdot W(i, s-1, t-1)$\;
    }
    \For{each $s=1, \ldots, t$} {
        $W(t, s, t) \leftarrow \mathbf{1}[x_t \in \mS]\cdot p_{t, s}\cdot \left(\sum_{i\le t-1}\sum_{\ell = s}^m W(i, \ell, t-1) + \prod_{i=1}^{t-1}\mathbf{1}[x_i\notin \mS]\right)$\;
    }
}
\Return{$\sum_{s=1}^m W(i, s, m)$}\;
}
\label{alg:dpChoice}
\end{algorithm}

\begin{lemma}
Algorithm~\ref{alg:dpChoice} computes the choice probability of item $x_i$ being the first among $\mS$.    
\end{lemma}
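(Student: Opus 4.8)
The plan is to prove correctness by induction on the number of inserted items $t$, using the repeated-insertion characterization of the Mallows model: a sample $\pi \sim \D(\pi^*,\phi)$ can be generated by inserting $x_1, \ldots, x_m$ one at a time, where $x_t$ is placed into position $s$ of the current partial ranking with probability $p_{t,s}$, \emph{independently} of how $x_1, \ldots, x_{t-1}$ are arranged. The invariant I would establish is that $W(i,s,t)$ equals the probability that, after the first $t$ insertions, item $x_i$ occupies position $s$ \emph{and} is the first (leftmost) among the items of $\mS \cap \{x_1, \ldots, x_t\}$ inserted so far. Granting this invariant, the returned quantity $\sum_{s=1}^m W(i,s,m)$ marginalizes over the final position of $x_i$ and hence equals $\myP[x_i \succ_\pi \mS]$, the desired choice probability; note also that $W(i,\cdot,\cdot) \equiv 0$ whenever $x_i \notin \mS$, consistent with $x_i$ never being the first item of $\mS$. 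It is worth flagging that the empty-product term is exactly what lets the recurrence dispense with the ``guessed'' item of \cite{desir2016assortment}: it directly accounts for the event that no $\mS$-item has yet appeared.

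For the base case $t=1$, only $x_1$ is present, necessarily at position $1$, so with $p_{1,1}=1$ and the conventions that the empty sum is $0$ and the empty product is $1$, the update yields $W(1,1,1) = \mathbf{1}[x_1 \in \mS]$, matching the invariant. For the inductive step I would condition on the configuration after $t-1$ insertions (where the invariant holds by hypothesis) and analyze the insertion of $x_t$, whose position is independent of that configuration. For a previously inserted item $x_i$ ($i \le t-1$) to occupy position $s$ and remain the first $\mS$-item, exactly one of two disjoint events must occur: either $x_i$ was already at position $s$ and $x_t$ lands strictly after it (at a position $> s$, probability $1-\gamma_{t,s}$), which keeps $x_i$ in place and leftmost regardless of whether $x_t \in \mS$; or $x_i$ was at position $s-1$ and $x_t$ lands in one of the first $s-1$ positions (probability $\gamma_{t,s-1}$), shifting $x_i$ to position $s$, in which case $x_i$ stays the first $\mS$-item only if $x_t \notin \mS$. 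Summing these two contributions gives precisely the first update rule.

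For the newly inserted item $x_t$ to become the first $\mS$-item at position $s$, it must hold that $x_t \in \mS$ and that $x_t$ is placed at position $s$ (probability $p_{t,s}$) ahead of every previously inserted $\mS$-item. Because inserting at position $s$ shifts any item previously at position $\ell$ to $\ell+1$ exactly when $\ell \ge s$, the previous first $\mS$-item $x_{i'}$ (at position $\ell$) ends up behind $x_t$ precisely when $\ell \ge s$; by the inductive hypothesis this event has probability $\sum_{i' \le t-1}\sum_{\ell=s}^m W(i',\ell,t-1)$. The remaining possibility is that none of $x_1, \ldots, x_{t-1}$ lies in $\mS$, captured by $\prod_{i=1}^{t-1}\mathbf{1}[x_i \notin \mS]$, in which case $x_t$ is automatically leftmost. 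Multiplying by $\mathbf{1}[x_t \in \mS]\,p_{t,s}$ recovers the second update rule and closes the induction.

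The main obstacle is the position-shift bookkeeping: one must verify that the case split (item stays at $s$, shifts from $s-1$ to $s$, or becomes the new first $\mS$-item) is both exhaustive and mutually exclusive, and that every off-by-one in the insertion positions aligns with the cutoffs $\gamma_{t,s}$, $\gamma_{t,s-1}$ and the range $\ell \ge s$. The remaining care is purely structural: invoking the independence of each insertion step from the current arrangement (the defining property of the repeated-insertion model) and checking the vacuous boundary cases handled by the empty-sum and empty-product conventions.
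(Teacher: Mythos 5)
Your proof is correct and follows essentially the same approach as the paper: both arguments rest on the repeated-insertion characterization of the Mallows model and the interpretation of $W(i,s,t)$ as the probability that $x_i$ sits at position $s$ and is leftmost among the $\mS$-items inserted so far, with the empty-product indicator handling the case that no $\mS$-item has yet appeared (which is what removes the need for the ``guess'' of \cite{desir2016assortment}). The only difference is one of completeness: the paper verifies just the new cases and defers the shift bookkeeping for the first update rule to the cited work, whereas you spell out the full induction, including the $(1-\gamma_{t,s})$ versus $\gamma_{t,s-1}$ case split and the $\ell \ge s$ condition, all of which you get right.
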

\begin{proof}
Compared to the original dynamic programming of \cite{desir2016assortment}, we do not require a precondition that an item $x_i$ needs to be included in $\mS$.
Hence, we update the state space starting from step $t=1$ rather than $t=2$.
As a result, if $x_t\in \mS$, at the $t$-th step, there are two possible cases for item $x_t$ being selected at position $s$: 1) none of items $x_1, \ldots, x_{t-1}$ is included in $\mS$; therefore, item $x_t$ is selected at position $s$ with probability $p_{t, s}$; 2) at least one of items $x_1, \ldots, x_{t-1}$ is included in $\mS$; therefore, item $x_t$ is chosen only if all items of $x_1, \ldots, x_{t-1}$ that are included in $\mS$ are ranked after $x_t$ and $x_t$ is inserted at position $s$ (the same argument of \cite{desir2016assortment}).
Otherwise, if $x_t\notin \mS$, the same argument of \cite{desir2016assortment} still applies for updating the probabilities of choosing other items.
\end{proof}

\paragraph{MIP Formulation.}Based on the above dynamic programming, we first give a simplified MIP formulation of the assortment problem in Fig.~\ref{fig:mip_formulation}.
For notational simplicity, we reuse $\sigma$ to denote the ground-truth ranking.
Note that, we no longer force $\sigma$ to be $\{x_1, \ldots, x_m\}$.
Denote by $\sigma(i)$ the index of the $i$-th item.
It is worth noting, the new MIP does not rely on the guess of knowing some item included in $\mS$.
Therefore, it is no longer needed to enumerate all the possible guesses, which further reduces the running time by a factor of $O(m)$.

Based on the MIP, we then formulate the problem of finding the welfare-maximizing algorithm as an MIP as follows.
We replicate the constraint of choice probability $n$ times -- once for each of the $n$ Mallows distributions corresponding to the $n$ types of humans.
Notably, as each type of human has a different ground-truth ranking, $\sigma$ is set accordingly in each case.
In addition, the objective function is changed accordingly by summing the utilities of all humans.

\begin{figure}[h]
\begin{framed}
\textbf{MIP for the Assortment Problem}
\begin{align*}
\max_{\bm{x},\bm{W}} \quad & \sum_{i=1}^m r_i\cdot \sum_{s=1}^m W(i, s, m)  & \\
\text{subject to} \quad & W(i, s, t) = (1 - \gamma_{t, s})\cdot W(i, s, t-1) + y_{i, s, t} && \forall i, s, t\\
& y_{i, s, t} \le \gamma_{t, s-1}\cdot W(i, s-1, t-1) && \forall j, s, t \\
& W(t, s, t) = z_{s, t} && \forall s, t \\
& z_{s, t} \le p_{t, s} \cdot \left(\sum_{i = 1}^{t-1}\sum_{\ell=s}^m W(i, \ell, t-1) + q_{t-1}\right) && \forall s, t\\
& 0 \le q_t \le 1 - x_{\sigma(i)} && \forall i \le t \\ 
& q_t \ge 1 - \sum_{i=1}^{t} x_{\sigma(i)} && \forall t \\
& 0 \le z_{s, t} \le p_{t, s}\cdot x_{\sigma(t)} && \forall s, t\\
& x_i \in \{0,1\} && \forall i\\
& \sum_{i=1}^m x_i \le k
\end{align*}
\begin{itemize}[leftmargin=0.5cm]
    \item $\sigma$: the ground-truth ranking of the Mallows model;
    \item $r_i$: the value of item $x_i$;
    \item $\sigma(i)$: the index of the $i$-th item in the ground-truth ranking $\sigma$;
    \item $x_i$: indicator variable of whether item $x_i$ is included in the assortment $\mS$;
    \item $W(i, s, t)$: the probability of the $i$-th item being chosen at position $s$ after $t$ steps of repeatedly insertion;
    \item $p_{t+1, s}$ is the probability of inserting item the $(t+1)$-th item at position $s$ into a partial permutation of the top $t$ items;
    \item $\gamma_{t,s}$: $\sum_{\ell=1}^sp_{t, \ell}$.
\end{itemize}

\textbf{MIP for the Welfare Maximization Problem}

\begin{align*}
\max_{\bm{x},\bm{W}^{(1)}, \ldots,\bm{W}^{(n)}} \quad & \sum_{i=1}^n p_i\cdot \sum_{j=1}^m v_j\cdot \sum_{s=1}^m W^{(i)}(i, s, m)  & \\
\text{subject to} \quad & \texttt{choiceProbConstraint}(i, \bm{x}, \bm{W}^{(i)}) && \forall i \in [n]
\end{align*}
\begin{itemize}[leftmargin=0.5cm]
    \item $\texttt{choiceProbConstraint}(i, \bm{x}, \bm{W}^{(i)})$ refers to the constraint of choice probability between $\bm{x}$ and $\bm{W}^{(i)}$ under the $i$-th Mallows distribution. 
\end{itemize}

\end{framed}
\caption{MIP formulation of the welfare-maximizing strategy.}
\label{fig:mip_formulation}
\end{figure}

\subsection{Uplift under Special Cases}
\label{app:uplift_and_comp}
\lemAligedTopItems*
\begin{proof}
We first show the lemma holds for a special case: $\phi_a = \phi_h = \phi$.
Unfold the expectation of the two sides of the inequality of uplift.
The original inequality is equivalent to the following inequality
\begin{align}\label{eqn:collaborate_benefit_ineq1}
\sum_{j=1}^m v_j \cdot \myP[x_C = x_j] > \sum_{j=1}^m v_j \cdot \myP[x_H = x_j]\,.
\end{align}
To show the above inequality, we apply Karamata's inequality to prove the following proposition, which essentially states that, for any $i\in [T]$, the probability of the joint system picking an item of $x_1, \ldots, x_i$ is weakly larger than the probability of the human acting alone. 
\begin{proposition}
\label{prop:prob_of_first_k_being_selected}
For any $i\in [T]$, we have $\sum_{j=1}^i \myP[x_C = x_j] > \sum_{j=1}^i \myP[x_H = x_j]$. 
\end{proposition}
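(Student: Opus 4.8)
The plan is to prove the equivalent event-level inequality $\myP[x_C \in B_i] > \myP[x_H \in B_i]$, where $B_i := \{x_1,\dots,x_i\}$ are the human's top $i$ items (recall the relabeling $\hr^* = (x_1,\dots,x_m)$). I would condition on the set $\mS = \ars{k}$ presented by the algorithm and abbreviate $P(\mS) := \myP_{\hr\sim\D_h}[\text{the }\hr\text{-favorite of }\mS \text{ lies in } B_i]$. Then the collaboration side is the average $\myP[x_C \in B_i] = \sum_{\mS}\myP[\ars{k}=\mS]\,P(\mS)$, while the human-alone side is the single term $\myP[x_H \in B_i] = P(M) = Z_i(\phi)/Z_m(\phi)$, using the Mallows location formula in \Cref{tab:mallows_properties} (since acting alone the human picks a $B_i$ item iff $\hr(1)\in B_i$). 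It therefore suffices to show that curation raises this quantity on average, i.e. $\sum_{\mS}\myP[\ars{k}=\mS]\,(P(\mS)-P(M)) > 0$.

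The easy half is the menus with $\mS \supseteq B_i$. For such a menu $M$ differs from $\mS$ only by extra competitors, all lying outside $B_i$, so for each $x_j\in B_i$ the event $\{x_j \succ_{\hr} M\}$ is contained in $\{x_j \succ_{\hr} \mS\}$; summing the disjoint events over $x_j\in B_i$ gives $P(\mS)\ge P(M)$ with no distributional assumption. What makes $B_i$-complete menus the typical case is the alignment: since $\ar^*$ agrees with $\hr^*$ on the top $T\ge i$ positions and $\phi_a=\phi_h$, whenever $x_j\in B_i$ and $x_r$ is a distractor we have $x_j \succ_{\ar^*} x_r$, so by \Cref{lem:mono_presenting_items} the algorithm puts at least as much mass on a menu containing $x_j$ as on the menu that trades $x_j$ for $x_r$.

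The crux is the \emph{deficient} menus, those missing some $x_j\in B_i$ while carrying a distractor $x_r$ ($r>i$), for these can have $P(\mS)<P(M)$. My approach is a pairing argument in the spirit of the proof of \Cref{lem:prob_of_picking_item_i1}: send each deficient $\mS$ to the richer menu $\mS^{+}=(\mS\setminus\{x_r\})\cup\{x_j\}$, and (iterating) ultimately to a $B_i$-complete menu $\widehat{\mS}\supseteq B_i$. Along each single swap two monotonicities must line up. First, $\myP[\ars{k}=\mS^{+}]\ge \myP[\ars{k}=\mS]$ by \Cref{lem:mono_presenting_items}, as just noted. Second, one must show $P(\mS^{+})\ge P(\mS)$ by a swap bijection on the human's rankings using inversion-monotonicity and label-invariance: converting a non-$B_i$ slot of the menu into a $B_i$ slot only raises the chance the $\hr$-favorite falls in $B_i$. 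Combined with the easy half ($P(\widehat{\mS})\ge P(M)$) and a rearrangement (Chebyshev-sum) inequality over each group, the positive contributions of the complete menus dominate the negative contributions of their deficient partners. Strictness finally comes from $1<k<m$: the algorithm genuinely drops at least one item, so with positive probability it removes exactly the distractor the human alone would have mistakenly selected while retaining a $B_i$ item, contributing a strictly positive term.

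The main obstacle I expect is this last aggregation: verifying the completion map is well defined and does not over-count, and that the two monotonicities are genuinely co-directed so the pairwise rearrangement applies. The delicate point is the second monotonicity $P(\mS^{+})\ge P(\mS)$, which does not follow directly from \Cref{lem:CompMono}, since replacing a weak competitor by the strong competitor $x_j$ simultaneously adds a new winning term for $x_j$ and lowers the winning probabilities of the other $B_i$ items already present; establishing that the net change is nonnegative is where the careful swap argument is needed.
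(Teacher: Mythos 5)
Your decomposition of the collaboration side by the menu $\mS=\ars{k}$ is fine, as is the ``easy half'' ($P(\mS)\ge P(M)$ for $\mS\supseteq B_i$), and in fact the step you flag as delicate is easier than you fear: for $\mS^{+}=(\mS\setminus\{x_r\})\cup\{x_j\}$ with $x_j\in B_i$, the event $\{\text{$\hr$-favorite of $\mS$ lies in $B_i$}\}$ is \emph{contained} in $\{\text{$\hr$-favorite of $\mS^{+}$ lies in $B_i$}\}$ (if some $x_t\in B_i$ beats all of $\mS$, the winner of $\mS^{+}$ is the better of $x_t$ and $x_j$, both in $B_i$), so $P(\mS^{+})\ge P(\mS)$ needs no swap bijection at all. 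The genuine gap is the final aggregation. Your argument compares the weighted average $\sum_{\mS}\myP[\ars{k}=\mS]\,P(\mS)$ against the \emph{constant} baseline $P(M)$ using only two qualitative facts: inversion-monotonicity of the menu law (\Cref{lem:mono_presenting_items}) and monotonicity of $P$ along swaps. But both facts hold for \emph{every} accuracy $\phi_a>0$, while the proposition is false when $\phi_a\ne\phi_h$: take $i=1$, $\phi_h$ very large and $\phi_a\to 0^{+}$; then $\myP[x_H=x_1]\to 1$ while $\myP[x_C=x_1]$ is at most the probability that $x_1\in\mS$, which tends to $k/m<1$. So no pairing/rearrangement scheme built from those monotonicities alone can close the argument --- the hypothesis $\phi_a=\phi_h$ must enter \emph{quantitatively}, and your sketch never uses it. (A secondary problem: for $i>k$, which the proposition permits since $T$ need not be bounded by $k$, there are no $B_i$-complete menus, so your completion chains have no anchor and the ``easy half'' is vacuous.)

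The paper closes exactly this gap with a coupling rather than a rearrangement. It rewrites the human-alone side by conditioning on the human's \emph{own} top-$k$ set $\hrs{k}$: by the closed form of \Cref{lem:MallowsFirstk} and the top-$T$ agreement of $\hr^*$ and $\ar^*$ with $\phi_a=\phi_h$, the law of $\hrs{k}$ coincides with the law of $\ars{k}$, so both sides become sums over menus \emph{with identical weights} $\myP[\ars{k}=\mS]$. This replaces your constant baseline $P(M)$ with a menu-dependent one: conditioned on $\hrs{k}=\mS$, the restricted order is again Mallows on $\mS$, and the alone probability of landing in $B_i$ is exactly $Z_s(\phi)/Z_k(\phi)$ with $s=\abs{\mS\cap B_i}$, while \Cref{lem:general_lower_bound_for_comparison} gives $\sum_{j=1}^i\myP[x_j\succ_{\hr}\mS]\ge Z_s(\phi)/Z_k(\phi)$ for the collaboration side. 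The comparison then holds \emph{term by term} for every menu, with no cross-menu cancellation to arrange, and it works for all $i\in[T]$, including $i>k$. If you want to salvage your route, the missing ingredient is precisely this distributional identity between $\hrs{k}$ and $\ars{k}$; without it, the weighted-average inequality you need is simply not true at the level of generality your tools operate at.
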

\begin{proof}
According to the law of conditional probability, we can expand the above probability conditioned on the first $k$ items of human's permutation as follows,
\begin{align*}
\sum_{j=1}^i \myP[x_H = x_j] & =
\sum_{j=1}^i \sum_{\mS: \abs{\mS} = k} \myP[x_H = x_j \mid \hrs{k} = \mS] \cdot \myP[\hrs{k} = \mS]
\end{align*}
By \Cref{lem:MallowsFirstk}, the probability of $\myP[\hrs{k} = \mS]$ only depends on $\phi$ and the sum of indices of items of $\mS$.
As $\hr^*$ an $\ar^*$ have the same set of top $T$ items, we can see $\myP[\hrs{k} = \mS] = \myP[\ars{k} = \mS]$ and the above sum can be rewritten as
\begin{align*}
\sum_{j=1}^i \myP[x_H = x_j]  =
\sum_{S: \abs{S} = k}\myP[\ars{k} = S] \cdot  \sum_{j=1}^i \myP[x_H = x_j \mid \hrs{k} = S]\,.
\end{align*}    
Fix a set $S$ of size $k$.
Notice that, when $x_j \notin S$, $\myP[x_H = x_j \mid \hrs{k} = \mS] = 0$.
Next, we consider the sum of the conditional probabilities for $x_j \in S$.
We first show that, conditioned on $\hrs{k} = \mS$, the partial ordering of the first $k$ items (denoted by $\hr_{|\mS}$) forms a Mallows distribution of items of $\mS$.
Observe that the number of inversions between $S$ and $M\setminus S$ for every permutation $\hr$ such that $\hrs{k} = \mS$ since the two sets of items are isolated.
Hence, among all the permutations such that $\hrs{k} = \mS$, the probability of each of them is proportional to $\exp(-\phi)$ raised to the power of the number of inversions inside $\mS$, which implies that the partial ranking $\pi_{|S}$ forms a Mallows distribution of items of $\mS$.
Denote the $s$ items by $\mS = \{x_{i(1)}, \ldots, x_{i(k)}\}$ with $i(1) < \ldots <  i(k)$.
By the known property of Mallows model~\cite[Lemma 2.5]{awasthi2014learning}, the probability of item $x_{i(j)}$ being the first is equal to $\exp(-\phi\cdot (j-1))/ Z_k(\phi)$.
Summing it up for $j$ from $1$ to $i$, we get the probability of $x_H$ being one of $x_1$ to $x_i$ is equal to $Z_s(\phi)/Z_k(\phi)$, where $s$ is the size of the intersection of $\{x_1, \ldots, x_i\}$ and $\mS$.
Therefore, the probability of the human picking one of the first $i$ items is equal to 
\begin{align*}
\sum_{j=1}^i \myP[x_H = x_j]  =
\sum_{S: \abs{S} = k}\myP[\ars{k} = \mS] \cdot \frac{Z_{s}(\phi)}{Z_k(\phi)},\quad \text{with } s = \abs{S\cap \{x_1, \ldots, x_i\}}.
\end{align*}    
On the other side, by the definition of the pick-and-choose, the joint system picks the human's favorite item among the presented $k$ items of the algorithm.
Hence, we have
\begin{align}
\label{eq:aligned_joint_system}
\sum_{j=1}^i \myP[x_C = x_j] = \sum_{\mS: \abs{\mS} = k}\myP[\ars{k} = \mS] \cdot  \sum_{j=1}^i \myP[x_j \succ_{\hr} \mS]\,.
\end{align}
In \Cref{lem:general_lower_bound_for_comparison}, we show that $\sum_{j=1}^i \myP[x_j \succ_{\hr} S]\ge Z_{s}(\phi) / Z_k(\phi)$.
It immediately implies that $\sum_{j=1}^i \myP[x_C = x_j] \ge \sum_{j=1}^i \myP[x_H = x_j]$ and concludes the proof.
\end{proof}
Using Karamata's inequality and \Cref{prop:prob_of_first_k_being_selected}, we conclude that the theorem holds.
\end{proof}

\subsection{Small Noiseness Helps Uplift}
\label{app:noiseness_help_uplift}

\lemPostiveTemp*
\begin{proof}
Suppose there are three types of humans, and an algorithm is presenting $k=2$ items to the human.
The ground-truth rankings of the three types of humans are respectively $\hr_1^* = (x_1, x_2, \cdots, x_m)$, $\hr_2^* = (x_2, x_3, \cdots, x_m)$, and $\hr_3^* = (x_3, x_1, \cdots, x_m)$.
Consider the top-item recovery setting: $v_1 = 1$ and $v_i = 0$ for $i\ge 2$.
The accuracy parameter of the human $\phi_h$ is assumed to be sufficiently small such that every human is almost equally likely to pick any item by themselves.

If the algorithm sets the response to be noiseless, then the presented two items will be deterministic, which means only two of $\{x_1, x_2, x_3\}$ can be presented.
As a result, at least one of the three types of humans is unable to pick their best item and receives no utility from the collaboration.

Nevertheless, once we add some noise to the response of the algorithm, uplift becomes possible.
Consider setting the ground-truth ranking of the algorithm as $\ar^* = (x_1, x_2,x_3, \ldots, x_m)$ and the accuracy $\phi_a$ such that $\myP[\ari{1} = x_1] = 1/2, \myP[\ari{1} = x_2] = 1/4$, and $\myP[\ari{3} = x_3] =1/8$.  
For every type of human, if her favorite item is included in the presented two items, she has a probability of at least $1/2$ to choose it among the two items.
Therefore, the probability of choosing the best item reaches at least $1/8\cdot 1/2 = 1/16$, which beats the value of acting alone.
\end{proof}

\subsection{\NP-Hardness of Deciding Existence of Uplifting Strategy}
\label{app:np_hard_for_deciding_uplift_existence}
\thmCompGeneral*
\begin{proof}
We reduce from the Vertex Cover problem.
The input of the Vertex Cover problem is a graph instance $G=(V, E)$ along with a number $k$.
The output is \YES if there exists a vertex cover (containing at least one vertex of every edge) of size at most $k$ and \NO otherwise.

We construct a human-AI collaboration model as follows.
By the properties of the Mallows model, for a fixed accuracy $\phi_h$ and ground-truth ranking $\hr^*$, in a Mallows distribution $\hr\sim \D(\hr^*, \phi_h)$, we have 
\begin{align*}
\myP_{\hr\sim \D(\hr^*, \phi_h)}[\hr(1) \in \{\hr^*(1), \hr^*(2)\}] & = \frac{Z_2(\phi_h)}{Z_m(\phi_h)}> \frac{Z_2(\phi_h)}{1/(1-\exp(-\phi))},  \\
\myP_{\hr\sim \D(\hr^*, \phi_h)}[\hr(1) \in \{\hr^*(1), \hr^*(2)\}] &= \frac{Z_2(\phi_h)}{Z_m(\phi_h)} < \frac{Z_2(\phi_h)}{1+ \exp(-\phi_h) +\exp(-\phi_h\cdot 2)}\,.     
\end{align*}
Then set $\epsilon_1$ as a sufficiently small constants such that $1 - \epsilon_1^{1/k}> 2\epsilon_1^{1/(2k)}$ and $\epsilon_1 < 1/4$, and the accuracy parameter $\phi_h$ such that $1- \epsilon_1 < 1 - \exp(-\phi_h\cdot 2)$.
Let $\epsilon_2$ be defined as follows
$$
\epsilon_2  = 1-  \frac{1+\exp(-\phi_h)}{1+ \exp(-\phi_h) +\exp(-\phi_h\cdot 2)} \,.
$$
Further, set $\epsilon_3$ be a sufficiently small constant such that $(1 - \epsilon_3)^n \ge 1 - \epsilon_2$.
Then we fix $\phi_h$ and apply \Cref{lem:set_constant} to find $B$ such that the probability of pairwise comparison is no less than $1-\epsilon_3$ once the index difference is no less than $B$, and also set $B$ larger than $n$.

Construct $n+B$ items in total: $n$ vertex items $x_v$ for any vertex $v\in V$ and $B$ dummy items $y_i$ for $i\in [B]$.
Also, we create one type of human for every edge $e = (u, v)\in E$ with ground-truth ranking 
$$\hr_e^* = (x_u, x_v, y_1, \cdots, y_B, \cdots), \quad \text{for any } (u, v)\in E$$ 
which places items $x_u$ and $x_v$ first, followed immediately by all the dummy items, and finally the remaining vertex items.
Each human only has positive values for the first two items $v_1 = v_2 = 1$ while $v_j = 0$ for $j > 2$.

We next prove the mapping between the two instances.
When the input vertex cover is a \YES instance, there exists a vertex cover $U\subseteq V$ that covers every edge.
Consider the strategy that puts $U$ first in $\ar^*$ and set $\phi_a = +\infty$.
For every human with ground-truth ranking $\hr_e^*$, if both $x_u$ and $x_v$ are included in $U$, then the probability of choosing either $x_u$ or $x_v$ strictly increases.
Otherwise, if only $x_u$ or $x_v$ is included in $U$, as all other vertex items in $U$ are placed to the end of $\hr_e^*$, the probability of pairwise comparison is at least $1-\epsilon_2$.
By the setting of $\epsilon_3$, the probability of ranking $x_u$ or $x_v$ before other items in $U$ is at least $(1-\epsilon_3)^n \ge 1- \epsilon_2$.
Thus, the human still has a higher probability of choosing her favorite item and receives a higher expected utility.
Hence, $(\ar^*, +\infty)$ achieves uplift and the human-AI model is also a \YES instance.

We next consider when the input graph is a \NO instance of the vertex cover problem.
We prove it by contradiction.
Suppose an uplift strategy exists.
We first claim that the algorithm should be ``very random''.
Consider the probability of the algorithm presenting the first $k$ items of its ground-truth ranking.
By \Cref{lem:MallowsFirstk}, the probability of that event is greater than $1/\Pi_{i=1}^{k} Z_m(\phi_a)$.
As the input instance is \NO instance, there must exist an agent whose top two items are not included in the first $k$ items of $\ar^*$.
As a result, one human receives a utility of zero in this case.
Since the utility of the human is at least $1-\epsilon_1$ when acting alone, then 
$$
(1-\exp(-\phi_a))^{k} < \frac{1}{\Pi_{i=1}^{k} 
 Z_m(\phi_a)}
  < \epsilon_1 \implies \exp(-\phi_a) >  1 - \epsilon_1^{1/k}\,.
$$
On the other hand, we prove that, to make sure the algorithm uplifts every human, the algorithm should also be ``very robust'' since it should make sure at least one of the favorite items of every human included in $\mS$ with a high probability.
By \Cref{lem:mono_presenting_items}, the probability of an item ranked in the first $k$ positions reaches maximum when the item is the first one in the ground-truth ranking.
Hence, the maximum probability is $Z_k(\phi_a)/Z_{n+B}(\phi_a)$.
Therefore, 
\begin{align*}
1 - \epsilon_1 & < \myP[\mS \cap \{x_u, x_v\} \neq \emptyset] \le 1 - \myP[\mS \cap \{x_u\} = \emptyset]\cdot  \myP[\mS \cap \{x_v\} = \emptyset] \\
& < 1 - \left(1 - \frac{Z_k(\phi_a)}{Z_{n+B}(\phi_a)}\right)^2 < 1 - \left(1 - \frac{1}{1+\exp(-\phi_a)^k}\right)^2 \tag{as $B > n$}\,.
\end{align*}
Direct calculation implies that
\begin{align*}
1 - \frac{1}{1+\exp(-\phi_a)^k} < \sqrt{\epsilon_1} 
 \implies \exp(-\phi_a)^k < \frac{\sqrt{\epsilon_1}}{1-\sqrt{\epsilon_1}}
\implies \exp(-\phi_a) < \frac{\epsilon_1^{1/(2k)}}{(1-\sqrt{\epsilon_1})^{1/k}} < 2\epsilon_1^{1/(2k)},
\end{align*}
which contradicts the previous inequality $\exp(-\phi_a) > 1 - \epsilon_1^{1/k}$.
Therefore, the human-AI model is also a \NO-instance and it concludes the reduction.

\paragraph{\NP-membership.} Next, we provide a polynomial-time method to validate whether an algorithm with a fixed ground-truth ranking $\ar^*$ and accuracy parameter $\phi_a$ achieves uplift and complementarity when the number of presented items $k$ is constant-bounded.
By the definition of the pick-and-choose, a human picks a particular item $x_i$ only if the algorithm includes it in $\mS$ and the human ranks it before other items of $\mS$.
Notice that the number of possible outcomes of $\mS$ is equal to $n^k$.

For every set of $\mS$ with $k$ items, as shown in \Cref{app:item_loc}, there is a closed-form expression of the probability of $\mS$ being the first $k$ items of $\ar$, which can be computed in $O(m)$ time.
Moreover, Algorithm~\ref{alg:dpChoice} can calculate the probability of an item $x_i$ being first among a shortlist of items $\mS$ in the Mallows model.
Combining them, we can identify the exact utility of every human receiving from the collaboration, which further determines whether uplift is achieved or not.
\end{proof}

\begin{corollary}
The problem of finding the welfare-maximizing strategy remains \NP-hard under the top item recovery setting.
\end{corollary}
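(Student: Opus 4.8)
The plan is to recognize that this corollary is an immediate specialization of \Cref{thm:SWNP}, because the Independent Set reduction constructed in that theorem already lives entirely in the top item recovery regime. First I would invoke the noiselessness half of \Cref{thm:SWNP}: since the welfare-maximizing strategy is always noiseless, optimizing over strategies $(\ar^*, \phi_a)$ collapses to optimizing over the central ordering $\ar^*$ alone. Hence it suffices to argue that computing the welfare-optimal $\ar^*$ stays \NP-hard when only the single top item carries value.

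Next I would revisit the valuations used in that reduction. There the instance is built with $v_1 = 1$ and $v_i = 0$ for every $i \ge 2$, so each human gains utility only by recovering her unique favorite item; this is exactly the top item recovery setting. Consequently the entire reduction transfers without modification: the same graph-to-ranking map, the same choice of $\phi_h$ and $m$ obtained from \Cref{lem:set_constant}, and the same completeness/soundness gap all apply verbatim. In the \YES case the optimal social welfare is at least $(1-\epsilon)^{k-1}\cdot k/n$, whereas in the \NO case two humans whose top items are adjacent are each forced to a selection probability of at most $1/2 + \epsilon$, pushing the welfare strictly below the \YES threshold.

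Because no part of that argument ever relied on a human valuing anything beyond her top item, the reduction certifies \NP-hardness of welfare maximization directly within top item recovery, which is precisely the corollary. There is essentially no obstacle here: the only thing to check is that the valuation profile in the \Cref{thm:SWNP} construction coincides with top item recovery, and this is immediate by inspection. The corollary is therefore best viewed as making explicit a fact already contained in \Cref{thm:SWNP}, stated separately to contrast with the tractable uplift-restricted variant of \Cref{lem:top_recovery_comp}.
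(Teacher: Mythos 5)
Your proposal is correct and matches the paper's reasoning exactly: the paper treats this corollary as immediate from \Cref{thm:SWNP}, whose Independent Set reduction already uses the valuation profile $v_1 = 1$, $v_i = 0$ for $i \ge 2$ (i.e., top item recovery), with the noiselessness claim collapsing the search over $(\ar^*, \phi_a)$ to the central ordering alone. Your observation that no step of the completeness/soundness argument uses values beyond the top item is precisely why the paper states the corollary without a separate proof.
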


\lemTopRecoveryCom*
\begin{proof}
Uplift stems from the algorithm aiding people in eliminating valueless items.
When a human acts alone, she must choose her most valuable item (the ``superstar'') from the entire item set.
As every valuable item remains included in $\mathcal{M}_0$, each human can choose her favorite item among a shorter list, which leads to a higher probability than acting alone.
Meanwhile, to maximize the maximin share, the algorithm should include every item of $\mathcal{M}_0$ in the presented item list $\mathcal{S}$.
Otherwise, there must exist one type of human with no chance to pick her best item, and the minimum utility will be equal to $0$, which is smaller than always presenting $\mathcal{M}_0$.
Furthermore, for every item that is not favored by any human, erasing it from $\mS$ only improves the probability of every human picking her best item, as it causes less misleading.
Thus, the strategy also maximizes the expected utilitarian social welfare among all noiseless strategies.
\end{proof}

\section{Table of Section 6}\label{app:tablesec6}
Expanding on the numerical study presented in \Cref{sec:empirical}, the following table lists the most common sushi preference rankings among the 120 observed rankings. Each row reports a specific ranking, the number of participants (out of 5000) who expressed that preference, and the cumulative fraction of the total population accounted for up to that row. Notably, the top 33 most frequent rankings together represent the preferences of half of the participants.
\begin{table}[htbp]
\centering
\caption{Population distribution and cumulative fractions for sushi rankings.}
\label{tab:tablesec6}
\begin{tabular}{|c|c|c|}
\hline
\textbf{Population} & \textbf{Ranking} & \textbf{Cumulative Fraction of Total Population} \\ 
\hline
126 & [4, 5, 2, 1, 3] & 0.0252 \\
120 & [4, 5, 1, 2, 3] & 0.0492 \\
119 & [4, 5, 2, 3, 1] & 0.0730 \\
119 & [2, 1, 3, 5, 4] & 0.0968 \\
119 & [2, 3, 1, 5, 4] & 0.1206 \\
115 & [1, 2, 3, 5, 4] & 0.1436 \\
94 & [5, 2, 3, 1, 4] & 0.1624 \\
81 & [5, 4, 1, 2, 3] & 0.1786 \\
78 & [1, 4, 5, 2, 3] & 0.1942 \\
77 & [4, 1, 5, 2, 3] & 0.2096 \\
75 & [2, 3, 1, 4, 5] & 0.2246 \\
74 & [4, 2, 5, 3, 1] & 0.2394 \\
73 & [3, 2, 1, 5, 4] & 0.2540 \\
71 & [5, 2, 1, 3, 4] & 0.2682 \\
70 & [5, 4, 2, 1, 3] & 0.2822 \\
67 & [2, 5, 3, 1, 4] & 0.2956 \\
66 & [2, 5, 1, 3, 4] & 0.3088 \\
66 & [1, 2, 3, 4, 5] & 0.3220 \\
65 & [5, 4, 2, 3, 1] & 0.3350 \\
64 & [2, 5, 4, 3, 1] & 0.3478 \\
64 & [4, 5, 1, 3, 2] & 0.3606 \\
64 & [2, 3, 5, 1, 4] & 0.3734 \\
63 & [4, 5, 3, 2, 1] & 0.3860 \\
63 & [5, 2, 3, 4, 1] & 0.3986 \\
62 & [2, 1, 3, 4, 5] & 0.4110 \\
60 & [5, 1, 2, 3, 4] & 0.4230 \\
59 & [4, 1, 5, 3, 2] & 0.4348 \\
56 & [1, 3, 2, 5, 4] & 0.4460 \\
56 & [1, 2, 4, 5, 3] & 0.4572 \\
55 & [4, 2, 5, 1, 3] & 0.4682 \\
54 & [2, 1, 5, 4, 3] & 0.4790 \\
54 & [4, 1, 2, 5, 3] & 0.4898 \\
54 & [3, 1, 2, 5, 4] & 0.5006 \\
\hline
\end{tabular}
\end{table}

\end{document}